\newcommand{\N}{\mathbb{N}}
\newcommand{\R}{\mathbb{R}}
\newcommand{\CH}{\mathcal{CH}^+}
\newcommand{\Hp}{\mathcal{H}^+}
\newcommand{\rred}{r_\mathrm{red}}
\newcommand{\rblue}{r_\mathrm{blue}}
\newcommand{\T}{\mathbb{T}}
\newcommand{\M}{\overline{\mathcal M}}
\newcommand{\vol}{\mathrm{vol}}
\newcommand{\V}{{\mathrm{V}}}
\newcommand{\volsp}{\mathrm{vol}_{f^+}}
\newcommand{\volsn}{\mathrm{vol}_{f^-}}
\newcommand{\volg}{\mathrm{vol}_{f_{\gamma_\sigma}}}
\newcommand{\volr}{\mathrm{vol}_{-r}}
\newcommand{\Co}{\mathscr{C}}
\newcommand{\Si}{\mathscr{S}}
\newcommand{\f}{\frac}
\newcommand{\rd}{\partial}
\begin{document}

\numberwithin{equation}{section}
\newtheorem{theorem}[equation]{Theorem}
\newtheorem{remark}[equation]{Remark}
\newtheorem{assumption}[equation]{Assumption}
\newtheorem{claim}[equation]{Claim}
\newtheorem{lemma}[equation]{Lemma}
\newtheorem{definition}[equation]{Definition}
\newtheorem{corollary}[equation]{Corollary}
\newtheorem{proposition}[equation]{Proposition}
\newtheorem*{theorem*}{Theorem}
\newtheorem{conjecture}[equation]{Conjecture}

\title{Instability results for the wave equation in the interior of Kerr black holes}

\author[1]{Jonathan Luk\thanks{jluk@dpmms.cam.ac.uk}}
\author[2]{Jan Sbierski\thanks{jjs48@cam.ac.uk}}
\affil[1,2]{Department for Pure Mathematics and Mathematical Statistics, University of Cambridge,
Wilberforce Road,
Cambridge,
CB3 0WA,
United Kingdom}

\date{\today}

\maketitle

\begin{abstract}
We prove that a large class of smooth solutions $\psi$ to the linear wave equation $\Box_g\psi=0$ on subextremal rotating Kerr spacetimes which are regular and decaying along the event horizon become singular at the Cauchy horizon. More precisely, we show that assuming appropriate upper and lower bounds on the energy along the event horizon, the solution has infinite (non-degenerate) energy on any spacelike hypersurfaces intersecting the Cauchy horizon transversally. Extrapolating from known results in the Reissner--Nordstr\"om case, the assumed upper and lower bounds required for our theorem are conjectured to hold for solutions arising from generic smooth and compactly supported initial data on a Cauchy hypersurface. This result is motivated by the strong cosmic censorship conjecture in general relativity.
\end{abstract}

\tableofcontents

\section{Introduction}

In this paper, we study the linear instability of the Kerr Cauchy horizon under scalar perturbations. More precisely, we consider the linear wave equation\footnote{Here, $\Box_g$ denotes the standard Laplace--Beltrami operator for the Kerr metric $g$.} on the black hole interior of subextremal Kerr spacetime with non-vanishing angular momentum (i.e., for Kerr parameters $0< |a|< M$ - see \eqref{Kerr.metric.BL} for the formula for the metric)
\begin{equation}\label{linear.wave}
\Box_g\psi = 0
\end{equation}
with (characteristic) initial data posed on the event horizon. We prove that there exists a large class of smooth, polynomially decaying\footnote{with respect to the function $v_+$ (see Section \ref{sec.geometry}).} data such that the solutions have infinite non-degenerate energy on any spacelike hypersurface intersecting the Cauchy horizon transversally. In particular, such solutions do not belong to $W^{1,2}_{loc}$ in any neighbourhood of any point on the Cauchy horizon. In addition to merely constructing singular solutions, our main theorem identifies a sufficient condition only in terms of upper and lower bounds of appropriate energy along the event horizon that guarantees the solution blows up at the Cauchy horizon. We state our main result roughly as follows and refer the readers to Theorem \ref{MainThm} for a precise statement.  
\begin{theorem}[Rough version of main theorem]\label{main.theorem.intro}
Let $\psi$ be a smooth solution to \eqref{linear.wave} in the interior of a subextremal Kerr spacetime with non-vanishing angular momentum. If the energy of $\psi$ along the event horizon obeys some polynomial upper and lower bounds (see i)-iii) {of} Theorem \ref{MainThm} for precise bounds), then the non-degenerate energy\footnote{See Remark \ref{rmk.energy} for further discussions on the non-degenerate energy.} on any spacelike hypersurface intersecting the Cauchy horizon transversally is infinite.
\end{theorem}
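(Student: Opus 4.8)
The plan is to treat the statement as a forward problem: the restriction of $\psi$ to the event horizon $\Hp$ is characteristic data, which I want to evolve across the interior $r_-<r<r_+$ while quantifying the geometric blue-shift at the Cauchy horizon $\CH$. I would first fix double-null coordinates $(u,v)$ adapted to the two horizon generators $K_\pm = \partial_t + \omega_\pm\,\partial_\phi$, with $\omega_\pm = a/(r_\pm^2+a^2)$, so that $\Hp$ is reached as $u\to-\infty$ and $\CH$ as $v\to+\infty$, and record the surface gravities $\kappa_\pm>0$. The regular coordinate $v_+$ on $\Hp$, in which the data decay is posed, satisfies $v_+\sim e^{\kappa_+ v}$, whereas a coordinate $V$ regular across $\CH$ satisfies $-V\sim e^{-\kappa_- v}$. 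Converting the non-degenerate energy flux to the Killing coordinate $v$ therefore introduces the blue-shift weight $e^{\kappa_- v}$, and the theorem reduces to showing that, along outgoing cones terminating on $\CH$, the weighted integral $\int^{\infty} e^{\kappa_- v}\,|\partial_v\psi|^2\,dv$ diverges. Crucially, $\psi$ itself is expected to remain bounded (indeed continuous) up to $\CH$; what must blow up is this transversal-derivative energy, which is exactly why $\psi\notin W^{1,2}_{loc}$ there.

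Next I would decompose $\psi=\sum_m e^{im\phi}\psi_m$ using the Killing field $\partial_\phi$, so that each $\psi_m$ solves a wave equation on the reduced $(t,r,\theta)$ space. This is where the principal Kerr-specific obstacle appears. Because $\omega_+\neq\omega_-$ when $a\neq 0$, the effective frequency of $\psi_m$ relative to the generator $K_-$ of $\CH$ is shifted from its frequency relative to $K_+$ by $m(\omega_+-\omega_-)$; moreover there is no timelike Killing field in the interior, so the $K_\pm$-fluxes have indefinite sign and coercivity cannot be taken for granted. These are precisely the features that make the rotating Kerr interior harder than the spherically symmetric Reissner--Nordstr\"om case, where $\omega_\pm=0$.

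To control the evolution I would invoke the upper-bound hypotheses i)--iii) on the event-horizon energy together with a vector-field energy estimate based on a multiplier interpolating between $K_+$ near $\Hp$ (redshift) and $K_-$ near $\CH$, supplemented by the conserved current arising from the hidden (Carter) symmetry to absorb the angular couplings. The output is a quantitative upper bound for $\psi$ and its tangential derivatives throughout the interior up to $\CH$, depending only on the assumed event-horizon bounds. This upper bound is not used to bound the non-degenerate energy, which diverges, but to dominate every error term — potential terms, angular couplings, and transport remainders — so that they cannot conspire to cancel the leading blue-shifted contribution identified below.

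The heart of the argument is the matching lower bound, and it is the step I expect to be the main obstacle. Starting from the assumed polynomial lower bound for the energy along $\Hp$, I would propagate it into the interior to obtain a quantitative lower bound for $|\partial_v\psi|$ along outgoing cones approaching $\CH$; the hypotheses are calibrated so that, after multiplication by the blue-shift weight $e^{\kappa_- v}$, the resulting integrand is non-integrable, forcing $\int^{\infty} e^{\kappa_- v}\,|\partial_v\psi|^2\,dv=+\infty$ — morally, the polynomial event-horizon decay is too slow to outrun the exponential blue-shift. The delicate point is to rule out cancellation: a lower bound on a flux is not a pointwise lower bound on $\partial_v\psi$, and since $\psi$ stays bounded one cannot argue at the level of $\psi$ itself. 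I would therefore isolate a single low mode — naturally $m=0$, for which the horizon frequency shift vanishes and the behaviour is closest to Reissner--Nordstr\"om — establish that its contribution is genuinely coercive, and absorb all remaining modes into the upper bound of the previous step. Finally, since replacing one spacelike hypersurface transversal to $\CH$ by another changes the relevant integral only by a bounded geometric factor, the non-degenerate energy is infinite on every such hypersurface, as claimed.
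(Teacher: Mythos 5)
Your overall framework---the blue-shift weight $e^{\kappa_- v}$, a redshift/blue-shift multiplier hierarchy, stability upper bounds used to dominate error terms, and a propagated lower bound---matches the paper's strategy in spirit, but the step you yourself single out as the main obstacle is precisely where the proposal breaks down. Your fix for the ``cancellation'' problem is to decompose into azimuthal modes, isolate $m=0$, and absorb the remaining modes into the upper bounds. This cannot work as stated: hypotheses i)--iii) of Theorem \ref{MainThm} concern the full solution, and the lower bound ii) carries no information about which mode supports the flux. It is entirely consistent with the hypotheses that the $m=0$ mode vanishes identically and the lower bound is carried by the $m\neq 0$ modes, in which case isolating $m=0$ yields nothing; and one cannot manufacture a lower bound for a single mode by subtracting upper bounds for the others from a lower bound on the sum. (The paper in fact never decomposes into modes and makes no use of the Carter symmetry; angular regularity is obtained by commuting with the modified rotation fields $\Omega^i_{\Hp/\CH}$ of \eqref{DefCom}, chosen so that their deformation tensors degenerate suitably at the respective horizons.)

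The missing mechanism is the two-stage propagation of the lower bound through the auxiliary hypersurface $\gamma_\sigma=f_{\gamma_\sigma}^{-1}(1)$, $f_{\gamma_\sigma}=v_++v_--\sigma\log v_+$, which does not appear in your proposal. First, the exact conservation law for the Killing field $T_{\CH}=(r_-^2+a^2)\,\partial_t+a\,\partial_\varphi$---this, rather than mode isolation, is how the paper resolves the cancellation issue, since a conservation law transfers the flux hypothesis ii) from $\Hp$ to $\gamma_\sigma$ with no bulk term---leaves only a boundary error on $\Sigma^+_{v_k}$; by Proposition \ref{PropFirstDerILEDDecay} (plus a pigeonhole argument) together with $(-\Delta)^{-(1-\alpha)}\lesssim v_k^{|\kappa_-|\sigma(1-\alpha)}$ on $\{f_{\gamma_\sigma}\leq 1\}$, this error is $\mathcal{O}\big(v_k^{-(q+1)+|\kappa_-|\sigma(1-\alpha)}\big)$, which beats $v_k^{-(q+\delta)}$ upon taking $\alpha$ close to $1$ (here $\delta<1$ is essential). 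Writing $T_{\CH}=Y-\frac{\Delta}{2\rho^2}(B+r_-^2+a^2\Co^2)\,e_4$ with $Y$ past-directed causal then converts this into a lower bound for the $-\Delta e_4$-flux through $\gamma_\sigma$. Second, the multiplier $L=-\frac{\Delta}{\rho^2}e_4$ carries that lower bound from $\gamma_\sigma$ to $\Sigma^-_{u_0}$, the bulk error being $\mathcal{O}\big(v_k^{-|\kappa_-|\sigma\alpha}\big)$ by Proposition \ref{PropImprovedILEDAngular}, because $(-\Delta)^\alpha$ is polynomially small---with a rate proportional to $\sigma$---to the future of $\gamma_\sigma$; choosing $\sigma$ large makes this negligible. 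Without $\gamma_\sigma$ and the $T_{\CH}$-conservation law your proposal has no route from the $L^2$ flux hypothesis on $\Hp$ to the lower bound \eqref{PolyLowBound}, and hence to the divergence of the non-degenerate energy.
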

Our result is motivated by the celebrated strong cosmic censorship conjecture in general relativity. We will not discuss this conjecture in detail, but refer the readers to \cite{Daf14, DafLuk, LukOh15} for further discussions. For the purpose of this paper, it suffices to say that the maximal globally hyperbolic development of the Kerr solution for $0< |a|< M$ has a smooth Cauchy horizon and the strong cosmic censorship conjecture suggests the following instability conjecture of the Kerr Cauchy horizon for small perturbations of Kerr:
\begin{conjecture}\label{instability.conj}
Generic small perturbations of Kerr initial data for the Einstein vacuum equations
\begin{equation}\label{Einstein.vacuum}
Ric(g)_{\mu\nu}=0
\end{equation}
lead to maximal globally hyperbolic developments such that the Cauchy horizons are so-called weak null singularities. In particular, for \underline{any} continuous extensions of the spacetime metric, the spacetime Christoffel symbols are not square-integrable.
\end{conjecture}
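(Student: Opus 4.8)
\emph{A proof of Conjecture \ref{instability.conj} in full generality amounts to resolving the nonlinear formulation of strong cosmic censorship for the Kerr family, so what follows is a research program rather than a self-contained argument; it is organized around the nonlinear analogue of the scalar mechanism established in this paper.} The plan is to decompose the problem into three pieces: an \emph{exterior} problem supplying quantitative decay of the perturbation along the event horizon, an \emph{interior stability} problem propagating that decay into the black hole up to a $C^0$ Cauchy horizon, and a \emph{genericity} problem guaranteeing the non-degeneracy needed to force blow-up. The scalar Theorem \ref{MainThm} of this paper is to be read as the model for the third piece: once matching polynomial upper and lower bounds on the relevant geometric quantities are available along the event horizon, the non-square-integrability at the Cauchy horizon should follow from the same energy-identity mechanism.

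\textbf{Exterior problem and upper bounds.} First I would invoke the nonlinear stability of the subextremal Kerr exterior. In the slowly rotating regime this is now available through the work of Klainerman--Szeftel and Giorgi--Klainerman--Szeftel (building on the linear and Schwarzschild theory of Dafermos--Holzegel--Rodnianski--Taylor), and it yields, for a sufficiently small perturbation of Kerr data, polynomial (Price-law) decay of the full set of geometric quantities --- connection coefficients and curvature components in a double-null gauge adapted to the horizon --- along the event horizon measured in the advanced time $v_+$. These rates play exactly the role of the upper bounds i)--iii) of Theorem \ref{MainThm}; the output is a smallness-plus-decay package that serves as characteristic data for the interior evolution.

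\textbf{Interior stability and the $C^0$ extension.} Taking this data on the event horizon, I would run the interior stability scheme of Dafermos--Luk \cite{DafLuk} for the Kerr Cauchy horizon. Using the double-null gauge together with the red-shift and blue-shift structure of the interior, one establishes that the maximal globally hyperbolic development attaches continuously to a bifurcate $C^0$ Cauchy horizon, while simultaneously obtaining sharp upper bounds for the blow-up rate of the renormalized connection coefficients $\Gamma$ as the horizon is approached. This is the geometric analogue of the energy estimates carried out here for $\psi$, and it produces a spacetime with a $C^0$ Cauchy horizon together with quantitative control of $\Gamma$ up to it.

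\textbf{Genericity, lower bounds, and blow-up --- the main obstacle.} The crux, and the principal open difficulty, is to show that for a \emph{generic} small perturbation the Christoffel symbols fail to be square-integrable in \emph{every} continuous extension. This requires a lower bound at the Cauchy horizon, the nonlinear analogue of the non-decay statements of Luk--Oh \cite{LukOh15} in the Reissner--Nordstr\"om setting, which itself rests on a generic lower bound along the event horizon (a nonlinear Price law from below). Two features make this step genuinely hard. Even in the scalar theory the lower bounds are the delicate part --- which is precisely why they are \emph{assumed} in Theorem \ref{MainThm} rather than proved --- and transferring them to the Einstein vacuum system demands controlling the nonlinear backreaction and mode coupling that could conspire to cancel the leading oscillation. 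Moreover, the conclusion must be gauge-invariant: since it quantifies over all continuous extensions, one cannot privilege the double-null gauge in which the upper bounds are derived. Granting matching upper and lower bounds stated invariantly, the divergence of $\int |\Gamma|^2$ on any spacelike hypersurface meeting the Cauchy horizon follows as in this paper, the persistent lower bound on the flux surviving the degeneration of the null geometry. A secondary but serious limitation is that unconditional nonlinear exterior stability is presently available only for $0<|a|\ll M$, so the program as stated is complete only in the slowly rotating regime.
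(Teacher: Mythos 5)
You have not proved the statement, and neither does the paper: Conjecture \ref{instability.conj} is posed in the introduction precisely as an \emph{open} problem, and everything the paper actually establishes (Theorem \ref{MainThm}) concerns the linear scalar model $\Box_g\psi=0$, conditionally on assumed upper \emph{and lower} energy bounds along the event horizon. Your proposal is candid about being a research program, and its architecture (exterior decay $\to$ interior $C^0$ stability via \cite{DafLuk} $\to$ genericity/lower bounds $\to$ blow-up of $\int|\Gamma|^2$) is consistent with how the paper itself frames the roadmap. But the decisive step is exactly the one you defer: a generic lower bound along the event horizon for the nonlinear Einstein vacuum system, together with its propagation to the Cauchy horizon. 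Note that this gap already exists at the \emph{linear} level in Kerr: the paper points out that the upper bounds i) and iii) of Theorem \ref{MainThm} are known, while the lower bound ii) --- even for the scalar wave equation, even for some (let alone generic) compactly supported Cauchy data --- is itself only Conjecture \ref{Kerr.existence.conj}; the analogue is known only in Reissner--Nordstr\"om \cite{LukOh15}. A program whose keystone is an open conjecture one level down from the target cannot be assessed as a proof, conditional or otherwise.

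Two further points deserve sharper treatment than you give them. First, the conclusion of Conjecture \ref{instability.conj} quantifies over \emph{all} continuous extensions of the metric, so even granting a nonlinear analogue of \eqref{NotL2} in a double-null or principal-frame gauge, one needs a separate argument that no other continuous extension can have square-integrable Christoffel symbols; you flag gauge-invariance but offer no mechanism for it, and nothing in the paper's energy-identity scheme (which works on fixed hypersurfaces $\Sigma^-_{u_0}$ in a fixed differentiable structure) supplies one. Second, the scalar mechanism of Sections \ref{sec.stability}--\ref{sec.instability} leans on structures with no obvious nonlinear counterpart: the conservation law for the Killing field $T_{\mathcal{CH}^+}$ used in Proposition \ref{PropLowBoundGamma} is destroyed by a generic perturbation (the perturbed spacetime has no Killing fields), and the error-absorption to the future of $\gamma_\sigma$ in Section \ref{sec.instability.2} uses linearity in an essential way. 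So ``the divergence of $\int|\Gamma|^2$ follows as in this paper'' is not a step but a second conjecture. In short: the proposal is a reasonable and well-informed survey of how one might attack Conjecture \ref{instability.conj}, but the gap between it and a proof is the entire content of the conjecture.
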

Early formulations of this instability conjecture often suggest an even stronger instability - namely, that a spacelike singularity may form ``prior to the Cauchy horizon'' and that the spacetime does not contain a Cauchy horizon at all. However, recent work \cite{DafLuk} shows that the Kerr Cauchy horizon is in fact $C^0$-stable if one assumes that the exterior region of Kerr is stable is a reasonably strong sense. On the other hand, the estimates proven in \cite{DafLuk} are consistent with the spacetime not having square integrable Christoffel symbols. One can therefore still hope that the weaker formulation of the instability conjecture based on the non-square-integrability of Christoffel symbols may hold.\footnote{This formulation is due to Christodoulou \cite{Chris09}. In particular, if this formulation of the conjecture holds, then the maximal globally hyperbolic developments of generic initial data in a small neighbourhood of Kerr data admit no extensions as weak solutions to the Einstein vacuum equations.} 

Theorem \ref{main.theorem.intro} can be viewed as a first step towards establishing Conjecture \ref{instability.conj}. Instead of considering the full nonlinear Einstein vacuum equations, we only study a much simpler model equation, namely the linear scalar wave equation \eqref{linear.wave}. This can be regarded as a simplified linearization of the Einstein equations in which we ignore the tensorial structure of the system and drop all the lower order terms\footnote{Notice in particular that the linear equation that we study is \underline{not} a true linearization of the Einstein vacuum equations.}. We prove in Theorem \ref{main.theorem.intro} that at least in this much simpler setting, there is indeed an instability mechanism. If one moreover naively compares\footnote{Recall that in a generalized wave coordinate system, \eqref{Einstein.vacuum} becomes a wave equation for the metric $g$. We should remark however that this comparison is best taken only at a heuristic level, as a resolution of Conjecture \ref{instability.conj} will likely not be based on generalized wave coordinates (cf. \cite{DafLuk}). } the metric in \eqref{Einstein.vacuum} with the scalar field in \eqref{linear.wave}, then the infinitude of the non-degenerate energy of $\psi$ corresponds to the blow-up of the $L^{2}_{loc}$ norm of the Christoffel symbols as in Conjecture \ref{instability.conj}.

There is a long tradition in both mathematics and physics in studying the linear scalar wave equation on black hole backgrounds. The existence of solutions to the linear wave equation on Kerr which are regular at the event horizon but singular at the Cauchy horizon have been previously constructed in \cite{McNam78, Sbie13b} (see also \cite{DafShla15}). Therefore, the main novelty of Theorem \ref{main.theorem.intro} is that it gives a sufficient condition for the solution to be singular at the Cauchy horizon only in terms of $L^2$ upper and lower bounds of the solution along the event horizon. Moreover, motivated by the known results in the case of the subextremal Reissner--Nordstr\"om spacetime\footnote{See Section \ref{sec.outline} for the metric of Reissner--Nordstr\"om spacetime and discussions on its geometry.} \footnote{One can view the subextremal Reissner--Nordstr\"om spacetime (with non-vanishing charge) as a ``poor-man's version'' of the subextremal Kerr spacetime (with non-vanishing angular momentum) in the sense that they have similar global geometries, including having smooth Cauchy horizons, while the geometry in the Reissner--Nordstr\"om case is simpler. Indeed, as we will discuss in Section \ref{sec.global}, more is known about solutions to the linear wave equation in Reissner--Nordstr\"om. Moreover, in that case, there exists solutions arising from smooth and compactly supported Cauchy data such that analogues of the assumptions of Theorem \ref{main.theorem.intro} hold (see Section \ref{sec.global}).} with non-vanishing charge, one may {\bf conjecture} that the bounds that are needed in the assumptions of Theorem \ref{main.theorem.intro} hold for some solutions to the linear wave equation arising from smooth and compactly supported initial data on a complete 2-ended Cauchy hypersurface. This conjecture, if true, means that generic smooth and compactly supported initial data on a Cauchy hypersurface lead to solutions that are singular at the Cauchy horizon. We will return to this point in the discussions in Section \ref{sec.global}.

Our proof of Theorem \ref{main.theorem.intro} is based on energy identities and energy estimates in the interior of the Kerr black hole, which are established by appropriate choices of various vector fields. Most importantly, we rely on the conservation law\footnote{Recall Noether's theorem which allows us to obtain a conservation law for the solutions to a wave equation from a symmetry of the underlying spacetime.} associated to a Killing vector field $T_{\mathcal C\mathcal H^+}$ (see \eqref{TCH.def} for the definition) which extends the null generators of the Cauchy horizon. This conservation law lets us propagate an $L^2$ lower bound from the event horizon up to a spacelike hypersurface $\gamma_\sigma$ (see \eqref{gamma.def} for the definition) that approaches the Cauchy horizon ``at timelike infinity''. For an appropriate choice of the hypersurface $\gamma_\sigma$, we can then propagate this lower bound to a hypersurface transversally intersecting the Cauchy horizon using the energy identity associated to the (non-Killing!) vector field $\f{\Delta}{\rho^2}\rd_r+\f{r^2+a^2}{\rho^2}\rd_t+\f{a}{\rho^2}\rd_\varphi$. In order to\footnote{In fact, stability estimates are also needed to bound the ``boundary terms'' in the conservation law associated to $T_{\mathcal C\mathcal H^+}$ (see discussions in Section \ref{sec.outline}).} control the error terms arising from this energy identity, we need to prove sufficiently strong \emph{stability} estimates, which in turn are also obtained via energy estimates. We refer the readers to further discussions on the ideas of the proof in Section \ref{sec.outline} where we sketch the argument in the slightly simpler case of the Reissner--Nordstr\"om spacetime.

In the remainder of the introduction, we will discuss the global Cauchy problem for \eqref{linear.wave} in Section \ref{sec.global}. In particular, we will point out the relevance of Theorem \ref{main.theorem.intro} to the global Cauchy problem. Then, in Section \ref{sec.outline}, we will outline some ideas of the proof in the setting of the Reissner--Nordstr\"om spacetime. Finally, we will end the introduction with an outline of the remainder of the paper.

\subsection{The global Cauchy problem}\label{sec.global}

With an eye towards the instability of Kerr Cauchy horizon conjecture (Conjecture \ref{instability.conj}), one would like to go beyond the study of solutions to the linear wave equation within the interior of the black hole region as in Theorem \ref{main.theorem.intro}, but instead consider the problem of global evolution, where the initial data are prescribed on a $2$-ended complete asymptotically flat Cauchy hypersurface $\Sigma_0$ (see\footnote{Here, the interior of black hole region depicted in Figure \ref{FigInt} on page \pageref{FigInt} should be thought of as the top ``diamond'' region in Figure \ref{FigGlobal}.} Figure \ref{FigGlobal}).

\begin{figure}[h]
  \centering
  \def\svgwidth{6cm}
    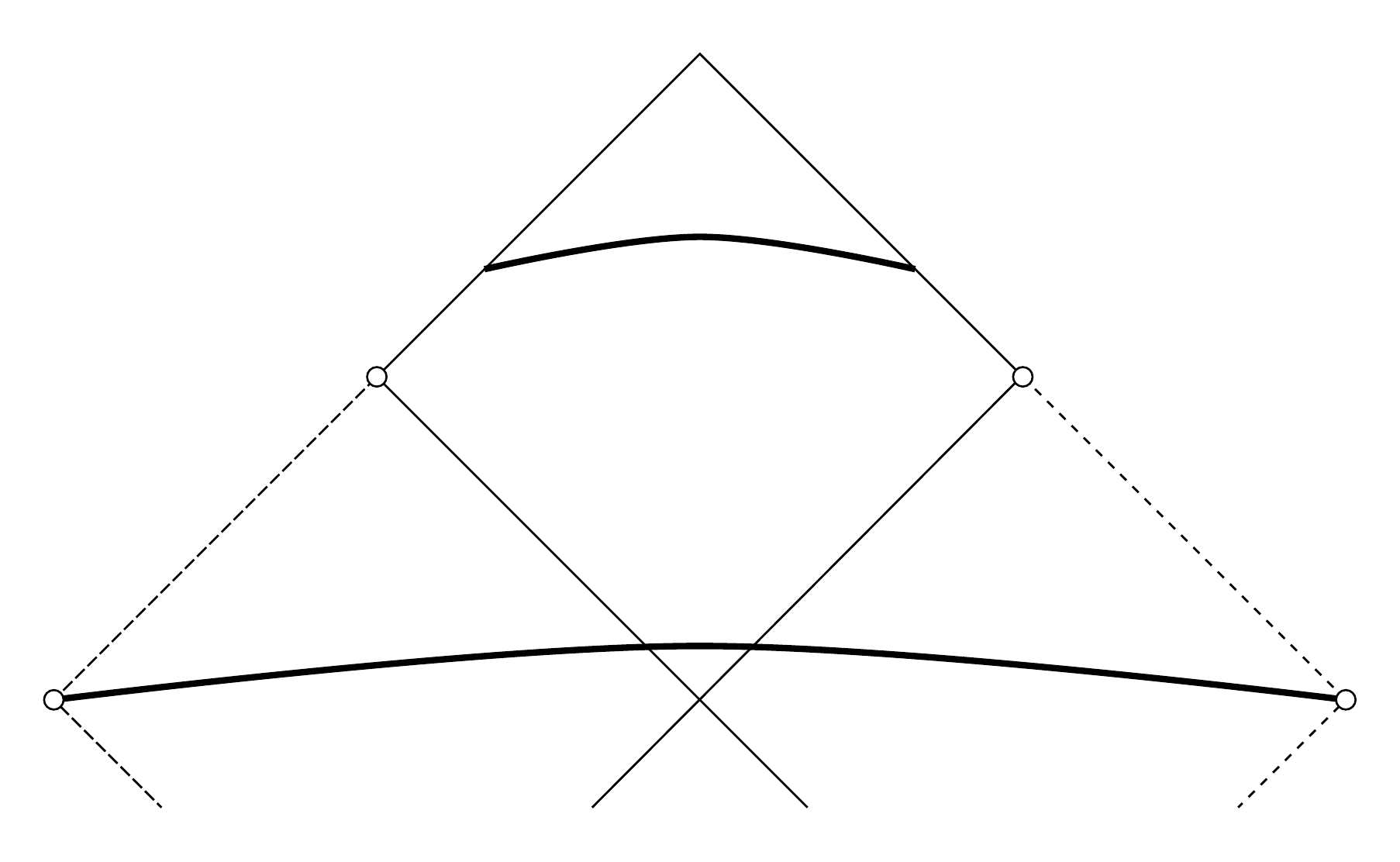
      \caption{The global Cauchy problem} \label{FigGlobal}
\end{figure}

Due to the blue-shift effect associated with the Cauchy horizon, it has been expected since the pioneering work of Simpson--Penrose \cite{SiPen73} that there is a global instability for the Cauchy problem such that generic regular data give rise to solutions which are singular at the Cauchy horizon. This problem has been widely studied in the physics literature, see for instance \cite{ChandHart82, GSNS79, MZS, McNam78.2}.

The first rigorous result on the instability of solutions to the linear wave equation dates back to McNamara \cite{McNam78}, who gave a conditional\footnote{Notice that in the case of Reissner-Nordstr\"om, the condition also was shown to hold in \cite{McNam78}.} proof of the existence of solutions arising from regular polynomially decay{ing} data but are singular at the Cauchy horizon, subject to verifying a condition regarding the non-triviality of the ``scattering map'' of the solutions from past null infinity to the the Cauchy horizon. This condition has recently been proven by Dafermos--Shlapentokh-Rothman \cite{DafShla15}, who also gave an alternative proof of this result as part of their general treatment of blue-shift instabilities on black hole spacetimes. We give a rough version of the theorem here and refer the readers to \cite{DafShla15} for a precise statement:
\begin{theorem}[McNamara \cite{McNam78}, Dafermos--Shlapentokh-Rothman \cite{DafShla15}]\label{McNam.thm}
On any subextremal Kerr spacetime with non-vanishing angular momentum, there exist solutions to the linear wave equation which arise from polynomially decaying data (with an arbitrarily fast polynomial rate) on past null infinity but are singular at the Cauchy horizon.
\end{theorem}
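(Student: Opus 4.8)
The plan is to isolate the single mechanism responsible for the instability, namely the exponential blue-shift at the Cauchy horizon, and to reduce the existence statement to a non-degeneracy (``non-triviality'') property of a scattering map, which is the only genuinely hard input. Everything else is a matter of combining exterior decay, interior well-posedness, and a geometric change of coordinates.

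First I would fix the geometry of the interior. Let $r_+>r_-$ be the two horizon radii, with surface gravities $\kappa_+,\kappa_->0$, and introduce an Eddington--Finkelstein advanced coordinate $v$ regular at $\Hp$ together with the tortoise coordinate $r_*$. A coordinate $V$ that is regular across $\CH$ satisfies $\frac{dv}{dV}\sim e^{\kappa_- v}$ as $v\to+\infty$ (so that $v\to+\infty$ corresponds to approaching $\CH$). Consequently, if $\psi$ extends to $\CH$ with a trace whose transversal derivative $\partial_v\psi|_{\CH}$ decays only polynomially in $v$, and in particular is neither exponentially small nor identically zero, then in the regular coordinate $\partial_V\psi=\frac{dv}{dV}\,\partial_v\psi\sim e^{\kappa_- v}\,\partial_v\psi$, which fails to be square-integrable in $V$ near $\CH$ (since $dV\sim e^{-\kappa_- v}\,dv$, the energy integral is of the shape $\int e^{\kappa_- v}\,v^{-2p}\,dv=\infty$). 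This forces $\psi\notin W^{1,2}_{loc}$ near $\CH$, and so reduces the theorem to producing a solution, arising from fast polynomially decaying data on $\mathcal{I}^-$, whose Cauchy-horizon trace has a non-trivial, at most polynomially decaying $v$-derivative.

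Next I would assemble such a solution through a two-stage scattering construction. In the exterior the forward-in-time scattering theory for the Kerr wave equation is well understood: smooth, fast polynomially decaying data on $\mathcal{I}^-$ (together with trivial data on the past horizon) evolve to a smooth field whose restriction to the event horizon $\Hp$ decays polynomially in $v$ at a comparable rate. In the interior, the region bounded by $\Hp$ and $\CH$ has $r$ confined to the compact interval $[r_-,r_+]$; solving the characteristic initial value problem forward from $\Hp$, finite speed of propagation and energy estimates show that $\psi$ remains smooth up to and including $\CH$ in the regular coordinates, so that $\psi|_{\CH}$ and its transversal derivative are well-defined. The composition defines a linear map $\Si$ sending decaying data on $\mathcal{I}^-$ to $\partial_v\psi|_{\CH}$, and the whole problem becomes: show $\Si$ is not annihilating.

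The hard part is exactly this non-triviality of $\Si$, i.e.\ ruling out the a priori possibility of complete ``transparency,'' in which the transmitted amplitude at $\CH$ vanishes for every admissible datum; this is the condition McNamara left conditional and that Dafermos--Shlapentokh-Rothman established. I would attack it by separation of variables (Carter): for fixed $(\omega,m,\ell)$ the radial ODE carries, near each horizon, oscillatory solutions $e^{\pm i\varpi_\pm r_*}$ with co-rotating frequencies $\varpi_\pm=\omega-m\,\omega_\pm$, $\omega_\pm=\frac{a}{r_\pm^2+a^2}$, and the transmission coefficient from $\mathcal{I}^-$ to $\CH$ is, up to normalisation, a Wronskian of two distinguished solutions. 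Non-triviality amounts to showing this coefficient does not vanish identically in $\omega$, which follows either from its analyticity in $\omega$ (an identically vanishing Wronskian would force a global solution recessive at both ends for all frequencies, which the ODE does not admit) or from a high-frequency geometric-optics computation producing a non-vanishing limiting transmission, reflecting that null rays do propagate from $\mathcal{I}^-$ across the interior to $\CH$. With a non-vanishing coefficient in hand, I would finally superpose these modes against a polynomially decaying frequency envelope to build a wave packet with the required fast polynomial decay on $\mathcal{I}^-$; non-vanishing transmission then guarantees that its $\CH$-trace has a non-trivial $v$-derivative decaying at most polynomially, and the blue-shift reduction of the first step yields infinite $W^{1,2}_{loc}$ energy, i.e.\ the claimed singularity. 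I expect this non-transparency step to be the sole serious obstacle; the blue-shift bookkeeping, the exterior decay, and the interior well-posedness are comparatively routine.
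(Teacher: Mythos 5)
First, a point of orientation: this paper does not prove Theorem \ref{McNam.thm} at all --- it is quoted as a known result of McNamara and Dafermos--Shlapentokh-Rothman, and the authors explicitly refer to \cite{DafShla15} for the precise statement and proof. Your proposal is therefore measured against the cited works rather than against anything in this paper, and at the level of architecture it reconstructs their strategy faithfully: reduce the singularity at $\CH$ to the non-triviality of the scattering map from $\mathcal{I}^-$ to the Cauchy horizon via the exponential blue-shift bookkeeping (this is McNamara's conditional argument, and your computation $\int e^{\kappa_- v}v^{-2p}\,dv=\infty$ is exactly the mechanism the paper itself uses in Remark \ref{Rmk.NotL2}), and then establish the non-triviality by fixed-frequency analysis of the radial ODE (this is precisely what \cite{DafShla15} supplies).

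Two points need repair. First, the assertion that ``finite speed of propagation and energy estimates show that $\psi$ remains smooth up to and including $\CH$ in the regular coordinates'' is false and in fact contradicts the theorem you are proving: what the interior stability estimates (Franzen's, or Section \ref{sec.stability} here) give is continuity of $\psi$ up to $\CH$ and boundedness of the \emph{degenerate} energy fluxes, which is enough to define the transmitted flux at $\CH$ and is all your reduction actually needs, but the claim must be weakened accordingly. Second, and more seriously, the non-transparency step --- which you correctly identify as the crux --- is the entire content of \cite{DafShla15}, and the first of your two proposed mechanisms does not work as stated: in the black hole interior the radial ODE is oscillatory at \emph{both} horizons, with asymptotics $e^{\pm i\varpi_\pm r^*}$ and generically non-zero $\varpi_\pm$, so there is no recessive/dominant dichotomy and ``a global solution recessive at both ends'' is not the object whose non-existence is relevant; moreover the conserved Wronskian current is not sign-definite (the interior analogue of superradiance), so no naive unitarity relation between reflexion and transmission is available. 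The route that does work is your second one --- analyticity of the transmission coefficient in $\omega$ combined with its non-degenerate high-frequency limit --- but executing it uniformly in the separation constants and at the exceptional frequencies $\varpi_\pm=0$ is exactly where the difficulty lies. In short, the skeleton is right, but the load-bearing bone is left as an unproven assertion, and one of the two suggested ways of supplying it would fail.
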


Another construction of solutions which are singular at the Cauchy horizon but have finite initial energy follows from the results of Sbierski. More precisely, the following was proven in \cite{Sbie13b} by geometric optics considerations:
\begin{theorem}[Sbierski \cite{Sbie13b}]\label{Sbie.thm}
Consider a subextremal Kerr spacetime with non-vanishing angular momentum. Let $\Sigma_0$ be a complete $2$-ended asymptotically flat Cauchy hypersurface and $\Sigma_1$ be a spacelike hypersurface in the interior of the black hole intersecting the Cauchy horizon transversally (see Figure \ref{FigGlobal}). There exists a sequence of solutions $\{\psi_i\}_{i=1}^{\infty}$ to the linear wave equation $\Box_g\psi_i=0$ such that the initial energies on $\Sigma_0$ satisfy $E(\psi_i,\Sigma_0)=1$ while the non-degenerate energies on $\Sigma_1$ obey $E(\psi_i,\Sigma_1)\to +\infty$.
\end{theorem}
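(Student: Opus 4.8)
The plan is to use high-frequency geometric optics — concretely, Gaussian beam approximations — to reduce the construction to a statement about null geodesics and the blue-shift effect, so that the energy blow-up becomes a computation of observed frequencies. For each large frequency parameter $\lambda$ and each null geodesic $\gamma$ in the interior, I would build an approximate solution $\psi^\gamma_\lambda$ of $\Box_g\psi=0$ concentrated in a $\lambda^{-1/2}$-tube around $\gamma$, with a Gaussian transverse profile and phase $\varphi$ solving the eikonal equation $g^{\mu\nu}\partial_\mu\varphi\,\partial_\nu\varphi=0$ along $\gamma$ and amplitude solving the associated transport equation. The structural input I would isolate is the asymptotic characterization of the non-degenerate energy of such a beam: to leading order in $\lambda$, for a spacelike hypersurface $\Sigma$ with future unit normal $n$, the stress tensor is $T_{\mu\nu}\approx \lambda^2|a|^2\,\partial_\mu\varphi\,\partial_\nu\varphi$ (the trace term vanishes by the eikonal equation), so that $E(\psi^\gamma_\lambda,\Sigma)$ is, up to a factor common to every hypersurface the beam meets, proportional to the frequency $-g(\dot\gamma,n)$ observed at the crossing point $\gamma\cap\Sigma$.

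Granting this, the key steps are geometric. First I would select a family of null geodesics $\{\gamma_i\}$ each crossing $\Sigma_0$ in an asymptotically flat end and crossing $\Sigma_1$ transversally, chosen so that their intersection points with $\Sigma_1$ tend to $\Sigma_1\cap\CH$ as $i\to\infty$. At $\Sigma_0$ the contraction $-g(\dot\gamma_i,n_0)$ is the frequency registered by an essentially static observer and stays bounded, since it is comparable to the conserved momentum $p_v$ associated to the stationary Killing field. At $\Sigma_1$, working in a null coordinate $(U,V)$ regular across $\CH=\{V=0\}$ and using the exponential relation $V\sim e^{-\kappa_- v}$ with $\kappa_-$ the surface gravity of the Cauchy horizon, one finds $\partial_V\varphi\sim (\kappa_- V)^{-1}p_v\to\infty$, so the observed frequency $-g(\dot\gamma_i,n_1)$ diverges as the crossing point approaches $\CH$. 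This is exactly the blue-shift. Hence $E(\psi^{\gamma_i}_\lambda,\Sigma_1)/E(\psi^{\gamma_i}_\lambda,\Sigma_0)\to\infty$, and after rescaling the amplitude to enforce $E(\psi_i,\Sigma_0)=1$ one obtains $E(\psi_i,\Sigma_1)\to+\infty$.

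Finally I would upgrade the approximate beams to genuine solutions by solving $\Box_g\psi_i=0$ with Cauchy data on $\Sigma_0$ equal to the restriction of $\psi^{\gamma_i}_{\lambda_i}$, and controlling the difference $\psi_i-\psi^{\gamma_i}_{\lambda_i}$ through an energy estimate forced by the inhomogeneity $\Box_g\psi^{\gamma_i}_{\lambda_i}$, which the beam construction makes $O(\lambda^{-N})$-small. The hard part will be making this error control robust all the way up to $\Sigma_1$: the region between $\Sigma_0$ and $\Sigma_1$ is non-compact and reaches the Cauchy horizon, where the very blue-shift I am exploiting also amplifies the error current, so a naive energy estimate could allow the error to grow as fast as the main term. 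I would resolve this by localizing the beam so that its relevant support lies where a coercive energy estimate with controlled growth is available, and then choosing the double limit $(\lambda_i,\gamma_i)$ so that $\lambda_i\to\infty$ fast enough — depending on how closely $\gamma_i$ approaches $\CH$ — that the diverging beam energy on $\Sigma_1$ dominates the amplified approximation error. Once the error on $\Sigma_1$ is shown to be negligible relative to $E(\psi^{\gamma_i}_{\lambda_i},\Sigma_1)$, the claimed sequence with $E(\psi_i,\Sigma_0)=1$ and $E(\psi_i,\Sigma_1)\to+\infty$ follows.
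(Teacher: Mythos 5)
Your proposal is correct and takes essentially the same route as the paper: Theorem \ref{Sbie.thm} is quoted from \cite{Sbie13b}, whose proof is exactly the Gaussian-beam/geometric-optics argument you outline --- characterizing the beam's energy on a hypersurface via the observed frequency $-g(\dot\gamma,n)$ at the crossing point, exploiting the blue-shift at $\CH$ along a family of null geodesics whose intersections with $\Sigma_1$ approach $\Sigma_1\cap\CH$, and controlling the approximation error by a diagonal choice of $\lambda_i$ adapted to each geodesic (for each fixed $\gamma_i$ the relevant region is compact, so the energy-estimate constant is finite). The difficulty you flag about error amplification near the Cauchy horizon and its resolution are likewise the ones addressed in \cite{Sbie13b}.
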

Given the above result, an application of the closed graph theorem\footnote{{More precisely, we use the following consequence of the closed graph theorem: Let $X$, $Y$, $Z$ be Banach spaces and $T:X\to Y$, $Q:Y\to Z$ be linear maps. If $Q$ is bounded and injective and $Q\circ T$ is bounded, then $T$ is bounded. In our setting, let $X$ be the energy space on $\Sigma_0$, $Y$ and $Z$ be the non-degenerate and \emph{degenerate} energy spaces on $\Sigma_1$ respectively. \emph{If for the sake of argument that all solutions initially in the energy class on $\Sigma_0$ belong to the non-degnerate energy class on $\Sigma_1$}, then we can define $T$ to be the operator ``solving the wave equation'' and $Q$ to be the inclusion map and derive a contradiction with Theorem \ref{Sbie.thm} using the above functional analytic statement. The key remaining analytic ingredient is the boundedness of $Q\circ T$, i.e., the boundedness of the \emph{degenerate} energy, which follows from \cite{DafRodShla14} together with considerations in Section \ref{sec.stability}.}} implies the existence of solutions which initially belong to the energy class but then fail to be in the non-degenerate energy class near the Cauchy horizon.

In the constructions in Theorems \ref{McNam.thm} and \ref{Sbie.thm} above, the solutions do not have compact support on the Cauchy hypersurface $\Sigma_0$.  As pointed out in \cite{Sbie13b}, Theorem \ref{Sbie.thm} also holds in the extremal case, i.e., when $0<  |a|=M$. In that case, however, the instability may in fact be milder - it is known that for {\bf axisymmetric} data on extremal Kerr spacetimes which decay sufficiently fast on a Cauchy hypersurface\footnote{By saying this, we have chosen a globally hyperbolic subset of the extremal Kerr spacetime for which there is an (incomplete!) asymptotically flat Cauchy hypersurface which extends into the black hole region. For precise statements, see \cite{Sbie13b, Gajic15b}.}, the corresponding solutions to the linear wave equation have finite non-degenerate energy in the interior of the black hole \cite{Gajic15b} (see also \cite{Gajic15a, MuReTa13}). In other words, the analogue of Theorem \ref{Sbie.thm} in the extremal case only holds because one considers general Cauchy data in the energy class. On the other hand, in the case of subextremal Kerr spacetime, it is indeed expected that the singularity at the Cauchy horizon is not an artefact of the slow decay of the initial data at spacelike infinity. It is therefore desirable {to} obtain an instability result for smooth and compactly supported Cauchy data. We formulate this as a conjecture:
\begin{conjecture}\label{global.conj}
Generic smooth and compactly supported Cauchy data on $\Sigma_0$ to the linear wave equation \eqref{linear.wave} in subextremal rotating Kerr spacetimes give rise to solutions such that the non-degenerate energy on any spacelike hypersurface intersecting the Cauchy horizon transversally is infinite.
\end{conjecture}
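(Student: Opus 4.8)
The plan is to reduce Conjecture \ref{global.conj} to the interior instability result Theorem \ref{MainThm}, thereby transferring all the difficulty to the exterior evolution. Any solution $\psi$ arising from smooth compactly supported data on $\Sigma_0$ restricts to characteristic data on the event horizon $\mathcal{H}^+$ and then evolves into the black hole interior; hence it suffices to show that for \emph{generic} such data the induced data along $\mathcal{H}^+$ satisfy the polynomial upper and lower energy bounds i)--iii) assumed in Theorem \ref{MainThm}. The conjecture then follows by applying Theorem \ref{MainThm} to each such solution. The substantive content is thus the claim that generic compactly supported exterior data produce the prescribed late-time behaviour of $\psi|_{\mathcal{H}^+}$.

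First I would establish the required \emph{upper} bounds on the energy flux along $\mathcal{H}^+$. These should follow from the decay theory for the wave equation on the subextremal Kerr exterior: the boundedness and polynomially weighted ($r^p$-type) decay estimates of Dafermos--Rodnianski--Shlapentokh-Rothman \cite{DafRodShla14}, propagated up to and restricted to the event horizon, yield a quantitative polynomial upper bound on $\psi|_{\mathcal{H}^+}$ and its tangential derivatives. This step is essentially available in the literature and should reduce to matching the precise weights required by i)--iii).

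The main obstacle is the \emph{lower} bound, which encodes that generic solutions do not decay anomalously fast along $\mathcal{H}^+$; this is a Price's law type statement, and crucially it must come with a sharp lower bound rather than merely an upper bound. On Kerr this is considerably harder than on Reissner--Nordstr\"om (where the analogous bounds underlying the assumptions of Theorem \ref{MainThm} are known, cf. \cite{LukOh15}): the lack of spherical symmetry couples the angular modes, superradiance must be controlled, and one must compute the leading-order late-time asymptotics of $\psi|_{\mathcal{H}^+}$ together with a nonvanishing coefficient. The natural strategy is to derive these precise asymptotics in physical space, extract the leading coefficient as a continuous linear functional of the initial data, and identify the ``bad set'' on which it vanishes. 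Genericity would then be established by showing this bad set is contained in a proper closed subspace (e.g.\ of codimension at least one) of the space of smooth compactly supported data; on its complement the lower bound of iii) holds, and Theorem \ref{MainThm} applies to give infinite non-degenerate energy at the Cauchy horizon.
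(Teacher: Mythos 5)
This statement is labelled as a \emph{conjecture} in the paper, and the paper does not prove it; it only \emph{reduces} it, via Theorem \ref{MainThm}, to Conjecture \ref{Kerr.existence.conj} (the existence of smooth compactly supported Cauchy data whose solutions satisfy the event-horizon energy bounds i)--iii)). Your proposal reproduces exactly this reduction and correctly identifies the missing ingredient, but it does not close it: the entire analytic content of the conjecture is concentrated in the lower bound ii) along $\mathcal{H}^+$, which is a sharp Price's-law-type lower bound on Kerr, and your proposal only describes a ``natural strategy'' for it without carrying out any of the steps. That is precisely the open problem, so what you have written is a restatement of the reduction the authors already perform in Section \ref{sec.global}, not a proof.

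Two further points where your sketch is imprecise relative to what Theorem \ref{MainThm} actually requires. First, it is not enough to have \emph{some} polynomial upper bound and \emph{some} polynomial lower bound: assumptions i) and ii) must hold with exponents $q$ and $q+\delta$ for the \emph{same} $q$ and with $\delta\in[0,1)$, i.e.\ the upper and lower bounds must be near-optimal and match to within less than one power of $v_+$. The known upper bounds (e.g.\ \cite{DafRodShla14, MeTaTo12, Tat13}) give i) and iii) for some $q>0$, but without the matching lower bound at a rate within $\delta<1$ of that same $q$, the hypotheses of Theorem \ref{MainThm} are not verified. Second, your codimension-one ``bad set'' argument for genericity is unnecessary: as the paper notes, by linearity of the wave equation the mere \emph{existence} of one compactly supported datum whose solution is singular at the Cauchy horizon already implies genericity (the set of data producing regular solutions is a proper linear subspace). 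So the correct conclusion is that your proposal reduces Conjecture \ref{global.conj} to Conjecture \ref{Kerr.existence.conj}, exactly as the paper does, and leaves the latter open.
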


While the above conjecture remains an open problem, an analogue in the Reissner--Nordstr\"om spacetime has been recently established:
\begin{theorem}[Luk--Oh \cite{LukOh15}]\label{LO.thm}
Given any {\bf Reissner--Nordstr\"om spacetime} with $0< |e|< M$, generic smooth and compactly supported initial data on a complete $2$-ended asymptotically flat Cauchy hypersurface $\Sigma_0$ (Figure\footnote{Figure \ref{FigGlobal} was of course meant to depict the Kerr spacetime. However, since Kerr spacetime (for $0< |a|< M$) and Reissner--Nordstr\"om spacetime (for $0< |e|< M$) can be described by the same Penrose diagram, we also use it to depict the Reissner--Nordstr\"om spacetime for the purpose of the statement of this theorem.} \ref{FigGlobal}) give rise to solutions to the linear wave equation such that the non-degenerate energy on any spacelike hypersurface intersecting the Cauchy horizon transversally is infinite.
\end{theorem}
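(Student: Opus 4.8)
The plan is to prove Theorem \ref{LO.thm} by combining a purely \emph{interior} blow-up criterion with a \emph{global} analysis of the exterior Cauchy evolution, exploiting the spherical symmetry of Reissner--Nordstr\"om. First I would decompose the solution into spherical harmonics, $\psi=\sum_{\ell,m}\psi_{\ell m}(t,r)\,Y_{\ell m}$; since the metric is spherically symmetric the modes decouple and each $\psi_{\ell m}$ solves a $1{+}1$-dimensional wave equation with a Regge--Wheeler-type potential, which is a substantial simplification over the Kerr case treated in this paper. The proof then splits into two essentially independent ingredients: (i) the Reissner--Nordstr\"om analogue of Theorem \ref{main.theorem.intro}, asserting that appropriate polynomial upper and lower bounds on the energy flux along the event horizon $\Hp$ force the non-degenerate energy on any spacelike hypersurface crossing $\CH$ to be infinite; and (ii) a genericity statement that these bounds are in fact satisfied by the event-horizon restriction of solutions arising from generic smooth, compactly supported data on $\Sigma_0$.

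For ingredient (ii) I would evolve the data from $\Sigma_0$ through the exterior region down to $\Hp$. The \emph{upper} bounds come from the exterior decay theory: the boundedness and integrated local energy decay estimates together with the $r^p$-hierarchy yield Price-law-type polynomial decay for $\psi|_{\Hp}$ in the advanced time $v$, controlling each dyadic energy shell from above, while the boundedness of the degenerate energy on $\Sigma_1$ cited via \cite{DafRodShla14} enters as well. The crucial and more delicate point is the matching \emph{lower} bound. Here I would identify the precise late-time asymptotic of $\psi|_{\Hp}$, whose leading coefficient $c[\psi]$ --- a quantity computable from the initial data and governing the Price-law tail --- is a nontrivial bounded linear functional of the data. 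The set of data with $c[\psi]=0$ is then the kernel of a nontrivial continuous linear functional, hence a closed subspace with empty interior, which is non-generic in the sense of Baire category; so for \emph{generic} data $c[\psi]\neq 0$, giving a sharp lower bound $|\psi|_{\Hp}|\gtrsim v^{-p}$ on a sequence of dyadic time intervals that matches the upper bound up to constants.

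For ingredient (i) I would transplant the mechanism sketched in Section \ref{sec.outline} to Reissner--Nordstr\"om. Using the conservation law associated with the Killing field $T_{\CH}$ extending the generators of the Cauchy horizon, the lower bound along $\Hp$ is propagated to a spacelike hypersurface $\gamma_\sigma$ approaching $\CH$ at timelike infinity; the energy identity for the (non-Killing) Reissner--Nordstr\"om analogue of $\f{\Delta}{\rho^2}\rd_r+\f{r^2+a^2}{\rho^2}\rd_t+\f{a}{\rho^2}\rd_\varphi$ then converts this into a lower bound on the non-degenerate flux across a hypersurface transversally intersecting $\CH$. The decisive quantitative input is the surface-gravity inequality $|\kappa_-|>\kappa_+$, valid throughout the subextremal range $0<|e|<M$ (indeed $\kappa_\pm=\f{r_+-r_-}{2r_\pm^2}$ with $r_-<r_+$), which guarantees that the blue-shift amplification at $\CH$ strictly dominates the polynomial decay rate $v^{-p}$ dictated by $\kappa_+$; consequently the amplified lower bound is non-summable and the non-degenerate energy diverges. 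Stability estimates, again obtained by energy methods, are needed to absorb the error terms generated by the non-Killing multiplier and the boundary terms in the $T_{\CH}$ conservation law.

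The hard part will be the genericity lower bound in ingredient (ii), and in particular arranging it in a form usable by the interior argument. It is not enough to know that $\psi$ fails to decay faster than $v^{-p}$ in some averaged sense: the interior estimate requires a \emph{local-in-time}, integrated lower bound holding on a sequence of dyadic intervals $v\in[2^{n},2^{n+1}]$ with a uniform constant, so I would have to upgrade the pointwise leading tail into a coercive energy-type lower bound while simultaneously verifying that the upper and lower rates agree. Closing this --- controlling the oscillation of $\psi$ about its leading asymptotic and ruling out destructive cancellation on each dyadic shell for generic data --- is where the main technical effort lies; by contrast, the error-term bookkeeping in the interior is comparatively routine once the sharp two-sided bounds on $\Hp$ and the inequality $|\kappa_-|>\kappa_+$ are in hand.
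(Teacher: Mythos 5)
Your two-ingredient architecture --- a conditional interior blow-up criterion plus a genericity statement for the event-horizon lower bound --- is exactly the structure of the proof in \cite{LukOh15}, and it also matches the reduction this paper itself advocates (Theorem \ref{main.theorem.RN} combined with an event-horizon lower bound for generic data). Two points, however, need correction. First, the inequality $|\kappa_-|>\kappa_+$ is \emph{not} the decisive quantitative input, and the Price-law exponent is not ``dictated by $\kappa_+$'': the polynomial decay rate on $\Hp$ is governed by the asymptotics towards null and spatial infinity, and the blue-shift factor $(-\Delta_{RN})^{-1}\sim e^{|\kappa_-|(v_++v_-)}$ beats \emph{any} polynomial regardless of how $|\kappa_-|$ compares with $\kappa_+$. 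What actually makes the interior mechanism close is (a) the freedom to choose $\sigma$ large so that $|\kappa_-|\sigma$ exceeds $q+\delta$, and (b) the requirement $\delta<1$, i.e.\ that the upper and lower bounds on dyadic shells of $\Hp$ differ by strictly less than one power of $v_+$, which is what allows the pigeonholed error terms on $\Sigma^+_{v_k}$ (of size $v_k^{-(q+1)}$ times a loss $v_k^{|\kappa_-|\sigma(1-\alpha)}$) to be absorbed by the main term $v_k^{-(q+\delta)}$. No comparison of the two surface gravities enters anywhere in the argument.

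Second, the genericity step as you set it up has a genuine gap. For smooth \emph{compactly supported} data the obvious candidate for your functional $c[\psi]$ --- the Newman--Penrose constant at null infinity, which governs the leading late-time coefficient for generic polynomially decaying data --- vanishes identically, so ``kernel of a nontrivial continuous linear functional'' is not yet available: you must first exhibit a functional that is not identically zero on $C^\infty_c(\Sigma_0)$ and whose nonvanishing forces a lower bound on the energy flux $\int_{\Hp\cap\{v_+\geq \V\}}(\rd_{v_+}\psi)^2$ with exponent within $\delta<1$ of the known upper bound. In \cite{LukOh15} this is done for the spherical mean using a quantity built from an integral of the radiation field along null infinity (morally the Newman--Penrose constant of a time integral of $\psi$), and the lower bound is extracted by an indirect conservation-law/contradiction argument rather than by first establishing the full pointwise asymptotic $\psi|_{\Hp}\sim c\,v_+^{-3}$; your route through sharp late-time asymptotics is viable but is a substantially harder theorem than what is actually needed. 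You correctly flag the upgrade from a pointwise tail to a coercive dyadic energy lower bound as difficult, but the prior issue --- that the functional you propose is identically zero on the data class in question --- must be resolved before that step can even be posed.
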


This theorem was proved in \cite{LukOh15} via first showing that a lower bound\footnote{Although it was not explicitly used in \cite{LukOh15}, upper bounds analogous to those in Assumptions i) and iii) in Theorem \ref{MainThm} are also known in the Reissner-Nordstr\"om setting \cite{DafRod05, MeTaTo12, Tat13}.} on the energy along the event horizon holds for a generic class of solutions arising from smooth and compactly supported data on $\Sigma_0$. This lower bound is in a similar form as that required for our main theorem (see assumption ii) of Theorem \ref{MainThm}). Based on the known results in the Reissner--Nordstr\"om case, one can therefore hope that in the Kerr case, there exist smooth and compactly supported data on $\Sigma_0$ such that the solutions obey the assumptions of Theorem \ref{main.theorem.intro} along the event horizon. In other words, by Theorem \ref{main.theorem.intro}, Conjecture \ref{global.conj} can be reduced\footnote{Strictly speaking, the combination of Theorem \ref{main.theorem.intro} and Conjecture \ref{Kerr.existence.conj} only proves the \emph{existence} of singular solutions arising from smooth and compactly supported initial data. Nevertheless, by virtue of the linearity of the wave equation, this would immediately implies the \emph{genericity} of such singular solutions.} to the following conjecture:
\begin{conjecture}\label{Kerr.existence.conj}
There exist smooth and compactly supported initial data on a complete $2$-ended asymptotically flat Cauchy hypersurface $\Sigma_0$ (Figure \ref{FigGlobal}) in {\bf Kerr spacetime} (with $0< |a|< M$) such that the solutions to the linear wave equation \eqref{linear.wave} obey the assumptions on the energy along the event horizon in Theorem \ref{main.theorem.intro}.
\end{conjecture}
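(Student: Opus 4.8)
The plan is to reduce Conjecture \ref{Kerr.existence.conj} to two essentially independent analytic statements about the evolution of compactly supported Cauchy data in the \emph{exterior} of the black hole, mirroring the strategy of Luk--Oh in the Reissner--Nordström setting (Theorem \ref{LO.thm}): namely (a) polynomial \emph{upper} bounds on the (possibly commuted) energy flux through the event horizon $\Hp$, corresponding to assumptions i) and iii) of Theorem \ref{MainThm}, and (b) a polynomial \emph{lower} bound on the energy flux through $\Hp$, corresponding to assumption ii). Since all three assumptions are quantitative statements about $\psi|_{\Hp}$ measured with respect to the advanced time $v_+$, and since Conjecture \ref{Kerr.existence.conj} asks only for the \emph{existence} of one suitable datum (genericity then following from the linearity of \eqref{linear.wave}), the entire problem can be transported to the Kerr exterior, where a global decay theory is available.

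First I would establish the upper bounds. Here I would invoke the quantitative decay theory for the scalar wave equation on the full subextremal range $0<|a|<M$: the integrated local energy decay (Morawetz) estimate, the red-shift estimate at $\Hp$, and the $r^p$-hierarchy, as developed for Kerr in \cite{DafRodShla14}. These yield a polynomial decay rate for the energy flux of $\psi$, and of its commuted versions obtained by commuting with the Killing fields $\rd_t$, $\rd_\varphi$ and with the red-shift vector field, through dyadic slabs of $\Hp$. One then checks that the resulting rates match the precise exponents demanded by i) and iii); this is largely bookkeeping on top of known estimates, although some care is needed because $\rd_t$ fails to be timelike in the ergoregion, so the coercive energies must be built from the red-shift field and the commuted quantities rather than from $\rd_t$ alone.

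The heart of the matter, and the main obstacle, is the lower bound ii). This requires a \emph{sharp} late-time asymptotic for $\psi$ along $\Hp$ of the schematic form $\psi|_{\Hp}\sim c\,v_+^{-p}$ (with the analogous statement for the flux), together with the identification of the leading coefficient as a nonzero linear functional of the Cauchy data. I would attempt to realize $c$ as a Newman--Penrose-type conserved (or logarithmically modified) quantity propagated from a neighbourhood of null infinity to $\Hp$, in the spirit of the precise late-time asymptotics (Price's law) known for Schwarzschild and Reissner--Nordström and extended to Kerr in the recent literature on late-time tails. Once the asymptotic is shown with $c=\ell(\psi|_{\Sigma_0})$ for a continuous linear functional $\ell$, the conjecture reduces to proving that $\ell$ does not vanish identically on smooth compactly supported data: any single datum with $\ell(\psi|_{\Sigma_0})\neq 0$ then produces the required lower bound, and by linearity the set of such data is moreover open and dense.

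The principal difficulties, all concentrated in step (b), are genuinely Kerr-specific and are why this remains a conjecture. Unlike Reissner--Nordström, Kerr is not spherically symmetric, so one cannot decouple the evolution into a family of $1{+}1$-dimensional problems indexed by spherical harmonics; the angular operator mixes frequencies through the oblate spheroidal harmonics and the Carter constant, and superradiance obstructs naive monotone energy arguments. Consequently the extraction of the leading tail coefficient $c$, the proof that it does not suffer cancellation across the coupled angular modes, and the verification that $\ell\not\equiv 0$ (equivalently, that the relevant scattering / Newman--Penrose map is nontrivial on compactly supported data) are all substantially harder than in the static case. I expect that resolving this would require combining the stationary resolvent-based late-time expansion machinery with a careful nonvanishing argument for $\ell$, and this is precisely the gap separating Theorem \ref{main.theorem.intro} from a full resolution of Conjecture \ref{global.conj}.
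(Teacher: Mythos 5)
The statement you are asked to prove is labelled a \emph{conjecture} in the paper, and the paper offers no proof of it: it is introduced precisely as the remaining open ingredient to which Conjecture \ref{global.conj} is reduced via Theorem \ref{main.theorem.intro}. Your proposal is consistent with the paper's own commentary --- the authors note that the upper bounds i) and iii) of Theorem \ref{MainThm} are already known for some $q>0$ from the decay theory on subextremal Kerr (e.g.\ \cite{DafRodShla14}), and that ``the main challenge is to obtain the lower bound ii)'' --- but it is a research programme, not a proof. The decisive step, namely producing a single smooth compactly supported datum on $\Sigma_0$ whose energy flux through $\Hp$ admits a polynomial \emph{lower} bound of the form \eqref{LowBound}, is exactly the content of the conjecture, and your proposal leaves it unresolved: you reduce it to the nonvanishing of a putative linear functional $\ell$ extracted from a sharp late-time asymptotic on $\Hp$, but you neither establish the asymptotic on Kerr nor the nontriviality of $\ell$, and you acknowledge as much. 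A reduction of an open problem to a different open problem is not a proof.

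Two further points deserve emphasis. First, the paper stresses that it is not enough to have the upper bound with some exponent $q$ and a lower bound with some unrelated exponent: assumptions i) and ii) must hold with the \emph{same} $q$, up to a loss $\delta\in[0,1)$, so the upper and lower bounds must be \emph{near-optimal simultaneously}. Your step (a), which only ``checks that the resulting rates match,'' glosses over the fact that the sharp decay exponent on $\Hp$ for Kerr would itself have to be identified for the two halves of the argument to be compatible; this is part of the open problem, not bookkeeping. Second, in the Reissner--Nordstr\"om case the analogous lower bound in \cite{LukOh15} is genuinely a theorem, which is why Theorem \ref{LO.thm} holds there; the obstructions you correctly list (loss of spherical symmetry, mode coupling, superradiance) are precisely why no analogue is currently available on Kerr. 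So the verdict is that your proposal identifies the right strategy and the right obstacles, but contains an irreducible gap coinciding with the conjecture itself.
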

Let us finally remark that the upper bounds i) and iii) in the assumptions of Theorem \ref{MainThm} are known to hold for some $q> 0$ \cite{DafRod09b, DafRodShla14, MeTaTo12, Tat13}. Thus the main challenge is to obtain the lower bound ii). Moreover, one needs to show near optimal upper and lower bounds such that i) and ii) in the assumptions of Theorem \ref{MainThm} hold with the same $q> 0$.

\subsection{Outline of the proof by example of the subextremal Reissner--Nordstr\"om black hole interior}\label{sec.outline}

In this section, we discuss the main ideas of the proof of our theorem. In order to describe the key points of the proof without getting into the technical complications arising from the geometry of Kerr, we will restrict our attention to Reissner--Nordstr\"om in this section. We first briefly recall the geometry of subextremal Reissner--Nordstr\"om black hole interior in the case of non-vanishing charge. For $0< |e|< M$, the interior of the Reissner--Nordstr\"om black hole is the manifold $\mathcal M_{RN}=\R \times (r_-,r_+) \times \mathbb{S}^2$ with the following metric:
$$g_{RN}=-(1-\f{2M}{r}+\f{e^2}{r^2})\,dt^2+(1-\f{2M}{r}+\f{e^2}{r^2})^{-1} \,dr^2+r^2 (d\theta^2+\sin^2\theta d\varphi^2),$$
where $r_{\pm}=M\pm\sqrt{M^2-e^2}$ are the two roots of the polynomial $\Delta_{RN}:=r^2-2Mr+e^2$. Define $r^*=r+(M+\frac{2M^2-e^2}{2\sqrt{M^2-e^2}})\log |r-r_+| +(M-\frac{2M^2-e^2}{2\sqrt{M^2-e^2}})\log (r-r_-)$, $v_+=t+r^*$, $v_-=r^*-t$. 
 Notice that $v_+$ and $v_-$ are null variables\footnote{Note that this is in contrast to the definitions of $v_+$ and $v_-$ that we introduce in the Kerr case, which will not be null variables (see Section \ref{sec.geometry})}. The metric takes the form
 \begin{equation*}
 g_{RN} = \frac{\Delta_{RN}}{2 r^2}\,(dv_+ \otimes dv_- + dv_- \otimes dv_+) + r^2 \,(d\theta^2 + \sin^2 \theta \, d\varphi^2)\;.
 \end{equation*}

We then attach the boundaries $\mathcal H^+$ and $\mathcal CH^+$ to $\mathcal M_{RN}$, where $\mathcal H^+$ is the set $\mathbb R\times \{r=r_+\}\times \mathbb S^2$ in the $(v_+,r,\theta,\varphi)$ coordinate system and $\mathcal C\mathcal H^+$ is the set $\mathbb R\times \{r=r_-\}\times \mathbb S^2$ in the $(v_-,r,\theta,\varphi)$ coordinate system (see Figure \ref{FigInt}). 

Below, we will consider various vector fields using the following conventions: $\rd_t$ is to be understood as the coordinate vector field in the $(t,r,\theta,\varphi)$ coordinate system, while $\rd_{v_+}$ and $\rd_{v_-}$ are to be understood as the coordinate vector fields in the $(v_+,v_-,\theta,\varphi)$ coordinate system. Define also the vector fields $(\Omega^1, \Omega^2, \Omega^3):=(-\f{\cos\theta\cos\varphi}{\sin\theta}\rd_{\varphi}-\sin\varphi\rd_{\theta},-\f{\cos\theta\sin\varphi}{\sin\theta}\rd_{\varphi}+\cos\varphi\rd_{\theta},\rd_{\varphi})$ and the notation that $|\slashed\nabla\psi|^2=\f{1}{r^2}\sum_{i=1}^3|\Omega^i\psi|^2$.

Using the above notations, a precise version of the analogue of Theorem \ref{main.theorem.intro} in the setting of Reissner--Nordstr\"om black hole interior is given as follows\footnote{Notice that in Theorem \ref{main.theorem.RN}, a lower bound which implies the infinitude of the non-degenerate energy is proven on a null hypersurface. This is in contrast to a lower bound on a spacelike hypersurface in Theorem \ref{MainThm}. This is simply carried out for expositional convenience and it is easy to pass from lower bounds on null hypersurfaces to lower bounds on spacelike hypersurfaces.}:
\begin{theorem}\label{main.theorem.RN}
Let $\psi:\mathcal M_{RN}\cup \mathcal H^+ \to \R$ be a smooth solution to the wave equation $\Box_{g_{RN}}\psi=0$. Assume that there exists some $q> 0$ and $\delta\in[0,1)$ such that
\begin{enumerate}[i)]
\item the following upper bound holds\footnote{It is understood here, and below, that the implicit constants are independent of $\V$.} on $\mathcal H^+$ for all $\V\geq 1$,
$$\int\limits_{\mathcal H^+\cap\{v_+\geq \V\}}  \left((\rd_{v_+}\psi)^2+|\slashed\nabla\psi|^2\right)\,r^2\sin^2\theta\, dv_+\,d\theta\,d\varphi\lesssim \V^{-q};$$
\item the following lower bound holds on $\mathcal H^+$ for all $\V\geq 1$,
$$\int\limits_{\mathcal H^+\cap\{v_+\geq \V\}} (\rd_{v_+}\psi)^2 r^2\sin^2\theta\, dv_+\,d\theta\,d\varphi \gtrsim \V^{-(q+\delta)};$$
\item the following upper bound holds for the higher derivative of $\psi$ on $\mathcal H^+$,
$$\sum_{i=1}^3\int\limits_{\mathcal H^+\cap\{v_+\geq 1\}}  \left((\rd_{v_+}\Omega^i\psi)^2+|\slashed\nabla\Omega^i\psi|^2\right)\,r^2\sin^2\theta\, dv_+\,d\theta\,d\varphi\lesssim 1.$$
\end{enumerate}
Then for every $u_0\in \mathbb R$, there exists a sequence $v_k\in \mathbb R$ with $v_k\to \infty$ as $k\to \infty$ such that\footnote{{Using the coordinates $(v_-, r, \theta, \varphi)$, which are regular at $\CH$, it is seen that $\frac{1}{(-\Delta_{RN})}\partial_{v_+}$ is regular at $\CH$. Moreover, for constant $v_-$, $(-\Delta_{RN})$ decays exponentially in $v_+$.
Hence, the boundedness of the non-degenerate energy on $\{v_-=u_0\}$ would imply that the left hand side of \eqref{PolBound} decays at least exponentially in $v_k$. See the discussions in the Kerr case in \eqref{NotL2} and Remarks \ref{Rmk.NotL2} and \ref{rmk.energy} for more details. Hence, \eqref{PolBound} implies in particular the blow-up of the non-degenerate energy on $\{v_- = u_0\}$.}}
\begin{equation}
\label{PolBound}
\int\limits_{\{v_-=u_0\}\cap\{v_+\geq v_k\}} (\rd_{v_+}\psi)^2 r^2\sin^2\theta\, dv_+\,d\theta\,d\varphi \gtrsim (v_k)^{-(q+\delta)}.
\end{equation}
\end{theorem}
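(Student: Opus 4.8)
The plan is to run two energy identities back to back, exploiting that the Killing field $T:=\rd_t=\rd_{v_+}-\rd_{v_-}$ is exactly the vector field extending the null generators of $\CH$ (its negative surface gravity at $r=r_-$ being the source of the blue-shift). First I would record the flux computations for the energy--momentum tensor $\mathbb{T}_{\mu\nu}[\psi]=\rd_\mu\psi\,\rd_\nu\psi-\f12 g_{\mu\nu}(\rd^\alpha\psi\,\rd_\alpha\psi)$. On a null hypersurface $\{v_-=\mathrm{const}\}$ with generator $\rd_{v_+}$ one finds $\mathbb{T}(T,\rd_{v_+})=(\rd_{v_+}\psi)^2+\f{\Delta_{RN}}{4r^2}|\slashed\nabla\psi|^2$, and since $\Delta_{RN}<0$ in the interior and decays exponentially in $v_+$ towards $\CH$, the angular term is a small, $\Delta_{RN}$-suppressed correction to the quantity $(\rd_{v_+}\psi)^2$ appearing in \eqref{PolBound}. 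On $\Hp$ one has $\Delta_{RN}=0$ and $T=\rd_{v_+}$, so the $T$-flux there is exactly $\int_{\Hp}(\rd_{v_+}\psi)^2\,d\mu$, where $d\mu=r^2\sin^2\theta\,dv_+\,d\theta\,d\varphi$; assumption ii) bounds this below by $\gtrsim\V^{-(q+\delta)}$.

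Next I would transport this lower bound. Because $T$ is Killing its current is divergence free, so the divergence theorem in $\{v_+\geq \V\}\cap\{v_-\leq u_0\}$ yields an exact identity, with no bulk term, relating the $T$-flux through $\Hp$, the $T$-flux through the target piece of $\{v_-=u_0\}$, and the ``cap'' fluxes through the ingoing null hypersurfaces $\{v_+=\mathrm{const}\}$. Following the outline I would first close this between $\Hp$ and a spacelike hypersurface $\gamma_\sigma$ approaching $\CH$ at $i^+$, and then descend from $\gamma_\sigma$ to $\{v_-=u_0\}$ using the energy identity for the non-Killing null vector field $V:=\f{\Delta_{RN}}{r^2}\rd_r+\rd_t$, which in the interior equals $2\rd_{v_+}$. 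The point of $V$ is twofold: its flux through both $\Hp$ and $\{v_-=u_0\}$ is again proportional to $\int(\rd_{v_+}\psi)^2$, so it reproduces exactly the quantity in \eqref{PolBound}; and every component of its deformation tensor ${}^{(V)}\pi=\mathcal L_V g$ carries a factor of $\Delta_{RN}$ (since $\rd_{v_+}r=\f{\Delta_{RN}}{2r^2}$), so the bulk error $\int {}^{(V)}\pi^{\mu\nu}\mathbb{T}_{\mu\nu}$ and the $V$-caps $\int_{\{v_+=\mathrm{const}\}}\f{|\Delta_{RN}|}{2r^2}|\slashed\nabla\psi|^2$ are $\Delta_{RN}$-suppressed, hence exponentially small near $\CH$.

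The remaining, and genuinely delicate, ingredient is a suite of \emph{stability} estimates controlling all of these error and cap terms. These I would obtain from assumptions i) and iii) by multiplier estimates: i) propagates a decaying energy into the interior, while iii), together with Sobolev embedding on $\mathbb{S}^2$, upgrades this to the pointwise control of $\rd_{v_+}\psi$ and $|\slashed\nabla\psi|$ needed to bound the $T$-caps $\int_{\{v_+=\mathrm{const}\}}\big(\f{|\Delta_{RN}|}{4r^2}|\slashed\nabla\psi|^2-(\rd_{v_-}\psi)^2\big)$ and to show the cap toward $i^+/\CH$ vanishes in the limit. Combining everything yields
\[
\int_{\{v_-=u_0\}\cap\{v_+\geq \V\}}(\rd_{v_+}\psi)^2\,d\mu\ \geq\ \int_{\Hp\cap\{v_+\geq \V\}}(\rd_{v_+}\psi)^2\,d\mu\ -\ \mathcal E(\V),
\]
with an error $\mathcal E(\V)$ built from $\Delta_{RN}$-suppressed bulk and cap contributions. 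The heart of the argument is to show $\mathcal E(\V)$ is subdominant: the near-optimality hypothesis $\delta\in[0,1)$ (the lower-bound rate $q+\delta$ being only slightly worse than the upper-bound rate $q$) is precisely what lets the $\Delta_{RN}$-gain in $\mathcal E(\V)$ beat the main term $\gtrsim\V^{-(q+\delta)}$ supplied by ii). Once $\mathcal E(\V)=o(\V^{-(q+\delta)})$ is established, extracting a sequence $v_k\to\infty$ (rather than all large $v_+$) is a soft pigeonhole step, choosing $v_k$ where the error integral is favourably small, or equivalently arguing by contradiction that if \eqref{PolBound} failed for all large $v_+$ then ii) would be violated on $\Hp$. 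I expect the control of $\mathcal E(\V)$---in particular bounding $\rd_{v_-}\psi$ on the ingoing caps via the stability estimates and tracking the exact $\Delta_{RN}$-weights---to be the main obstacle.
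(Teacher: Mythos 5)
Your overall architecture coincides with the paper's: stability estimates from i) and iii), the $\rd_t$-conservation law from $\Hp$ down to the logarithmically bent hypersurface $\gamma_\sigma$, then the $\rd_{v_+}$-energy identity from $\gamma_\sigma$ to $\{v_-=u_0\}$; your flux computations are also correct. The gaps lie in the two mechanisms you invoke to control the error terms, which is where the actual content of the proof is. First, the bulk term for $V=2\rd_{v_+}$ is \emph{not} pointwise $\Delta_{RN}$-suppressed. Although each covariant component of ${}^{(V)}\pi=\mathcal{L}_Vg$ carries a factor $\rd_{v_+}r=\f{\Delta_{RN}}{2r^2}$, the contraction ${}^{(V)}\pi_{\mu\nu}\T[\psi]^{\mu\nu}$ raises indices with $g^{v_+v_-}=\f{2r^2}{\Delta_{RN}}$, and the factors cancel: a direct computation shows that it contains $O(1)\cdot|\slashed\nabla\psi|^2$ (from the $\pi_{v_+v_-}$ component, with coefficient proportional to $\rd_r\big(\f{\Delta_{RN}}{2r^2}\big)$, which does not vanish at $r=r_-$) and $O(1)\cdot\rd_{v_+}\psi\,\rd_{v_-}\psi$ (from the angular components and the trace part of $\T$). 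The only gain is the single power of $(-\Delta_{RN})$ in $\vol$, and it multiplies precisely the non-degenerate quantities $(\rd_{v_-}\psi)^2$ and $|\slashed\nabla\psi|^2$ that are expected to blow up at $\CH$. Moreover, on the integration region $\{f_{\gamma_\sigma}\geq 1\}$ one only has $-\Delta_{RN}\lesssim v_+^{-c\sigma}$ for some $c>0$: the exponential smallness you quote holds at fixed $v_-$, not uniformly down to $\gamma_\sigma$. The correct argument therefore needs (a) the weighted integrated bound \eqref{ImprovedILED.RN} with $\alpha<1$, (b) the bound $\sup(-\Delta_{RN})^{\alpha}\lesssim v_k^{-c\sigma\alpha}$ over the region, and (c) the choice of $\sigma$ so large that $c\sigma\alpha>q+\delta$. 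The hypothesis $\delta<1$ plays no role at this stage; you have misattributed it.

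Second, your route to the stability estimates fails as stated: assumption iii) controls only one angular derivative in $L^2$, and $H^1(\mathbb{S}^2)\not\hookrightarrow L^\infty$, so Sobolev embedding does not yield pointwise control of $\rd_{v_+}\psi$ or $|\slashed\nabla\psi|$ --- nor is pointwise control the right tool. What is needed is the degenerate integrated estimate \eqref{ILED.RN}, obtained from a red-shift multiplier near $\Hp$, $e^{\lambda r}(\rd_{v_+}+\rd_{v_-})$ in the intermediate region, and $(1+\eta^{-1}(-\Delta_{RN})^{1-\alpha})(\rd_{v_+}+\rd_{v_-})$ near $\CH$ (the restriction $\alpha<1$ is exactly what keeps the weight integrable), together with commutation by $\Omega^i$ and a Hardy inequality to place the $(-\Delta_{RN})^{-\alpha}$ weight on $|\slashed\nabla\psi|^2$. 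It is in the \emph{first} instability step that $\delta<1$ actually enters: the cap flux $\int_{\{v_+=v_k\}\cap\{f_{\gamma_\sigma}\leq 1\}}\big((\rd_{v_-}\psi)^2+(-\Delta_{RN})|\slashed\nabla\psi|^2\big)$, estimated by the pigeonhole principle applied to \eqref{ILED.RN} together with $-\Delta_{RN}\gtrsim v_+^{-c\sigma}$ on $\{f_{\gamma_\sigma}\leq 1\}$, decays like $v_k^{-(q+1)+c\sigma(1-\alpha)}$, and beating the main term $v_k^{-(q+\delta)}$ forces $\delta<1$ and $\alpha$ close to $1$ for the given $\sigma$. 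Until these weighted estimates, and the order in which $\sigma$ and $\alpha$ are chosen, are in place, your concluding inequality with an unspecified $\mathcal{E}(\V)=o(\V^{-(q+\delta)})$ is an assertion rather than a proof --- as you yourself concede by flagging the control of $\mathcal{E}(\V)$ as the main obstacle.
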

A slight variant of this theorem was proven in \cite{LukOh15} if $\psi$ is in addition assumed to be {\bf spherically symmetric}. In other words, the methods in this paper give an extension to the corresponding result in \cite{LukOh15} to allow for general $\psi$. In fact, even when restricted to spherical symmetry, the present paper provides an alternative approach to that in \cite{LukOh15}.

Our strategy is to carry out the proof in the following steps:
\begin{enumerate}
\item Stability estimates,
\item Instability estimates up to the hypersurface $\gamma_\sigma$ via the $\rd_t$-conservation law,
\item Instability estimates to the future of $\gamma_\sigma$.
\end{enumerate}

In each of these steps the key is the following energy identity, which holds for any solutions to the linear wave equation in any compact region $D \subseteq \M$ with piecewise smooth boundary $\partial D$, which is oriented with respect to the outward pointing normal
\begin{equation}\label{EE.intro}
\int\limits_{\partial D} \iota_{\T[\psi](Z, \cdot)^\sharp} \,\vol = \int\limits_{D} d\big(  \iota_{\T[\psi](Z, \cdot)^\sharp} \,\vol\big) = \int\limits_D \T[\psi]_{\mu \nu} \nabla^\mu Z^\nu  \, \vol \;,
\end{equation}
where $\T[\psi]$ is the stress-energy-momentum tensor given by
\begin{equation*}
\T[\psi]_{\mu \nu} := \partial_\mu \psi \partial_\nu \psi - \frac{1}{2} g_{\mu \nu} g^{-1}(d\psi, d\psi) .
\end{equation*}
The derivation of the energy identity relies on the fact $\T$ is divergence-free by virtue of the linear wave equation. We refer the readers to Section \ref{sec.energy.estimates} for further discussions.

\subsubsection{Stability estimates}

We now explain each of the steps above. The first step, i.e., the stability estimates, is already carried out in \cite{Fra14}. Since we need a slightly different version, we state it here in Proposition \ref{RN.stab}, with a brief sketch of the proof. The complete proof will be carried out in the Kerr case in Section \ref{sec.stability}, and in the sketch below, we will point out where the analogue of each of the steps in the Kerr case will be carried out in the paper.

Before we state the proposition on stability estimates, we first need to define a hypersurface in the interior of Reissner-Nordstr\"om, which plays a crucial role in the analysis. Define a function
$$f_{\gamma_\sigma} (v_+, v_-) := v_+ + v_- - \sigma \log(v_+) $$
for $v_+$ sufficiently large and $\sigma> 0$ and define a hypersurface $\gamma_{\sigma}$ in the interior of the Reissner--Nordstr\"om black hole by
\begin{equation}\label{gamma.sigma.RN.def}
\gamma_\sigma := f_{\gamma_\sigma}^{-1}(1).
\end{equation}
An analogue of this hypersurface was first introduced by Dafermos \cite{Daf03, Daf05a} in the setting of the Einstein--Maxwell--(real)--scalar--field system in spherical symmetry (see also \cite{Fra14}). This hypersurface has the important property that its future, restricted to the past of $\{v_-=u_0\}$ (for arbitrary $u_0$), has finite spacetime volume. This fact will be the underlying geometric reason that the error terms in Proposition \ref{prop.instability.2.RN} are under control.

The following are the main stability estimates:

\begin{proposition}\label{RN.stab}
Let $\alpha\in [0,1)$, $r_0\in (r_-,r_+)$ and $u_1\in \mathbb R$. Denote by $\vol$ the metric volume form
$$\vol=\frac{1}{2}(-\Delta_{RN}) \sin\theta\,dv_-\,dv_+\,d\theta\,d\varphi.$$
Then there exists $C> 0$ such that for all $V\geq 1$, the following stability estimates hold for $\psi$ satisfying i) and iii) in the assumptions of Theorem \ref{main.theorem.RN}:
\begin{equation}\label{ILED.RN}
\begin{split}
&\underbrace{\int\limits_{\{v_+ \geq \V\}  \cap \{ r\geq r_0\} } \Big( (\rd_{v_+} \psi)^2 + \f{1}{(-\Delta_{RN})^2}(\rd_{v_-} \psi)^2 + |\slashed\nabla \psi|^2 \Big) \, \vol}_{=:IE_1[\psi;V]} \\
&+ \underbrace{\int\limits_{\substack{\{v_+ \geq \V\} \cap\{f_{\gamma_\sigma}\leq 1\}\\ \cap \{ r\leq r_0\} }} \Big( \frac{1}{(-\Delta_{RN})^\alpha} \big[(\rd_{v_+} \psi)^2 + (\rd_{v_-} \psi)^2\big] + |\slashed\nabla \psi|^2 \Big)\,  \vol}_{=:IE_2[\psi;V]}\leq C \V^{-q}\;,
\end{split}
\end{equation}
and
\begin{equation}
\label{ImprovedILED.RN}
\int\limits_{\{ r\leq r_0\} \cap \{v_- \leq u_1\}} \frac{1}{(-\Delta_{RN})^\alpha} \Big( (\rd_{v_+}\psi)^2 + (\rd_{v_-}\psi)^2 +  |\slashed\nabla\psi|^2 \Big)\, \vol \leq C \;.
\end{equation}
\end{proposition}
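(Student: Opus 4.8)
The plan is to derive both estimates from the energy identity \eqref{EE.intro}, applied with a future-directed causal multiplier $Z$ and over a region that I choose differently according to the sign of the ``shift'' near each horizon. Since the rotation fields $\Omega^i$ are Killing and hence commute with $\Box_{g_{RN}}$, the commuted functions $\Omega^i\psi$ again solve the wave equation; I will use this, together with assumption iii), to control the angular derivatives that enter when the transversal derivative is recovered from the equation (whose right-hand side contains the angular Laplacian, i.e.\ two angular derivatives of $\psi$).

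First I would treat the region near the event horizon, $\{r\geq r_0\}$. Here $-\Delta_{RN}$ is bounded below except near $r=r_+$, where the positive surface gravity supplies a favourable (red-shift) sign. Combining the conservation law of the Killing field $\rd_t$ with a red-shift multiplier transversal to $\Hp$, the identity \eqref{EE.intro} over $\{r\geq r_0\}\cap\{v_+\geq \V\}$ bounds the non-degenerate bulk energy $IE_1[\psi;\V]$, together with the flux through $\{r=r_0\}$, by the horizon flux on $\Hp\cap\{v_+\geq \V\}$, which is $\lesssim \V^{-q}$ by assumption i). Note that $\tfrac{1}{(-\Delta_{RN})^2}(\rd_{v_-}\psi)^2$ is, up to bounded factors, the square of the horizon-regular transversal derivative $(\rd_r\psi)^2$ at fixed $v_+$, so it is exactly the quantity made coercive by the red-shift. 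Away from $r=r_+$ the geometry is uniformly regular, and the deformation terms there are absorbed into the energy density by a Gr\"onwall argument as $r$ decreases to $r_0$. This yields the $IE_1$ part of \eqref{ILED.RN} and a decaying flux on $\{r=r_0\}$ to feed into the next region.

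The genuinely delicate step, which I expect to be the main obstacle, is the blue-shift region $\{r\leq r_0\}$, where a naive $\rd_t$-estimate fails because the deformation term has the wrong sign and would amplify the energy. The remedy is the weighted multiplier producing the factors $\tfrac{1}{(-\Delta_{RN})^\alpha}$ with $\alpha\in[0,1)$, combined with the cutoff at $\gamma_\sigma$. For the $IE_2$ part of \eqref{ILED.RN} I apply \eqref{EE.intro} in $\{r\leq r_0\}\cap\{f_{\gamma_\sigma}\leq 1\}\cap\{v_+\geq \V\}$, using the flux on $\{r=r_0\}$ as incoming data. Because $\alpha<1$, the bad bulk term weighted against $\vol$ carries the net factor $(-\Delta_{RN})^{1-\alpha}$, which is integrable, so its integral over the relevant region is bounded uniformly in $\V$; a Gr\"onwall argument then preserves the $\V^{-q}$ smallness of the incoming flux. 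The boundedness estimate \eqref{ImprovedILED.RN} is obtained by the same multiplier but over $\{r\leq r_0\}\cap\{v_-\leq u_1\}$, which now reaches the Cauchy horizon to the future of $\gamma_\sigma$; here the finiteness of the spacetime volume of $J^+(\gamma_\sigma)\cap\{v_-\leq u_1\}$ converts the blue-shift amplification into a uniform bound $C$ (no decay, since this region also contains bounded values of $v_+$).

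Thus the crux is controlling the sign-indefinite bulk term in the blue-shift region: the logarithmic modification $-\sigma\log(v_+)$ in $f_{\gamma_\sigma}$ forces $\gamma_\sigma$ to approach $\mathcal{CH}^+$ slowly enough that the region to its future has finite volume, and the restriction $\alpha<1$ keeps the weighted bulk density integrable against $\vol$. With these two structural facts in hand, the remaining work---commuting with $\Omega^i$ to close the angular and transversal estimates, the Hardy-type control of undifferentiated angular terms, and the matching of fluxes across $\{r=r_0\}$---is routine, and the full argument is carried out in the Kerr setting in Section \ref{sec.stability}.
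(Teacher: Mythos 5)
Your overall architecture (a red-shift estimate in $\{r\geq r_0\}$, then a $(-\Delta_{RN})^{-\alpha}$-weighted multiplier in the blue-shift region, commutation with $\Omega^i$ plus a Hardy inequality to upgrade the angular terms) matches the paper's, but two of the central analytic mechanisms are missing or misidentified. First, when you apply \eqref{EE.intro} in $\{r\geq r_0\}\cap\{v_+\geq\V\}$ you claim the output is controlled by the horizon flux alone. In fact the boundary of this region also contains $\{v_+=\V\}\cap\{r\geq r_0\}$, whose flux (the term $II[\V]$ in \eqref{RN.step1}) enters as data and is not a priori $O(\V^{-q})$. Obtaining its decay is a separate step: the bulk estimate controls $\int_{\V}^{\infty} II[v_+]\,dv_+$, and one needs the pigeonhole/iteration argument of Lemma \ref{DecayLemma} to upgrade this integrated bound to $II[\V]\lesssim\V^{-q}$. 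Without this you do not obtain the $\V^{-q}$ decay of $IE_1$, of the flux on $\{r=r_0\}$, or consequently of $IE_2$.

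Second, your treatment of the blue-shift region does not close as stated. The paper's weight $w_\alpha(r)=\int_{r_-}^{r}(-\Delta)^{-\alpha}\,dr'$ applied to the radial vector field $X=2(dr)^\sharp$ (Section \ref{MultiBlue}) is chosen precisely so that the derivative of the weight produces the manifestly nonnegative term $\frac{1}{(-\Delta)^\alpha}\T[\psi](X,X)$, which for $r$ close to $r_-$ dominates, after Cauchy--Schwarz, the sign-indefinite part of $w_\alpha\T[\psi]_{\mu\nu}\pi(X)^{\mu\nu}$; the bulk term is therefore \emph{coercive} and itself constitutes the integrated estimate \eqref{MultiplierBlue} --- no Gr\"onwall argument is used at this order. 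Your alternative, namely that ``the integral of the bad bulk term is bounded uniformly in $\V$, and Gr\"onwall preserves the $\V^{-q}$ smallness,'' is not a valid inference: an $O(1)$ bound on the bulk error added to a $\V^{-q}$ incoming flux yields only $O(1)$, which could at best give \eqref{ImprovedILED.RN} but not the decay of $IE_2$ in \eqref{ILED.RN}; a genuine Gr\"onwall argument would require a pointwise domination of the bulk density by the flux density with an $r$-integrable coefficient, which you neither state nor verify. Relatedly, the finite spacetime volume of the region to the future of $\gamma_\sigma$ plays no role in proving \eqref{ImprovedILED.RN} --- that estimate follows from \eqref{RN.step2} with $v_1+u_1=2r_0^*$ together with the commuted estimate and the Hardy inequality; the finite-volume property of $\{f_{\gamma_\sigma}\geq 1\}\cap\{v_-\leq u_0\}$ is used only later, in Proposition \ref{prop.instability.2.RN}, to control the bulk error of the non-Killing multiplier $\rd_{v_+}$.
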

\begin{proof}[Sketch of proof]

\underline{Step One (Section \ref{sec.ILED.1st})} The first step is to establish the following integrated energy estimates to the past of the hypersurface $\{r=r_0\}$:
\begin{equation}\label{RN.step1}
\begin{split}
& IE_{1}[\psi;\V]+\int\limits_{\{v_+\geq V\}\cap\{r=r_0\}} \left((\rd_{v_+}\psi)^2+(\rd_{v_-}\psi)^2+|\slashed\nabla\psi|^2\right)\, \vol_r\\
\lesssim &\underbrace{\int\limits_{\mathcal H^+\cap\{v_+\geq \V\}}  \left((\rd_{v_+}\psi)^2+|\slashed\nabla\psi|^2\right)\,r^2\sin^2\theta\, dv_+\,d\theta\,d\varphi}_{=:I[\V]}\\
&+\underbrace{\int\limits_{\{v_+=\V\} \cap\{r\geq r_0\}} \left(\f{1}{(-\Delta_{RN})^2}(\rd_{v_-}\psi)^2+|\slashed\nabla\psi|^2\right)\,(-\Delta_{RN}) r^2\sin^2\theta\, dv_-\,d\theta\,d\varphi}_{=:II[\V]}.
\end{split}
\end{equation}
Here, $\vol_r$ is chosen such that $\vol= dr\wedge \vol_r$.

This estimate can be achieved using the identity \eqref{EE.intro} with a combination of well-chosen vector fields, namely $\f{1}{(-\Delta_{RN})}\rd_{v_-}+(1-{\eta}^{-1}(-\Delta_{RN}))\rd_{v_+}$ very near $r=r_+$ and $e^{\lambda r}(\rd_{v_+}+\rd_{v_-})$ in the remaining region for $\lambda$ and {$\eta^{-1}$} suitably large.

By assumption i) of Theorem \ref{main.theorem.RN}, $I[\V]$ in \eqref{RN.step1} has the desired decay $\V^{-q}$. To deal with the term $II$, we notice that $\int_{\V}^{+\infty} II[v_+]\, dv_+\lesssim IE[\psi;\V]$. Using this and a standard argument based on the pigeonhole principle (see Section \ref{SecPutTogether}) then gives $II[\V] \lesssim \V^{-q}$. In particular, \eqref{RN.step1} now implies firstly the desired bound for $IE_1[\psi;\V]$ in \eqref{ILED.RN} and secondly we also get the bound 
\begin{equation}\label{RN.bound.on.r0}
\int\limits_{\{v_+\geq V\}\cap\{r=r_0\}} \left((\rd_{v_+}\psi)^2+(\rd_{v_-}\psi)^2+|\slashed\nabla\psi|^2\right)\, \vol_r\lesssim \V^{-q}.
\end{equation}

\underline{Step Two (Section \ref{sec.ILED.1st})}  
In this step, our goal is to obtain stability estimates to the future of $\{r=r_0\}$. To this end, we again rely on the identity \eqref{EE.intro}. To the future of $\{r=r_0\}$ we use the vector field $(1+{\eta}^{-1}(-\Delta_{RN})^{1-\alpha})(\rd_{v_-}+\rd_{v_+})$ very near $r=r_-$ and the vector field $e^{\lambda r}(\rd_{v_+}+\rd_{v_-})$ in the remaining region and choose $\lambda$ and $\eta^{-1}$ to be suitably large. As a consequence, for any $u_1$ and $v_1$ such that $v_1+u_1=2r_0^*$ (where $r_0^*$ is the value of $r^*$ when $r=r_0$), we obtain the following estimate:
\begin{equation}\label{RN.step2}
\begin{split}
&\int\limits_{\substack{\{v_+ \geq v_1 \} \cap \{v_-\leq u_1\}\\ \cap \{ r\leq r_0\} }} \Big( \frac{1}{(-\Delta_{RN})^\alpha} \big[(\rd_{v_+} \psi)^2 + (\rd_{v_-} \psi)^2\big] + |\slashed\nabla \psi|^2 \Big)\,  \vol\\
\lesssim &\int\limits_{\{v_+\geq v_1\}\cap\{r=r_0\}} \left((\rd_{v_+}\psi)^2+(\rd_{v_-}\psi)^2+|\slashed\nabla\psi|^2\right)\, \vol_r\lesssim (v_1)^{-q},
\end{split}
\end{equation}
where the final bound follows from \eqref{RN.bound.on.r0}. Starting from the estimate \eqref{RN.step2}, we make two observations. Firstly, due to the choice of $f_{\gamma_\sigma}$, we can choose $v_1 < \V$ with $\V \lesssim v_1$ (where the implicit constant is independent of $\V$) such that the inclusion 
$$\{v_+ \geq \V\} \cap\{f_{\gamma_\sigma}\leq 1\} \cap \{ r\leq r_0\}\subset \{v_+ \geq v_1 \} \cap \{v_-\leq u_1\} \cap \{ r\leq r_0\}$$holds. This observation and \eqref{RN.step2} then imply the bound for $IE_2[\psi;V]$ in \eqref{ILED.RN}, which, together with the estimates in Step One, imply \eqref{ILED.RN}.

Secondly, using \eqref{RN.step2}, we obtain the following for any $u_1\in \mathbb R$:
\begin{equation}
\label{RoughILED.RN}
\int\limits_{\{ r\leq r_0\} \cap \{v_- \leq u_1\}} \Big( \frac{1}{(-\Delta_{RN})^\alpha} \big[(\rd_{v_+} \psi)^2 + (\rd_{v_-} \psi)^2\big] + |\slashed\nabla \psi|^2 \Big)\, \vol \leq C \;.
\end{equation}

Notice that to go from \eqref{ILED.RN} and \eqref{RoughILED.RN} to \eqref{ImprovedILED.RN}, it remains to improve the bounds for $|\slashed\nabla\psi|$ since $\rd_{v_+}\psi$ and $\rd_{v_-}\psi$ have already been shown to obey even stronger estimates. 

\underline{Step Three (Section \ref{ILED2})}
In order to improve the bounds for $|\slashed\nabla\psi|$, we need an auxiliary estimate. Using the spherical symmetry\footnote{Of course, the Kerr spacetime is not spherically symmetric and thus requires a modification of this part of the argument. As it turns out, one can define differential operators $\Omega^i_{\mathcal H^+}$ and $\Omega_{\CH}^i$ such that while they do not commute with $\Box_g$, the commutators are well-behaved near the horizons and can be controlled, see Section \ref{ILED2}.} of Reissner--Nordstr\"om, we can commute the wave equation with $\Omega^i$ so that $\Omega^i\psi$ is also a solution to the linear wave equation. Therefore, we can apply the estimate in \eqref{RoughILED.RN} for $\Omega^i\psi$ together with the assumption iii) in Theorem \ref{main.theorem.RN} to get
$$\sum_{i=1}^3 \int\limits_{\{ r\leq r_0\} \cap \{v_- \leq u_1\}} \Big( \frac{1}{(-\Delta_{RN})^\alpha} \big[(\rd_{v_+} \Omega^i\psi)^2 + (\rd_{v_-} \Omega^i\psi)^2\big] + |\slashed\nabla \Omega^i\psi|^2 \Big)\, \vol \leq C \;.$$

\underline{Step Four (Section \ref{sec.IILED})} The estimates from Step Three controls $\rd_{v_-}\Omega^i\psi$ with the desired weight in $(-\Delta)$. Recalling that $|\slashed\nabla\psi|^2\sim \sum_{i=1}^3(\Omega^i\psi)^2$, the desired bounds from $|\slashed\nabla\psi|$ thus follows from a Hardy inequality.

\end{proof}

\subsubsection{Instability estimates}

In the two steps of the instability estimates we deal with the regions to the past and to the future of $\gamma_\sigma$ respectively (recall the definition of $\gamma_\sigma$ in \eqref{gamma.sigma.RN.def} and see Figure \ref{FigLastStep} for a depiction of the spacetime regions in the Kerr case). In the first step (see Proposition \ref{RN.lower.bound} below and Section \ref{sec.instability.1} for the Kerr case),  we use the conservation law associated to the vector field\footnote{In the Kerr case, we will use the conservation law associated to $T_{\CH}$, see Section \ref{sec.instability.1}.} $\rd_t$ to prove a lower bound of the energy on $\gamma_\sigma$. In the second step (see Proposition \ref{prop.instability.2.RN} below and Section \ref{sec.instability.2} for the Kerr case), we then propagate the lower bound on $\gamma_\sigma$ to the $\{v_-=u_0\}$ hypersurface. In both of these steps, the stability estimates that have been derived play an important role.

\begin{proposition}\label{RN.lower.bound}
There exists a sequence $v_k\in \mathbb R$ with $v_k\to \infty$ such that the following estimate holds on the hypersurface ${\gamma_\sigma}$:
$$\int\limits_{\gamma_\sigma\cap\{v_+\geq v_k\}} \T(\rd_t,(-df_{\gamma_\sigma})^\sharp)\, \vol_{f_{\gamma_\sigma}} \gtrsim (v_k)^{-(q+\delta)},$$
where $\vol_{f_{\gamma_\sigma}}$ is chosen so that $\vol=df_{\gamma_\sigma}\wedge \vol_{f_{\gamma_\sigma}}$.
\end{proposition}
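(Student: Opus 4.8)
The plan is to exploit that $\rd_t$ is a Killing field of $g_{RN}$, so that the bulk term in the energy identity \eqref{EE.intro} with $Z = \rd_t$ vanishes identically and we obtain an \emph{exact} conservation law. I would apply it on the region $D_{\V} := \{v_+ \geq \V\} \cap \{f_{\gamma_\sigma} \leq 1\}$, whose boundary consists of three pieces: the portion $\Hp \cap \{v_+ \geq \V\}$ of the event horizon, the portion $\gamma_\sigma \cap \{v_+ \geq \V\}$, and the null segment $\mathcal N_\V := \{v_+ = \V\} \cap \{f_{\gamma_\sigma} \leq 1\}$ joining the two. Since $\rd_t = \rd_{v_+}$ in the coordinates $(v_+, r, \theta, \varphi)$ that are regular at $\Hp$, and $\rd_{v_+}$ is precisely the null generator of $\Hp$, the flux of the $\rd_t$-current through $\Hp$ reduces to (a positive multiple of) $\int_{\Hp \cap \{v_+ \geq \V\}} (\rd_{v_+}\psi)^2\, r^2 \sin^2\theta\, dv_+\, d\theta\, d\varphi$, which is manifestly non-negative and, by assumption ii) of Theorem \ref{main.theorem.RN}, bounded below by $\gtrsim \V^{-(q+\delta)}$. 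The sign conventions, in particular the choice of $(-df_{\gamma_\sigma})^\sharp$, are exactly such that the conservation law reads, schematically,
\[
\int\limits_{\gamma_\sigma \cap \{v_+\geq \V\}} \T(\rd_t, (-df_{\gamma_\sigma})^\sharp)\, \volg = (\text{positive horizon flux}) + (\text{flux through } \mathcal N_\V),
\]
with the horizon flux $\gtrsim \V^{-(q+\delta)}$. The whole game is then to choose $\V$ so that the flux through $\mathcal N_\V$ is negligible against this main term.

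The error term is the flux of the $\rd_t$-current through $\mathcal N_\V$. A direct computation shows its integrand is, up to sign, $r^2 (\rd_{v_-}\psi)^2 \sin\theta - \tfrac{1}{4}(-\Delta_{RN})|\slashed\nabla\psi|^2\sin\theta$, hence in modulus controlled by $r^2(\rd_{v_-}\psi)^2 + (-\Delta_{RN})|\slashed\nabla\psi|^2$. I would bound $\int_\V^{2\V} (\text{flux through }\mathcal N_v)\, dv$ by the stability norms of Proposition \ref{RN.stab}. In the region $r \geq r_0$ the strong weight $\f{1}{(-\Delta_{RN})^2}$ carried by $(\rd_{v_-}\psi)^2$ in $IE_1[\psi;\V]$ (together with the factor $(-\Delta_{RN})$ in $\vol$) more than compensates the flux weight $r^2$ — this is where the degeneration of $-\Delta_{RN}$ along $\Hp$ is absorbed — so this part is $\lesssim \V^{-q}$. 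In the region $r \leq r_0$ the relevant bound is $IE_2[\psi;\V] \lesssim \V^{-q}$, but now the flux weight $r^2$ exceeds the degenerate weight $(-\Delta_{RN})^{1-\alpha}$ of $IE_2$, producing a loss of $(-\Delta_{RN})^{-(1-\alpha)}$. Crucially, $\mathcal N_\V$ terminates on $\gamma_\sigma$ and never reaches $\CH$, and on $\mathcal N_\V \cap \{r\leq r_0\}$ one has $-\Delta_{RN} \gtrsim \V^{-c\sigma}$ for a constant $c$ depending only on the surface gravity at $r_-$; hence the loss is at most $\V^{c\sigma(1-\alpha)}$ and $\int_\V^{2\V}(\text{flux through }\mathcal N_v)\, dv \lesssim \V^{c\sigma(1-\alpha)-q}$.

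With this bound in hand I would run a pigeonhole argument: applying it with $\V = 2^k$ produces a sequence $v_k \in [2^k, 2^{k+1}]$ along which the single-slice flux satisfies $|\text{flux through }\mathcal N_{v_k}| \lesssim v_k^{-q-1+c\sigma(1-\alpha)}$. Choosing the free parameter $\sigma > 0$ small enough that $c\sigma(1-\alpha) < 1 - \delta$ (possible precisely because $\delta < 1$), this error is $o(v_k^{-(q+\delta)})$, strictly smaller than the horizon contribution, and the asserted lower bound $\gtrsim v_k^{-(q+\delta)}$ on $\gamma_\sigma$ follows irrespective of the sign of the error. Two points require care and constitute the main obstacle. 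First, $D_\V$ is non-compact (it opens up towards $i^+$), so \eqref{EE.intro} must be justified by exhausting $D_\V$ with $\{\V \leq v_+ \leq \V'\}\cap\{f_{\gamma_\sigma}\leq 1\}$ and letting $\V'\to\infty$; the top cap is itself a segment of type $\mathcal N_{\V'}$, and the same integrated bound forces its flux to zero along a subsequence, so it drops out in the limit. Second, and more essentially, $\rd_t$ is spacelike in the interior, so the flux through $\gamma_\sigma$ is not sign-definite and no coercive monotonicity is available; the entire lower bound must be extracted from the exact equality furnished by the conservation law. It is exactly this that makes the quantitative control of $\mathcal N_\V$ — the delicate weight-matching near $\CH$, balanced against the $\delta$-loss in assumption ii) through the choice of $\sigma$ — the crux of the argument.
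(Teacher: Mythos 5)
Your argument reproduces the paper's proof almost step for step: the exact $\rd_t$-conservation law on $\{v_+\geq \V\}\cap\{f_{\gamma_\sigma}\leq 1\}$, the identification of the horizon flux with $\int (\rd_{v_+}\psi)^2$ bounded below by assumption ii), the control of the flux through $\{v_+=\V\}$ by the integrand $(\rd_{v_-}\psi)^2+(-\Delta_{RN})|\slashed\nabla\psi|^2$, the lower bound $-\Delta_{RN}\gtrsim \V^{-c\sigma}$ on $\{f_{\gamma_\sigma}\leq 1\}$ giving a loss of $\V^{c\sigma(1-\alpha)}$ against $IE_2[\psi;\V]\lesssim \V^{-q}$, and the pigeonhole selection of $v_k$ with single-slice flux $\lesssim v_k^{-q-1+c\sigma(1-\alpha)}$. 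All of this is correct and is exactly what the paper does.

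There is, however, one genuine problem: at the very last step you close the inequality $c\sigma(1-\alpha)<1-\delta$ by taking $\sigma$ \emph{small}, with $\alpha$ treated as fixed. This is the wrong knob. The hypersurface $\gamma_\sigma$ is not at your disposal in this proposition: in the next step (Proposition \ref{prop.instability.2.RN}) the bulk error to the future of $\gamma_\sigma$ is controlled precisely because $\sup_{\{f_{\gamma_\sigma}\geq 1\}}(-\Delta_{RN})^{\alpha}\lesssim v_k^{-\alpha|\kappa_-|\sigma}$, and beating the main term $v_k^{-(q+\delta)}$ there forces $\sigma$ to be \emph{large} (roughly $\sigma>(q+\delta)/|\kappa_-|$). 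So the proposition must be established for an arbitrary (in particular large) prescribed $\sigma$ --- compare the Kerr analogue, Proposition \ref{PropLowBoundGamma}, which begins ``Let $\sigma>0$ be given.'' With your choice, the lower bound on $\gamma_\sigma$ holds only for a $\sigma$ too small to be usable downstream, and the overall theorem does not follow. The fix is immediate and is what the paper does: $\alpha\in[0,1)$ is a free parameter in the stability estimates of Proposition \ref{RN.stab}, so for \emph{every} fixed $\sigma>0$ one chooses $\alpha$ sufficiently close to $1$ (depending on $\sigma$ and $\delta$) so that $c\sigma(1-\alpha)<1-\delta$; since $\delta<1$ this is always possible. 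With that single change your proof is the paper's proof.
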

\begin{proof}[Sketch of proof] We apply \eqref{EE.intro} with $Z=\rd_t$. Since $\rd_t$ is Killing, we in fact obtain a conservation law, which implies the following lower bound for every $v$:
\begin{equation}
\begin{split}
&\int\limits_{\gamma_\sigma\cap\{v_+\geq v\}} \T(\rd_t,(-df_{\gamma_\sigma})^\sharp)\, \vol_{f_{\gamma_\sigma}}\\
\gtrsim &\underbrace{\int\limits_{\mathcal H^+\cap\{v_+\geq v\}} (\rd_{v_+}\psi)^2\, r^2\sin^2\theta\,dv_+\,d\theta\,d\varphi}_{=:I}\\
&-\underbrace{C\int\limits_{\{v_+=v\}\cap\{f_{\gamma_\sigma}\leq 1\}} \left((\rd_{v_-}\psi)^2+(-\Delta_{RN})|\slashed\nabla\psi|^2\right)\,r^2\sin^2\theta\,dv_-\,d\theta\,d\varphi}_{=:II}.
\end{split}
\end{equation}
Notice again that it is important that this is derived from a conservation law and there are no bulk terms. By assumption ii) of Theorem \ref{main.theorem.RN}, the term $I$ is bounded below by $v^{-(q+\delta)}$. To treat the terms II, we crucially rely on the stability estimate \eqref{ILED.RN} which allows us to pick, using the pigeonhole principle, a sequence $v_k\to \infty$ as $k\to\infty$ such that the corresponding term decays as $(v_k)^{-(q+1)+C\sigma(1-\alpha)}$. Since $\delta< 1$, for every $\sigma$, one can choose $\alpha$ close to $1$ to conclude the proof.
\end{proof}
Finally, we prove the main conclusion of Theorem \ref{main.theorem.RN}:
\begin{proposition}\label{prop.instability.2.RN}
For every $u_0\in \mathbb R$, there exists a sequence $v_k\in \mathbb R$ with $v_k\to \infty$ as $k\to \infty$ such that
$$\int\limits_{\{v_-=u_0\}\cap\{v_{+}\geq v_k\}} (\rd_{v_+}\psi)^2 r^2\sin^2\theta\, dv\,d\theta\,d\varphi \gtrsim (v_k)^{-(q+\delta)}.$$
\end{proposition}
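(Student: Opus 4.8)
The plan is to apply the energy identity \eqref{EE.intro} one final time, with the non-Killing vector field
$$Z := \f{\Delta_{RN}}{r^2}\,\rd_r + \rd_t,$$
the ingoing principal null direction of $g_{RN}$. A direct computation in the double null coordinates gives $Z = 2\rd_{v_+}$, so $Z$ is future-directed null and, being $\rd_{v_+}$ up to scale, extends regularly to $\CH$. Two features make this the natural choice: since $g_{v_+ v_+}=0$, the $\T(Z,\cdot)$-flux through any slice $\{v_-=\text{const}\}$ equals a positive multiple of $\int (\rd_{v_+}\psi)^2\,r^2\sin^2\theta\,dv_+\,d\theta\,d\varphi$, which is exactly the quantity in the conclusion; and writing $Z = 2\rd_t + 2\rd_{v_-}$ relates the flux of $Z$ on $\gamma_\sigma$ to the conserved $\rd_t$-flux of Proposition \ref{RN.lower.bound}. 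I would run \eqref{EE.intro} in the region $D_k$ bounded to the past by $\gamma_\sigma\cap\{v_+\geq v_k\}$, to the left by the outgoing null segment $\{v_+=v_k\}\cap\{v_-\leq u_0\}$, and to the future by $\{v_-=u_0\}\cap\{v_+\geq v_k\}$, where $v_k$ is the sequence from Proposition \ref{RN.lower.bound}. The decisive geometric input, guaranteed by the logarithmic term in $f_{\gamma_\sigma}$ of \eqref{gamma.sigma.RN.def}, is that $D_k\subset\{r\leq r_0\}\cap\{v_-\leq u_0\}$ has finite spacetime volume that shrinks as $v_k\to\infty$.

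Collecting the boundary terms of \eqref{EE.intro} with future-directed normals expresses the future flux on $\{v_-=u_0\}$ as the future flux on $\gamma_\sigma$ plus two error terms, the flux on $\{v_+=v_k\}$ and the bulk integral $\int_{D_k}\T_{\mu\nu}\nabla^\mu Z^\nu\,\vol$. For the $\gamma_\sigma$-term I would argue pointwise: on the spacelike hypersurface $\gamma_\sigma$ the normal $(-df_{\gamma_\sigma})^\sharp$ is future-directed timelike and $\rd_{v_-}$ is future-directed null, so the dominant energy condition gives $\T(\rd_{v_-},(-df_{\gamma_\sigma})^\sharp)\geq 0$ and hence $\T(Z,(-df_{\gamma_\sigma})^\sharp)\geq 2\,\T(\rd_t,(-df_{\gamma_\sigma})^\sharp)$; Proposition \ref{RN.lower.bound} then yields $\int_{\gamma_\sigma\cap\{v_+\geq v_k\}}\T(Z,\cdot)\,\vol_{f_{\gamma_\sigma}}\gtrsim (v_k)^{-(q+\delta)}$ along the same sequence. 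Since the left-hand flux is a positive multiple of the target integral, the proof reduces to showing that the flux on $\{v_+=v_k\}$ and the bulk integral over $D_k$ are both $o\big((v_k)^{-(q+\delta)}\big)$.

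The main obstacle is the bulk term, because $Z$ is not Killing and the components of its deformation tensor $\nabla^{(\mu}Z^{\nu)}$ degenerate like inverse powers of $(-\Delta_{RN})$ as one approaches $\CH$. I would absorb exactly these weights using the weighted stability estimate \eqref{ImprovedILED.RN} (and \eqref{RN.step2} with $u_1=u_0$), whose $(-\Delta_{RN})^{-\alpha}$ weights on $\rd_{v_\pm}\psi$ are tailored to the deformation tensor of $Z$, while the unweighted $|\slashed\nabla\psi|^2$ term there handles the remaining angular contribution. The smallness in $v_k$ comes from the same two mechanisms as in Proposition \ref{RN.lower.bound}: confining the integral to the shrinking corner $\{v_+\geq v_k\}$ produces the base decay, while on $D_k$ one has $(-\Delta_{RN})\lesssim v_+^{-c\sigma}$ — a consequence of $r^*=\tfrac12(1+\sigma\log v_+)$ along $\gamma_\sigma$ and the exponential relation between $(-\Delta_{RN})$ and $r^*$ near $r=r_-$ — so the surplus power of $(-\Delta_{RN})$ in the deformation tensor beyond the weight buys an extra gain. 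I expect the net rate to be $(v_k)^{-(q+1)+C\sigma(1-\alpha)}$, one power better than the lower bound up to the loss $C\sigma(1-\alpha)$, exactly as for term $II$ in Proposition \ref{RN.lower.bound}; since $\delta<1$, fixing $\sigma$ and then taking $\alpha$ close to $1$ makes this $o\big((v_k)^{-(q+\delta)}\big)$. The outgoing flux on $\{v_+=v_k\}$ is milder — it is controlled by $(\rd_{v_-}\psi)^2$, $|\slashed\nabla\psi|^2$ and the $(-\Delta_{RN})$-weights — and a further pigeonhole within \eqref{ILED.RN} passes to a subsequence of $v_k$ along which it too is negligible. Combining the three estimates gives
$$\int\limits_{\{v_-=u_0\}\cap\{v_+\geq v_k\}}(\rd_{v_+}\psi)^2\,r^2\sin^2\theta\,dv_+\,d\theta\,d\varphi\;\gtrsim\;(v_k)^{-(q+\delta)},$$
which is the assertion.
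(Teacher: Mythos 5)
Your architecture coincides with the paper's: the multiplier is $\rd_{v_+}$ (up to the factor $2$), the region lies between $\gamma_\sigma$ and $\{v_-=u_0\}$, the main term on $\gamma_\sigma$ is bounded below via Proposition \ref{RN.lower.bound} using precisely the sign observation $\T(\rd_{v_-},(-df_{\gamma_\sigma})^\sharp)\geq 0$, and the bulk is to be controlled by \eqref{ImprovedILED.RN}. (Incidentally, the flux through $\{v_+=v_k\}$ is that of a future-directed causal multiplier through a \emph{past} null boundary of $D_k$, hence non-negative with a favourable sign; it can simply be dropped, and your extra pigeonhole there is unnecessary.) There are, however, two genuine gaps. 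The first is that your boundary decomposition is incomplete: $D_k$ is non-compact and its closure contains the portion $\CH\cap\{v_-\leq u_0\}$ of the Cauchy horizon, which is a \emph{future} null boundary. The corresponding flux $\int_{\CH}\T(\rd_{v_+},\cdot)$ is non-negative and sits on the same side of the identity as the $\{v_-=u_0\}$ flux you want to bound from below, so it cannot be discarded by sign; one must show it vanishes, by approximating $D_k$ with $D_k\cap\{v_+\leq w_\ell\}$ and using the finiteness of the spacetime integrals in \eqref{ILED.RN}/\eqref{ImprovedILED.RN} to extract a sequence $w_\ell\to\infty$ along which the flux through $\{v_+=w_\ell\}$ tends to zero. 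The paper flags exactly this step.

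The second, more serious, gap is in the bulk term, where you transplant the mechanism of Proposition \ref{RN.lower.bound} to a region where it does not operate. The claimed rate $(v_k)^{-(q+1)+C\sigma(1-\alpha)}$ is unjustified: the base decay $(v_k)^{-(q+1)}$ in Proposition \ref{RN.lower.bound} came from pigeonholing the \emph{decaying} estimate \eqref{ILED.RN}, which only holds in $\{f_{\gamma_\sigma}\leq 1\}$, i.e.\ to the past of $\gamma_\sigma$; over $D_k\subset\{f_{\gamma_\sigma}\geq 1\}$ the only available stability estimate is the boundedness statement \eqref{ImprovedILED.RN}, so restricting to $\{v_+\geq v_k\}$ produces no decay by itself. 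Moreover the sign of the $\sigma$-dependent exponent is reversed: on $\{f_{\gamma_\sigma}\geq 1\}$ one has $(-\Delta_{RN})\lesssim v_+^{-|\kappa_-|\sigma}$, so writing the integrand as $(-\Delta_{RN})^\alpha\cdot\big[(-\Delta_{RN})^{-\alpha}\times\text{integrand}\big]$ and applying \eqref{ImprovedILED.RN} gives the bound $C_{\sigma,\alpha}\,(v_k)^{-|\kappa_-|\sigma\alpha}$ --- a gain proportional to $\sigma$, not a loss. Your parameter strategy (``fix $\sigma$, then take $\alpha$ close to $1$'') therefore fails: if $|\kappa_-|\sigma\leq q+\delta$, the bulk term is not $o\big((v_k)^{-(q+\delta)}\big)$ no matter how close $\alpha$ is to $1$. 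The correct order of choices is to fix $\sigma$ \emph{large} first, e.g.\ $\tfrac12|\kappa_-|\sigma> q+\delta$ --- legitimate, since Proposition \ref{RN.lower.bound} holds for every $\sigma> 0$ --- after which the bulk term in \eqref{final.RN} decays faster than $(v_k)^{-(q+\delta)}$ for any $\alpha$ bounded away from $0$. Note also that the angular contribution requires the \emph{weighted} bound on $|\slashed\nabla\psi|^2$ from \eqref{ImprovedILED.RN}; the unweighted bound you invoke for it would only yield $O(1)$, which is not enough.
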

\begin{proof}[Sketch of proof]
We use \eqref{EE.intro} with $Z=\rd_{v_+}$ in the region to the future of $\gamma_\sigma$ and to the past of $\{v_-=u_0\}\cup \mathcal C\mathcal H^+$. Noticing that the term ``at the Cauchy horizon'' vanishes\footnote{To justify this, one needs to use \eqref{ILED.RN} and an approximation argument, see the proof of \eqref{EnergyEste4}.} and dropping a boundary term with a good sign, we have
\begin{equation}\label{final.RN}
\begin{split}
&\int\limits_{\{v_-=u_0\}\cap\{v_+\geq v_k\}} (\rd_{v_+}\psi)^2 r^2\sin^2\theta\, dv\,d\theta\,d\varphi\\
\gtrsim &\underbrace{\int\limits_{\gamma_\sigma\cap\{v_+\geq v_k\}} \T(\rd_{v_+},(-df_{\gamma_\sigma})^\sharp)\, \mathrm{vol}_{f_{\gamma_\sigma}}}_{=:I}-\underbrace{C\int\limits_{\substack{\{v_-\leq u_0\}\cap\{v_+\geq v_k\} \\ \cap \{f_{\sigma}\geq 1\}}}((\rd_{v_+}\psi)^2+(\rd_{v_-}\psi)^2+|\slashed\nabla\psi|^2)\,\mathrm{vol}}_{=:II} \;.
\end{split}
\end{equation}
First, $I$, which is the main term, can be bounded below by the term in Proposition \ref{RN.lower.bound}. This is because $\rd_t=\rd_{v_+}-\rd_{v_-}$ and the term $\int\limits_{\gamma_\sigma\cap\{v_+\geq v_k\}} \T(\rd_{v_-},(-df_{\gamma_\sigma})^\sharp)\, \vol_{f_{\gamma_\sigma}}$ has a favourable sign for this one-sided bound. As a consequence, $I\gtrsim (v_k)^{-(q+\delta)}$.

Then, in order to control the bulk error term $II$ in \eqref{final.RN}, the key observation is that by choosing $\sigma> 0$ to be sufficiently large, the stability estimate \eqref{ImprovedILED.RN} implies that $II$ decays faster than any polynomial, i.e., for any $p$, it is bounded by $C_p (v_k)^{-p}$. The conclusion therefore follows by considering sufficiently large $v_k$.
\end{proof}




\subsection{Outline of the paper}
We end the introduction with an outline of the remainder of the paper. We will begin by a brief discussion on the geometry of the interior of the Kerr black hole in Section \ref{sec.geometry}. We will also introduce the preliminaries about performing energy estimates in this section. We then give a precise statement of the main theorem in Section \ref{sec.main.theorem}. The proof of the main theorem will then occupy the remainder of the paper: In Section \ref{sec.stability}, we prove the necessary stability estimates; in Section \ref{sec.instability}, we then prove the instability estimates, using in particular the bounds derived in Section \ref{sec.stability}.

\subsection{Acknowledgements}

We are grateful to Mihalis Dafermos and Yakov Shlapentokh-Rothman for making the preprint \cite{DafShla15} available. We would also like to thank an anonymous referee for many helpful comments on a previous version of the manuscript. J. Luk is supported by the NSF Postdoctoral Fellowship DMS-1204493. J. Sbierski would like to thank Magdalene College, Cambridge, for their financial support.

\section{The interior of subextremal Kerr spacetime}\label{sec.geometry}

We consider the standard $(t,r,\theta, \varphi)$ coordinates on the smooth manifold $\mathcal{M} = \R \times (r_-,r_+) \times \mathbb{S}^2$, where $r_-$ and $r_+$ will be defined momentarily. A Lorentzian metric $g$ on $\mathcal{M}$ is defined by
\begin{equation}\label{Kerr.metric.BL}
g = g_{tt} \, dt^2 + g_{t\varphi}\,(dt \otimes d\varphi + d\varphi \otimes dt) + \frac{\rho^2}{\Delta} \, dr^2 + \rho^2 \, d\theta^2 + g_{\varphi \varphi} \, d\varphi^2 \;,
\end{equation}
where
\begin{equation*}
\begin{aligned}
&\rho^2 = r^2 + a^2 \cos^2\theta\;,  \qquad \qquad &&g_{tt} = -1 + \frac{2Mr}{\rho^2}\;{,} \\
&\Delta = r^2 -2Mr + a^2\;,  &&g_{t\varphi} = -\frac{2Mra\sin^2\theta}{\rho^2}\;{,} \\
& &&g_{\varphi \varphi} = \big[ r^2 + a^2 +\frac{2Mra^2 \sin^2\theta}{\rho^2}\big] \sin^2\theta \;.
\end{aligned}
\end{equation*}
Here, $a$ and $M$, which are required to satisfy $0 <  |a| < M$, are constants representing the angular momentum per unit mass and the mass of the black hole, respectively.

We now define $r_- <  r_+$  to be the roots of $\Delta$ and fix a time orientation on the Lorentzian manifold $(\mathcal{M},g)$ by stipulating that $-\partial_r$ is future directed. The time oriented Lorentzian manifold $(\mathcal{M},g)$ is called the \emph{interior of a subextremal Kerr black hole}. Moreover, let us fix an orientation by stipulating that the Lorentzian volume form $\vol = \rho^2 \sin \theta \,dt \wedge dr \wedge d\theta \wedge {d}\varphi$ is positive.

For later reference we note that the inverse metric $g^{-1}$ in the \emph{Boyer--Lindquist coordinates} $(t, \varphi, r, \theta)$ is given by
\begin{equation}
\label{gInverse}
g^{-1} = \begin{pmatrix}
-\frac{g_{\varphi \varphi}}{\Delta \sin^2 \theta} & \frac{g_{t \varphi}}{\Delta \sin^2 \theta} & 0 & 0 \\
\frac{g_{t \varphi}}{ \Delta \sin^2 \theta} & -\frac{g_{tt}}{\Delta \sin^2 \theta} & 0 & 0 \\
0 & 0 & \frac{\Delta}{\rho^2} &0 \\
0 & 0 & 0& \frac{1}{\rho^2}
\end{pmatrix} \;.
\end{equation}

Let $r^*(r)$ be a function on $(r_-,r_+)$ satisfying $\frac{dr^*}{dr} = \frac{r^2 + a^2}{\Delta}$ and $\overline{r}(r)$ a function on $(r_-,r_+)$ satisfying $\frac{d\overline{r}}{dr} = \frac{a}{\Delta}$. We now define the following functions on $\mathcal M$:
\begin{align*}
v_+ := t + r^* \quad &, \qquad \varphi_+ := \varphi + \overline{r} \;,\\
v_- := r^* - t \quad &, \qquad \varphi_- := \varphi - \overline{r} \;.
\end{align*}
Here, to be precise, $\varphi_+$ and $\varphi_-$ are defined modulo $2\pi$.
It is easy to check that $(v_+, \varphi_+, r, \theta)$ and $(v_-, \varphi_-, r, \theta)$ are coordinate systems for $\mathcal M$. The metric $g$ in these coordinates takes the following form:
\begin{equation*}
\begin{split}
g&= g_{tt} \, dv_+^2 + g_{t\varphi} \, \big( dv_+ \otimes d\varphi_+ + d\varphi_+ \otimes dv_+\big) + g_{\varphi \varphi} \, d\varphi^2_+  +\big(dv_+ \otimes dr + dr \otimes dv_+\big) \\[2pt] &\qquad - a\sin^2\theta \, \big( dr \otimes d\varphi_+ + d\varphi_+ \otimes dr\big) + \rho^2 \, d\theta^2 \\[7pt]
&= g_{tt} \, dv_-^2 - g_{t\varphi} \, \big( dv_- \otimes d\varphi_- + d\varphi_- \otimes dv_-\big) + g_{\varphi \varphi} \, d\varphi^2_-  + \big(dv_- \otimes dr + dr \otimes dv_-\big)\\[2pt]
&\qquad + a\sin^2\theta \, \big( dr \otimes d\varphi_- + d\varphi_- \otimes dr\big) + \rho^2 \, d\theta^2 \;.
\end{split}
\end{equation*}
This shows that in each of the above coordinate systems the metric extends in fact analytically to all positive values of $r$. We now attach the following boundaries to the manifold $\mathcal M$: using the coordinate chart $(v_+, \varphi_+, r, \theta)$ we attach the boundary $\R \times \{r = r_+\} \times \mathbb{S}^2$, which we call the \emph{event horizon} and denote with $\Hp$. Moreover, using the coordinate chart $(v_-, \varphi_-, r, \theta)$ we attach the boundary $\R \times \{r = r_-\} \times \mathbb{S}^2$, which we call the \emph{Cauchy horizon} and denote with $\CH$. The resulting manifold with boundary\footnote{We note explicitly that with our convention, $\mathcal H^+$ only consists of one ({future} affine complete) null hypersurface and $\mathcal C\mathcal H^+$ also only consists of one ({past} affine complete) null hypersurface. In particular, the dotted lines in Figure \ref{FigInt} {are} not part of the $\M$.} is denoted with $\M$ and is depicted using a Penrose-style representation in Figure \ref{FigInt}.

\subsection{Hypersurfaces}

Note that $\langle dv_+, dv_+\rangle  = \langle dv_-, dv_-\rangle  = \frac{a^2 \sin^2 \theta}{\rho^2}$, thus showing that for $a > 0$ the level sets of $v_+$ and $v_-$ are timelike hypersurfaces away from the axis. 

We now define the functions $f^+ := v_+ -r + r_+$ and $f^- := v_- - r + r_-$. An easy computation gives 
\begin{equation}
\label{NormF}
\langle df^+, df^+\rangle  = \langle df^-, df^-\rangle  =  \frac{a^2 \sin^2 \theta}{\rho^2} + \frac{\Delta}{\rho^2} - \frac{2(r^2 + a^2)}{\rho^2}
\end{equation}
which shows that the level sets of $f^+$ and $f^-$ are spacelike hypersurfaces. We introduce the notation $\Sigma^+_c := \{f^+ = c\}$ and $\Sigma^-_c := \{f^- = c\}$. Moreover, it is immediate that the hypersurfaces $\Sigma_c := \{ r = c\}$ are spacelike.

We also define the function $f_{\gamma_\sigma} (v_+, v_-) := v_+ + v_- - \sigma \log(v_+)$ for $v_+$ large enough, where $\sigma > 0$,  and compute
\begin{equation*}
\langle df_{\gamma_\sigma}, df_{\gamma_\sigma}\rangle  =  \frac{a^2 \sigma^2 \sin^2 \theta  }{v_+^2 \rho^2} + 4(1 - \frac{\sigma}{v_+}) \frac{(r^2 + a^2 )^2}{\Delta \rho^2} \;.
\end{equation*}
Hence, for $v_+$ large enough the level sets of $f_{\gamma_\sigma}$ are spacelike hypersurfaces (recall that $\Delta < 0$ on $\mathcal{M}$). Let 
\begin{equation}\label{gamma.def}
\gamma_\sigma := f_{\gamma_\sigma}^{-1}(1).
\end{equation}

We define an orientation on the level sets of $r$ (including the horizon $\Hp$) by stipulating that the volume form $\volr$, given by $\vol = -dr \wedge \volr$, is positive. Similarly, we define positive volume forms $\volsp$, $\volsn$, and $\volg$ by $\vol = df^+ \wedge \volsp$, $\vol = df^- \wedge \volsn$, and $\vol = df_{\gamma_\sigma} \wedge \volg$, respectively.
\begin{figure}[h]
  \centering
  \def\svgwidth{9cm}
    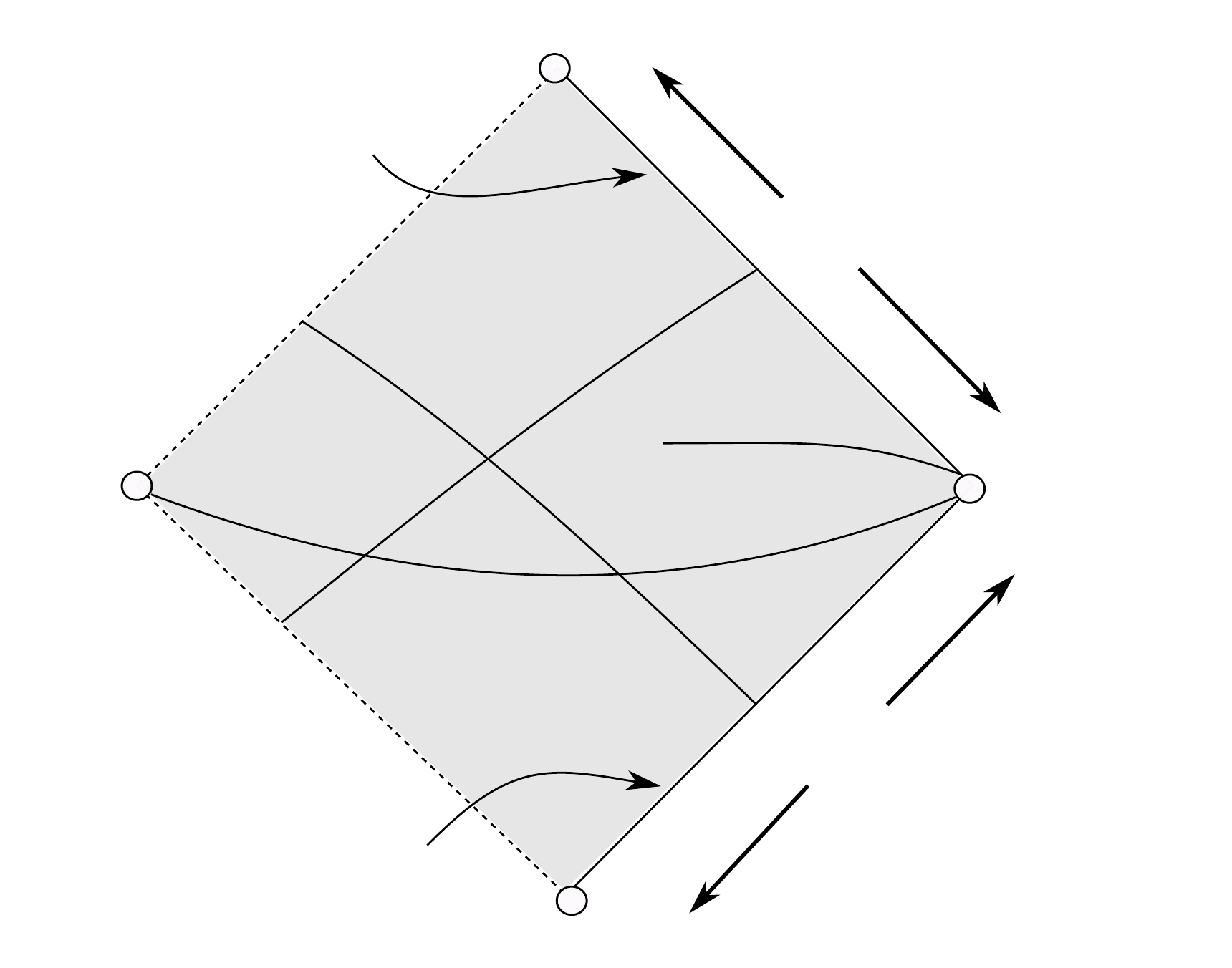
      \caption{The Kerr interior} \label{FigInt}
\end{figure}

\subsection{The principal null frame field}

For convenience we introduce the abbreviations $\mathscr{S} = \sin \theta$ and $\mathscr{C}=\cos \theta$. Moreover, using the Boyer--Lindquist coordinates, we define
\begin{equation*}
V= (r^2 + a^2) \partial_t + a \partial_\varphi  \qquad \textnormal{ and } \qquad W= \partial_\varphi + a \Si^2 \partial_t \;.
\end{equation*}
The \emph{principal null frame} is then given by
\begin{equation*}
\begin{aligned}
&e_1 := \frac{1}{\rho} \partial_\theta \;,  && e_3 := \frac{\Delta}{\rho^2} \partial_r - \frac{1}{\rho^2} V \;,\\
&e_2 := \frac{W}{|W|} = \frac{1}{\rho \Si}(\partial_\varphi + a \Si^2 \partial_t)\;, \qquad \qquad
&&e_4 := -\partial_r - \frac{1}{\Delta}V \;.
\end{aligned}
\end{equation*}
The vector fields $e_3$ and $e_4$ are null and future directed and satisfy $\langle e_3,e_4\rangle  = -2$. Let us denote the distribution spanned by $e_3$ and $e_4$ by $\Pi$ and the to $\Pi$ orthogonal distribution by $\Pi^\perp$. The vector fields $e_1$ and $e_2$ are not defined on the axis, but where defined they form an orthonormal basis for $\Pi^\perp$.  

Note that in $(v_-,r,\theta,\varphi_-)$ coordinates we have\footnote{In the following $\big{|}_\pm$ indicates a partial derivative in the $(v_\pm, r, \theta, \varphi_\pm)$ coordinate system.} 
\begin{equation*}
e_3 = \frac{\Delta}{\rho^2} \frac{\partial}{\partial r} \Big|_- - \frac{2}{\rho^2} V \qquad \textnormal{ and } \qquad e_4 = - \frac{\partial}{\partial r}\Big|_-\;,
\end{equation*}
while in $(v_+, r , \theta, \varphi_+)$ coordinates we have
\begin{equation*}
e_3 = \frac{\Delta}{\rho^2} \frac{\partial}{\partial r}\Big|_+ \qquad \textnormal{ and } \qquad e_4 = - \frac{\partial}{\partial r} \Big|_+ - \frac{2}{\Delta} V \;. 
\end{equation*}
Hence, the null vectors $e_3$ and $e_4$ are regular at the Cauchy horizon $\CH$, but not at the event horizon $\Hp$.
At the event horizon $\Hp$ the vector fields 
\begin{equation*}
\tilde{e}_3 := -\frac{1}{\Delta}\, e_3 \qquad \textnormal{ and } \qquad  \tilde{e}_4 := - \Delta \, e_4
\end{equation*}
are regular.

{Using $\nabla$ to denote the Levi--Civita connection, t}he covariant derivatives can be computed to be
\begin{equation}
\label{CovDer}
\begin{aligned}
\nabla_{e_1} e_1 &= - \frac{r}{2 \rho^2} e_3 + \frac{r \Delta}{2 \rho^4} e_4   \;,
&&\nabla_{e_1} e_2 = -\frac{a\Co}{2 \rho^2} e_3 -\frac{a\Co\Delta}{2 \rho^4} e_4 \;,\\
\nabla_{e_2} e_1 &= \frac{\Co}{\Si} \frac{r^2 + a^2}{\rho^3} e_2 + \frac{a\Co}{2 \rho^2} e_3 + \frac{a\Co\Delta}{2 \rho^4} e_4 \;, \qquad
&&\nabla_{e_2} e_2 = -\frac{\Co}{\Si} \frac{r^2 +a^2}{\rho^3} e_1 - \frac{r}{2\rho^2} e_3 +\frac{r\Delta}{2\rho^4} e_4 \;,\\
\nabla_{e_3} e_1 &= -\frac{\Delta a \Co}{\rho^4} e_2 - \frac{a^2 \Si\Co}{\rho^3} e_3 \;,
&&\nabla_{e_3} e_2 = \frac{a\Co\Delta}{\rho^4} e_1 + \frac{ar \Si}{\rho^3} e_3 \;,\\
\nabla_{e_4} e_1 &= -\frac{a\Co}{\rho^2} e_2 - \frac{a^2 \Si\Co}{\rho^3} e_4 \;,
&&\nabla_{e_4} e_2 = \frac{a\Co}{\rho^2} e_1 - \frac{ar \Si}{\rho^3} e_4\;,
\\[15pt]
\nabla_{e_1} e_3 &= \frac{r \Delta}{\rho^4} e_1 - \frac{\Delta a \Co}{\rho^4} e_2 + \frac{a^2 \Si\Co}{\rho^3} e_3 \;,
&&\nabla_{e_1} e_4 = - \frac{r}{\rho^2} e_1 - \frac{a\Co}{\rho^2} e_2 - \frac{a^2\Si\Co}{\rho^3} e_4 \;,
\\
\nabla_{e_2} e_3 &= \frac{a\Co\Delta}{\rho^4} e_1 + \frac{r\Delta}{\rho^4} e_2 + \frac{ar\Si}{\rho^3} e_3 \;,
&&\nabla_{e_2} e_4 = \frac{a\Co}{\rho^2} e_1 -\frac{r}{\rho^2} e_2 - \frac{ar\Si}{\rho^3} e_4  \;,
\\
\nabla_{e_3} e_3 &= \partial_r\big(\frac{\Delta}{\rho^2}\big) e_3 \;,
&&\nabla_{e_3} e_4 = -\frac{2a^2 \Si\Co}{\rho^3} e_1 +\frac{2ar\Si}{\rho^3} e_2 - \partial_r \big(\frac{\Delta}{\rho^2}\big) e_4 \;,
\\
\nabla_{e_4} e_3 &= - \frac{2a^2 \Si\Co}{\rho^3} e_1 - \frac{2ar\Si}{\rho^3} e_2 \;,
&&\nabla_{e_4} e_4 = 0\;.
\end{aligned}
\end{equation}

The commutators are
\begin{equation*}
\begin{aligned}
&[e_1,e_2] = - \frac{\Co}{\Si} \frac{r^2 + a^2}{\rho^3} \, e_2 {-} \frac{a\Co}{\rho^2} \, e_3 - \frac{a\Co\Delta}{\rho^4} \, e_4 \;,\qquad 
&&[e_2, e_3] = \frac{r\Delta}{\rho^4} \,e_2 \;,
\\
&[e_1, e_3] = \frac{r\Delta}{\rho^4} \, e_1 + \frac{2 a^2 \Si\Co}{\rho^3} \, e_3 \;,
&&[e_2, e_4] = - \frac{r}{\rho^2} \, e_2 \;,
\\
&[e_1, e_4] = - \frac{r}{\rho^2} \, e_1 \;,
&&[e_3,e_4] = \frac{4ar\Si}{\rho^3} \, e_2 - \partial_r \Big( \frac{\Delta}{\rho^2}\Big) \, e_4 \;.
\end{aligned}
\end{equation*}

Finally, we compile the expressions for the Boyer--Lindquist coordinate vector fields written in terms of the principal frame field:
\begin{equation}
\label{CoordVecInFrame}
\begin{aligned}
&\partial_t = - \frac{a\Si}{\rho} \, e_2 - \frac{1}{2} \, e_3 - \frac{\Delta}{2 \rho^2} \, e_4\;, \qquad \qquad &&\partial_\theta = \rho \, e_1 \;,\\
&\partial_r = \frac{\rho^2}{2\Delta} \, e_3 - \frac{1}{2} \, e_4 \;, &&\partial_\varphi = \frac{\Si}{\rho}(r^2 + a^2) \, e_2 + \frac{\Si^2 a}{2} \, e_3 + \frac{\Si^2 a \Delta}{2 \rho^2} \, e_4 \;.
\end{aligned}
\end{equation}

\subsection{Commutators}\label{sec.commutators}

We define $\varphi_{\CH} := \varphi_- {+} \frac{a}{(r_-^2 + a^2)} v_-$, which is a regular angular function away from $\Hp$, and $\varphi_{\Hp} = \varphi_+ - \frac{a}{(r_+^2 + a^2)} v_+$, which is a regular angular function away from $\CH$.
A direct computation gives
\begin{equation}
\label{DerivativesPhi}
\begin{aligned}
e_1 \varphi_{\CH} &= 0 \;, & \quad e_1 \varphi_{\Hp} &=0 \;,\\
e_2 \varphi_{\CH} &= \frac{1}{\rho \Si} (1- \frac{a^2 \Si^2}{r_-^2 + a^2}) \;, & \quad e_2 \varphi_{\Hp} &= \frac{1}{\rho \Si} (1- \frac{a^2 \Si^2}{r_+^2 + a^2})\;, \\
e_3 \varphi_{\CH} &= - \frac{2}{\rho^2} a (1- \frac{r^2 + a^2}{r_-^2 + a^2}) \;, & \quad e_3 \varphi_{\Hp} &=0\;,\\
e_4 \varphi_{\CH} &=0 \;, & \quad e_4 \varphi_{\Hp} &= -\frac{2}{\Delta} a (1- \frac{r^2 + a^2}{r_+^2 + a^2}) \;.
\end{aligned}
\end{equation}

We now define the following vector fields on $\mathcal M$:
\begin{equation}
\label{DefCom}
\begin{aligned}
\Omega_{\CH / \Hp}^1 &= -\frac{\cos \theta \cos \varphi_{\CH / \Hp}}{\sin \theta} W - \sin \varphi_{\CH / \Hp} \partial_\theta \\
 &= - \rho \cos \theta \cos \varphi_{\CH / \Hp}   \cdot e_2 - \rho \sin \varphi_{\CH / \Hp} \cdot  e_1 \;,\\[7pt]
\Omega_{\CH / \Hp}^2 &= - \frac{\sin \varphi_{\CH / \Hp} \cos \theta}{\sin \theta} W + \cos \varphi_{\CH / \Hp} \partial_\theta \\
 &= -\rho \cos \theta \sin \varphi_{\CH / \Hp}   \cdot e_2 + \rho \cos \varphi_{\CH / \Hp} \cdot  e_1 \;,\\[7pt]
\Omega_{\CH / \Hp}^3 &= W\\
&= \rho \sin \theta\cdot  e_2 \;.
\end{aligned}
\end{equation}
In order to understand the regularity properties of these vector fields, {they should be compared to} the generators of the rotations $\Omega^i = \varepsilon_{ijk} x_j \partial_k$ in $\R^3$, which are in particular smooth, {and} read {as follows} in spherical coordinates
\begin{align*}
\Omega^1 &= -\frac{\cos \theta \cos \varphi}{\sin \theta} \,\partial_\varphi - \sin \varphi \, \partial_\theta \;,\\
\Omega^2 &= - \frac{\sin \varphi \cos \theta}{\sin \theta} \,\partial_\varphi + \cos \varphi \, \partial_\theta \;,\\
\Omega^3 &= \partial_\varphi \;.
\end{align*}

It is now easy to see that $\{\Omega_{\CH}^1, \Omega_{\CH}^2, \Omega_{\CH}^3\}$ is a collection of smooth vector fields which span $\Pi^\perp$ everywhere and, moreover, extend smoothly to $\CH$, while $\{\Omega_{\Hp}^1, \Omega^2_{\Hp}, \Omega_{\Hp}^3\}$ is a collection of smooth vector fields which span $\Pi^\perp$ everywhere and extend smoothly to $\Hp$. The angular coordinates $\varphi_{\CH / \Hp}$ have been defined such that one has $[\Omega_{\CH}^i, e_3] = 0$ on $\CH$ and $[\Omega_{\Hp}^i , -\Delta e_4]=0$ on $\Hp$. The importance of this property will become obvious in Section \ref{ILED2}.

\subsection{The wave equation and an energy estimate}\label{sec.energy.estimates}

Let $\psi \in C^\infty(M, \R)$. The \emph{wave equation} on the Kerr interior is defined by
\begin{equation*}
\Box_g \psi := {(g^{-1})}^{\mu \nu} \nabla_{{\mu}} \nabla_\nu \psi =0 \;,
\end{equation*}
where $\nabla${, as above,} denotes the Levi--Civita connection on $(\mathcal{M},g)$ {and here, and below, repeated indices are summed over}. We recall that the \emph{stress-energy tensor} $\T[\psi]$ of $\psi$ is given by
\begin{equation*}
\T[\psi]_{\mu \nu} := \partial_\mu \psi \partial_\nu \psi - \frac{1}{2} g_{\mu \nu} g^{-1}(d\psi, d\psi) 
\end{equation*}
and satisfies $\nabla^\mu \T[\psi]_{\mu \nu} = \Box_g \psi \partial_\nu \psi$. We also recall that the \emph{deformation tensor} $\pi(Z)_{\mu \nu}$ of a vector field $Z$ is given by 
$$\pi(Z)_{\mu \nu} = \nabla_\mu Z_{\nu} + \nabla_\nu {Z}_\mu.$$ 
For a compact region $D \subseteq \M$ with piecewise smooth boundary $\partial D$, which is oriented with respect to the outward pointing normal, Stokes' theorem now yields
\begin{equation}
\label{EnergyEst}
\int\limits_{\partial D} \iota_{\T[\psi](Z, \cdot)^\sharp} \,\vol = \int\limits_{D} d\big(  \iota_{\T[\psi](Z, \cdot)^\sharp} \,\vol\big) = \int\limits_D \Big(\T[\psi]_{\mu \nu} \nabla^\mu Z^\nu + \Box_g \psi Z\psi \Big) \, \vol \;.
\end{equation}
We refer to \eqref{EnergyEst} as the \emph{energy estimate} with multiplier $Z$ in the region $D$. 

\section{The main theorem}\label{sec.main.theorem}

Let 
$$T_{\Hp} = \partial_t + \frac{a}{r_+^2 + a^2} \partial_\varphi$$ denote {a} Hawking vector field of the event horizon; $T_{\Hp}$ is Killing and orthogonal to $\Hp$. Moreover, we denote with 
\begin{equation}\label{TCH.def}
T_{\CH} = (r_-^2 + a^2) \, \partial_t + a \, \partial_\varphi 
\end{equation}
 a Hawking vector field of the Cauchy horizon, which is Killing and orthogonal to $\CH$. {We now state our theorem precisely as follows:}

\begin{theorem}
\label{MainThm}
Let $\psi : \mathcal{M}\cup \Hp \to \R$ be a smooth solution of the wave equation $\Box_g \psi = 0$. Assume that {there exists $q> 0$, $\delta \in [0,1)$ and $C> 0$ such that}
\begin{enumerate}[i)]
\item {f}or all $\V \geq 1${, the following upper bound holds on $\mathcal H^+$:}
\begin{equation*}
\int\limits_{\Hp \cap \{v_+ \geq \V\}} \T[\psi](\tilde{N},\tilde{e}_4)  \,  \volr {\leq C} \V^{-q} \;,
\end{equation*} 
where $\tilde{N}$ is a future directed timelike vector field that satisfies $[\tilde{N},T_{\Hp}] = 0$ { on $\mathcal H^+$;}
\item {f}or all $\V \geq 1${, the following lower bound holds on $\mathcal H^+$:}
\begin{equation}
\label{LowBound}
\V^{-(q + \delta)} {\leq C} \int\limits_{\Hp \cap \{ v_+ \geq \V \}} \T[\psi](T_{\CH}, \tilde{e}_4) \, \volr {;}
\end{equation}
\item {the following upper bound holds for the second order energy\footnote{Recall the definition of $\Omega_{\Hp}^i$ in \eqref{DefCom}.}  on $\mathcal H^+$:}
\begin{equation*}
\sum\limits_{i=1}^3\int\limits_{\Hp \cap \{{v_+} \geq 1\}} \T[\Omega_{\Hp}^i \psi](\tilde{N},\tilde{e}_4) \, \volr \leq C \;.
\end{equation*}
\end{enumerate}
It then follows that for every $u_0 \in \R$ there exists a sequence $v_k \in \R$ with $v_k \to \infty$ for $k \to \infty$ such that the following holds:\footnote{Here, in the preceding and in the following, $\sharp$ denotes the isomorphism between one-forms and vector fields given by ``raising the index with the inverse of the metric $g$''.}
\begin{equation}
\label{PolyLowBound}
\int\limits_{\Sigma^-_{u_0} \cap \{{v_+} \geq v_k\}} \T[\psi]\big(-\Delta e_4, (-df^-)^\sharp\big) \, \volsn \gtrsim (v_k)^{-(q + \delta)} \;,
\end{equation}
{where the implicit constant is independent of $v_k$.}
In particular, \eqref{PolyLowBound} implies\footnote{The fact that \eqref{PolyLowBound} implies \eqref{NotL2} will be proven explicitly in Remark \ref{Rmk.NotL2} below.} that for every $u_0 \in \R$  we have
\begin{equation}
\label{NotL2}
\int\limits_{\Sigma^-_{u_0} \cap \{v_+ \geq 1\}} \T[\psi]\big( e_4, (-df^-)^\sharp\big) \, \volsn = \infty \;.
\end{equation}
\end{theorem}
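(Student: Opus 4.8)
The plan is to adapt to the Kerr interior the three-step scheme described for Reissner--Nordström in Section~\ref{sec.outline}: first establish stability (integrated-decay) estimates, then exploit the conservation law attached to the Killing field $T_{\CH}$ of \eqref{TCH.def} to carry the lower bound ii) from $\Hp$ out to $\gamma_\sigma$, and finally use a non-Killing multiplier to transport that lower bound across the finite-volume region between $\gamma_\sigma$ and $\Sigma^-_{u_0}$. Everything rests on the energy identity \eqref{EnergyEst}: since $\Box_g\psi=0$ its bulk term reduces to the contraction of $\T[\psi]$ with the deformation tensor of the chosen multiplier, so the whole game is to select multipliers whose deformation tensors either have a favorable sign or are small in the relevant region.

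For the stability estimates (Section~\ref{sec.stability}) I would feed \eqref{EnergyEst} with multipliers adapted to the geometry: a field adapted near $\Hp$, a future-directed timelike field of the form $e^{\lambda r}(e_3+e_4)$ in the intermediate region, and near $\CH$ a field whose weight degenerates in $(-\Delta)$ so as to produce a degenerate weight $(-\Delta)^\alpha$, $\alpha\in[0,1)$, in the resulting flux. Fed by assumption i) and a pigeonhole argument over $v_+$-slices, this yields an integrated estimate decaying like $\V^{-q}$ to the future of $\Hp$, together with a \emph{finite} weighted estimate in the region to the future of $\gamma_\sigma$ and to the past of $\Sigma^-_{u_0}$ --- the finiteness being exactly the finite-spacetime-volume property of that region. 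The genuinely Kerr-specific complication surfaces here: the spacetime is not spherically symmetric, so the operators $\Omega^i_{\Hp},\Omega^i_{\CH}$ of \eqref{DefCom} do not commute with $\Box_g$. They have been arranged so that $[\Omega^i_{\CH},e_3]=0$ on $\CH$ and $[\Omega^i_{\Hp},-\Delta e_4]=0$ on $\Hp$; the resulting commutator terms are weighted favorably near each horizon and can be absorbed, so that the commuted estimate, fed by assumption iii), finally yields --- via a Hardy inequality --- the missing control of the angular derivatives of $\psi$.

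For the instability, I would first apply \eqref{EnergyEst} with $Z=T_{\CH}$ in the region bounded by $\Hp$, a slice $\{v_+=v\}$ and $\gamma_\sigma$. Because $T_{\CH}$ is Killing the bulk term vanishes identically, so no interior error appears, and assumption ii) produces a lower bound for the $T_{\CH}$-flux on $\gamma_\sigma$ up to a boundary contribution on $\{v_+=v\}$ that is controlled by the stability estimate; a pigeonhole argument then selects $v_k\to\infty$ along which this error decays like $(v_k)^{-(q+1)+C\sigma(1-\alpha)}$, which, since $\delta<1$, is dominated by the main term $(v_k)^{-(q+\delta)}$ once $\alpha$ is chosen close enough to $1$ for fixed $\sigma$. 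I would then run \eqref{EnergyEst} once more with the non-Killing multiplier $-\Delta e_4$ in the region to the future of $\gamma_\sigma$, to the past of $\Sigma^-_{u_0}$, and to the past of $\CH$. Writing $-\Delta e_4 = T_{\CH} + \Delta\rd_r + (r^2-r_-^2)\rd_t$, the flux on $\gamma_\sigma$ is bounded below by the $T_{\CH}$-flux just obtained (the correction carrying a favorable sign for this one-sided bound), the flux on $\Sigma^-_{u_0}$ is precisely the left-hand side of \eqref{PolyLowBound}, the $\CH$-boundary term vanishes after an approximation argument using the stability estimate, and the remaining deformation-tensor bulk integral --- being over the finite-volume region --- decays faster than any polynomial in $v_k$ once $\sigma$ is taken large. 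This yields \eqref{PolyLowBound}, and \eqref{NotL2} follows because $-\Delta e_4$ and $e_4$ differ by the factor $-\Delta$, which decays exponentially along $\Sigma^-_{u_0}$, so finiteness of \eqref{NotL2} would force \eqref{PolyLowBound} to decay exponentially, a contradiction (cf.\ Remark~\ref{Rmk.NotL2}).

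I expect the main obstacle to be the error bookkeeping that couples the three parameters: the degeneracy $\alpha$, the hypersurface parameter $\sigma$, and the gap $\delta<1$ must be coordinated so that both the boundary error in the $T_{\CH}$-step and the interior error in the $-\Delta e_4$-step are \emph{strictly} beaten by the lower bound $(v_k)^{-(q+\delta)}$. Two structural facts make this feasible: the Killing property of $T_{\CH}$ eliminates interior error in the first instability step, while the finite spacetime volume to the future of $\gamma_\sigma$ makes the non-Killing multiplier's error super-polynomially small for large $\sigma$. Layered on top is the Kerr-only difficulty, absent in the Reissner--Nordström model, of commuting through the horizon-adapted fields $\Omega^i_{\Hp},\Omega^i_{\CH}$ rather than exact symmetries, which requires tracking their commutators with $\Box_g$ carefully near each horizon.
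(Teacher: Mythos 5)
Your proposal is correct and follows essentially the same route as the paper: the same three multiplier regimes (red-shift field near $\Hp$, an $e^{\lambda r}$-weighted field in the intermediate region, a $(-\Delta)^{-\alpha}$-degenerate weight near $\CH$), the same commutation with the horizon-adapted $\Omega^i_{\Hp},\Omega^i_{\CH}$ plus a Hardy inequality for the angular derivatives, the $T_{\CH}$-conservation law with a pigeonhole selection of $v_k$ to reach $\gamma_\sigma$, and the $-\Delta e_4$ multiplier (the paper's $L=-\tfrac{\Delta}{\rho^2}e_4$) to reach $\Sigma^-_{u_0}$, with the same $\sigma$-then-$\alpha$ parameter ordering to beat $(v_k)^{-(q+\delta)}$. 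The only cosmetic deviation is your one-sided comparison on $\gamma_\sigma$ via $-\Delta e_4=T_{\CH}+\Delta\partial_r+(r^2-r_-^2)\partial_t$, where the paper instead adds and subtracts $\tfrac{\Delta}{2\rho^2}B\,e_4$ to exhibit $T_{\CH}$ as a past-directed timelike field plus a positive multiple of $e_4$; both yield the same sign argument.
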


Note that the right hand side of \eqref{LowBound} is not manifestly non-negative, since $T_{\CH}$ is spacelike on the event horizon $\Hp$. 
\begin{remark}[Alternative formulation of Theorem \ref{MainThm}]
\label{RemarkToThm}
Assumption $ii)$ in Theorem \ref{MainThm} can be replaced by 
\begin{enumerate}[ii')]
\item The wave $\psi$ is axisymmetric {(i.e., $\rd_{\varphi}\psi=0$ everywhere in $\M$)} and there exists a $\delta \in [0,1)$ and a sequence $w_k \in \R$ with $w_k \to \infty$ for $k \to \infty$ such that
\begin{equation}
\label{AltLowBound}
w_k^{-(q + \delta)} \lesssim \int\limits_{\Hp \cap \{ v_+ \geq w_k \}} \T[\psi](T_{\CH}, \tilde{e}_4) \, \volr \;.
\end{equation}
\end{enumerate}
Note that under the assumption of axisymmetry the right hand side of \eqref{AltLowBound} is manifestly non-negative\footnote{This can be seen by noting that $T_{\mathcal H^+}$ is future directed and causal along $\mathcal H^+$ and that for axisymmetric $\psi$, the identity $\mathbb T[\psi](T_{\mathcal C\mathcal H^+},\tilde{e_4})=(r^2_-+a^2)\mathbb T[\psi](T_{\mathcal H^+},\tilde{e_4})$ holds on $\mathcal H^+$.}. Making use of this non-negativity\footnote{See the proof of Proposition \ref{PropLowBoundGamma}.} allows us to weaken the lower bound \eqref{LowBound} {(which holds for every $\V\geq 1$)} to a lower bound only along a sequence $w_k \to \infty$.  
\end{remark}

\begin{remark}[Yet another formulation of Theorem \ref{MainThm}]
By pulling out the weight from under the integral we see that $i)$ follows from
\begin{equation*}
\int_{\Hp \cap \{ v_+ \geq 1\}} (v_+)^q \T[\psi](N,\tilde{e}_4)  \,  \volr <  \infty \;.
\end{equation*}
Moreover, for $w_k:=2^{i_k}$, where $i_k\to \infty$ as $k\to \infty$, $ii')$ follows from:
\begin{equation*}
\int_{\Hp \cap \{ v_+ \geq 1 \}} (v_+)^{q+\delta - \varepsilon} \T[\psi](T_{\CH}, \tilde{e}_4) \, \volr = \infty \qquad \textnormal{ holds for some } \varepsilon> 0\;.
\end{equation*}
This is easily seen by contradiction: Assume that for all $b> 0$ there exists a $k_0 \in \N$ such that for all $k >  k_0$ we have 
\begin{equation*}
\int_{\Hp \cap \{ 2^k \leq v_+ \leq 2^{k+1}\}}\T[\psi](T_{\CH}, \tilde{e}_4) \, \volr <  b \cdot (2^k)^{-(q + \delta)}\;.
\end{equation*} 
It then follows that 
\begin{equation*}
\int_{\Hp \cap \{ 2^k \leq v_+ \leq 2^{k+1}\}} (v_+)^{q + \delta - \varepsilon} \T[\psi](T_{\CH}, \tilde{e}_4) \, \volr <  2^{q + \delta - \varepsilon} b \cdot (2^k)^{-\varepsilon}
\end{equation*}
holds for all $k >  k_0$. Summing over $k$ then gives the contradiction $\int_{\Hp \cap \{ v_+ \geq 1 \}} (v_+)^{q+\delta - \varepsilon} \T[\psi](T_{\CH}, \tilde{e}_4) \, \volr <  \infty$. 
\end{remark}

\begin{remark}[Proof of \eqref{NotL2} from \eqref{PolyLowBound}]\label{Rmk.NotL2}
Recall that $r^*(r)$ satisfies 
\begin{equation*}
\frac{dr^*}{dr} = \frac{r^2 + a^2}{\Delta} = \frac{r^2 + a^2}{(r-r_+)(r-r_-)} = \frac{r_-^2 + a^2}{(r_- - r_+)(r-r_-)} + \mathcal{O}(1) \qquad \textnormal{ for } r \searrow r_- \;.
\end{equation*}
Hence, for any $r_0 \in (r_-, r_+)$ there exists a $C> 0$ such that
\begin{equation*}
\frac{1}{2\kappa_-} \log (r-r_-) + C \geq r^* \geq \frac{1}{2\kappa_-} \log (r-r_-) - C \;,
\end{equation*}
where we have introduced the surface gravity $\kappa_- = \frac{r_- - r_+}{2(r_-^2 + a^2)}$ of the Cauchy horizon. Thus, in $(r_- , r_0)$ we have
\begin{equation}
\label{RCH}
r-r_- \lesssim e^{2 \kappa_- r^*} = e^{\kappa_-(v_+ + v_-)} \lesssim r-r_- \;.
\end{equation}

We now prove \eqref{NotL2} by contradiction, that is we assume
\begin{equation*}
\int\limits_{\Sigma^-_{u_0} \cap \{v_+ \geq 1\}} \T[\psi]\big( e_4,(-df^-)^\sharp\big) \, \volsn \leq C \;.
\end{equation*}
It then follows from \eqref{RCH} and the relation $v_- - r + r_- = u_0$ along $\Sigma^-_{u_0}$ that 
\begin{equation*}
\int\limits_{\Sigma^-_{u_0} \cap \{v_+ \geq \V\}} -\Delta \cdot \T[\psi]\big( e_4,(-df^-)^\sharp\big) \, \volsn \lesssim \int\limits_{\Sigma^-_{u_0} \cap \{v_+ \geq \V\}} e^{\kappa_- v_+} e^{\kappa_-(u_0 + r)} \cdot \T[\psi]\big( e_4,(-df^-)^\sharp\big) \, \volsn \lesssim e^{\kappa_- v_+}   \;.
\end{equation*}
This, however, contradicts \eqref{PolyLowBound}. 
In order to prove Theorem \ref{MainThm} it thus suffices to prove \eqref{PolyLowBound}.
\end{remark}

\begin{remark}[Blow up of non-degenerate energy]\label{rmk.energy}
{Let $\Sigma$ be a smooth spacelike hypersurface that intersects the Cauchy horizon transversally. Suppose $\Sigma$ is given by a defining function $f:\M\to \mathbb R$ (i.e., $\Sigma=f^{-1}(0)$) such that $(-df)^{\sharp}$ is a future-directed timelike vector field. Define the non-degenerate energy by 
$$\int_{\Sigma} \T(e_3+e_4, (-df)^{\sharp})\,\vol_f,$$
where the volume form $\vol_f$ is defined such that $\vol=df\wedge \vol_f$. \eqref{NotL2} implies\footnote{Notice that $
\int\limits_{\Sigma^-_{u_0} \cap \{v_+ \geq 1\}} \T[\psi]\big( e_3, (-df^-)^\sharp\big) \, \volsn > 0 \;.
$} that the non-degenerate energy on $\Sigma^-_{u_0}\cap\{v_+\geq 1\}$ is infinite for every $u_0\in \mathbb R$. 

Moreover, by proving energy estimates {\bf locally near the Cauchy horizon}, it can be shown that the non-degenerate energy is infinite on {\bf any} smooth spacelike hypersurface intersecting the Cauchy horizon transversally, as is claimed in Theorem \ref{main.theorem.intro}.}
\end{remark}

\begin{remark}[Constructing solutions to the wave equation which satisfy the assumptions of Theorem \ref{MainThm}]
It is a standard fact that the linear wave equation is well-posed towards the future with smooth data imposed on the event horizon $\mathcal H^+\cap \{v_+\geq 1\}$ and spacelike hypersurface $\Sigma^+_1$. Since the equation is linear, the solution exists and remains smooth in $\{v_+\geq 1\}\cap\{r\in (r_-,r_+]\}$. Note that the data on $\Hp \cap \{v_+ \geq 1\}$ can be prescribed such that the assumptions $i) - iii)$ of Theorem \ref{MainThm} are satisfied. Moreover, the proof of Theorem \ref{MainThm} will show that the theorem also holds  if $\psi$ is only assumed to be a smooth solution in the smaller set $\{v_+\geq 1\}\cap\{r\in (r_-,r_+]\}$. Hence, there exists a large class of solutions to the wave equation in the interior of the Kerr black hole which are initially regular but have infinite non-degenerate energy near the Cauchy horizon.
\end{remark}

\section{Stability estimates}\label{sec.stability}

\subsection{Integrated energy decay for first derivatives}\label{sec.ILED.1st}

The results of this section depend only on the first assumption of Theorem \ref{MainThm}:
\begin{assumption}
\label{Assumption1}
Let $\psi : \mathcal{M}\cup \Hp \to \R$ be a smooth solution of the wave equation $\Box_g \psi = 0$ and assume that there exists a $q> 0$ such that for all $\V \geq 1$
\begin{equation*}
\int\limits_{\Hp \cap \{v_+ \geq \V\}} \T[\psi](\tilde{N},\tilde{e}_4)  \,  \volr \lesssim \V^{-q} \;,
\end{equation*} 
where $\tilde{N}$ is a future directed timelike vector field that satisfies $[\tilde{N},T_{\Hp}] = 0$.
\end{assumption}


We prove:
\begin{proposition}
\label{PropFirstDerILEDDecay}
Under Assumption \ref{Assumption1} the following holds: Let $\sigma > 0$ and $\V$ be large enough\footnote{This is just to ensure that $f_{\gamma_\sigma}$ is defined in the region $\{f^+ \geq \V\}$}. Then for any $\alpha \in {[}0,1)$ we have 
\begin{equation*}
\begin{split}
\int\limits_{\substack{\{f^+ \geq \V\} \cap \{f_{\gamma_\sigma} \leq 1\} \\ \cap \{ r\geq \rblue\}} } &\Big( (\tilde{e}_4 \psi)^2 + (\tilde{e}_3 \psi)^2 + (e_2 \psi)^2 + (e_1 \psi)^2 \Big) \, \vol \\
&+ \int\limits_{\substack{\{f^+ \geq \V\} \cap \{f_{\gamma_\sigma} \leq 1\} \\ \cap \{ r\leq \rblue\} }} \Big( \frac{1}{(-\Delta)^\alpha} \big[ \Delta^2 (e_4 \psi)^2 + (e_3 \psi)^2\big] + (e_2 \psi)^2 + (e_1 \psi)^2 \Big)\, \vol  \lesssim \V^{-q}\;,
\end{split}
\end{equation*}
for some $\rblue \in (r_-, r_+)$.
\end{proposition}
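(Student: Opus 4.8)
The plan is to prove Proposition \ref{PropFirstDerILEDDecay} by applying the energy identity \eqref{EnergyEst} with a carefully chosen family of multiplier vector fields $Z$, following the strategy sketched in the Reissner--Nordström case (Proposition \ref{RN.stab}, Steps One and Two). The key is that the region $\{f^+\geq \V\}\cap\{f_{\gamma_\sigma}\leq 1\}$ should be decomposed along a level set $\{r=r_0\}$ for some intermediate $r_0\in(r_-,r_+)$, and a different multiplier is used on each side. I would first work in $\{r\geq \rblue\}$ (the ``blue-shift'' region near the event horizon where I want non-degenerate control) and then in $\{r\leq \rblue\}$ (the region extending toward the Cauchy horizon where a degenerate weight $\tfrac{1}{(-\Delta)^\alpha}$ must be tolerated).

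\emph{First step.} In the region $\{f^+\geq\V\}\cap\{r\geq\rblue\}$, I would apply \eqref{EnergyEst} with a multiplier adapted from the Reissner--Nordström choice $\frac{1}{(-\Delta)}\rd_{v_-}+(1-\eta^{-1}(-\Delta))\rd_{v_+}$ very near $r=r_+$, transitioning to $e^{\lambda r}(\tilde e_3+\tilde e_4)$-type fields in the interior, here expressed in the principal frame using \eqref{CoordVecInFrame}. The deformation-tensor bulk term $\T[\psi]_{\mu\nu}\nabla^\mu Z^\nu$ must be shown to have a favorable sign (coercive in all four frame derivatives $\tilde e_4\psi$, $\tilde e_3\psi$, $e_2\psi$, $e_1\psi$) after choosing the parameters $\lambda$ and $\eta^{-1}$ large; this is where the covariant derivatives \eqref{CovDer} enter, since $\nabla^\mu Z^\nu$ is computed from them. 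The boundary $\partial D$ consists of $\Hp\cap\{v_+\geq\V\}$, where the flux is controlled by Assumption \ref{Assumption1} and gives the $\V^{-q}$ bound, an incoming slice $\{f^+=\V\}$, the slice $\{r=\rblue\}$, and a piece of $\gamma_\sigma$. The incoming flux term on $\{f^+=\V\}$ is the analogue of $II[\V]$ in \eqref{RN.step1} and must be absorbed via a pigeonhole/mean-value argument over a dyadic sequence (as in ``Section \ref{SecPutTogether}'' of the Reissner--Nordström sketch) to obtain $\V^{-q}$ decay rather than merely boundedness. This simultaneously yields the desired bound on the first integral and an $\V^{-q}$ bound for the flux through $\{r=\rblue\}\cap\{v_+\geq\V\}$, the analogue of \eqref{RN.bound.on.r0}.

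\emph{Second step.} Using the flux bound on $\{r=\rblue\}$ as input, I would then run \eqref{EnergyEst} in $\{f^+\geq\V\}\cap\{f_{\gamma_\sigma}\leq 1\}\cap\{r\leq\rblue\}$ with a multiplier of the form $(1+\eta^{-1}(-\Delta)^{1-\alpha})(\tilde e_3+\tilde e_4)$ very near $r=r_-$, again transitioning to an exponential-weight field in the interior. The degenerate weight $(-\Delta)^{1-\alpha}$ is precisely what is needed so that the bulk term controls $\frac{1}{(-\Delta)^\alpha}[\Delta^2(e_4\psi)^2+(e_3\psi)^2]+(e_2\psi)^2+(e_1\psi)^2$ with the correct signs; here $e_3,e_4$ (rather than the degenerate $\tilde e_3,\tilde e_4$) are regular at $\CH$, consistent with the statement. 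The crucial geometric input is that the future of $\gamma_\sigma$ to the past of the Cauchy horizon has \emph{finite spacetime volume} (the property of $\gamma_\sigma$ emphasized after \eqref{gamma.sigma.RN.def}), which is what renders the $\gamma_\sigma$-boundary term and the error bulk terms controllable; one then uses the inclusion of $\{f^+\geq\V\}\cap\{f_{\gamma_\sigma}\leq1\}\cap\{r\leq\rblue\}$ inside a region $\{f^+\geq v_1\}\cap\{f^-\leq u_1\}$ with $\V\lesssim v_1$, exactly as in \eqref{RN.step2}, to transfer the $\V^{-q}$ rate.

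\emph{Main obstacle.} The principal difficulty — and the essential departure from Reissner--Nordström — is establishing the coercivity of the bulk terms $\T[\psi]_{\mu\nu}\nabla^\mu Z^\nu$ in the genuinely non-spherically-symmetric Kerr geometry. In Reissner--Nordström the radial multipliers produce bulk terms that are diagonal and manifestly signed, but in Kerr the connection coefficients \eqref{CovDer} contain off-diagonal $a\Co$-type cross terms coupling $e_1,e_2$ with $e_3,e_4$, so the deformation tensor of any $r$-dependent multiplier built from $e_3,e_4$ will generate indefinite cross terms. I expect the resolution to be a Cauchy--Schwarz/absorption argument in which the large parameters $\lambda$ and $\eta^{-1}$ are tuned so that the good diagonal contributions dominate these cross terms uniformly on the (compact in $r$) region; near the horizons the specific behavior of $\Delta$ and the regular frames $\tilde e_3,\tilde e_4$ must be exploited so that the degeneracies match the weights in the statement. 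Verifying this positivity, carefully tracking the $(-\Delta)$-powers through the frame changes \eqref{CoordVecInFrame} and the choice of transition function between the near-horizon and interior multipliers, is the technical heart of the argument; the boundary-term and pigeonhole bookkeeping is then routine given the finite-volume property of $\gamma_\sigma$.
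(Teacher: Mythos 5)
Your proposal follows essentially the same route as the paper's proof: a red-shift multiplier near $\Hp$ (the paper takes the vector field $N$ of Dafermos--Rodnianski satisfying \eqref{RedLB}), the exponential multiplier $\tilde{w}_\lambda X$ with $X=2(dr)^\sharp$ in the intermediate region, a degenerately weighted radial multiplier near $\CH$ (the paper's $w_\alpha X$ with $w_\alpha'=(-\Delta)^{-\alpha}$), the dyadic pigeonhole argument (Lemma \ref{DecayLemma}) to upgrade the flux through $\Sigma^+_{(\cdot)}$ to $\V^{-q}$ decay, and the inclusion of $\{f^+\geq\V\}\cap\{f_{\gamma_\sigma}\leq1\}\cap\{r\leq\rblue\}$ in $\{f^+\geq v_1\}\cap\{f^-\leq u_1\}$ with $\V\lesssim v_1$; the coercivity of the bulk terms is indeed obtained exactly as you anticipate, by Cauchy--Schwarz against the positive diagonal terms coming from $\partial_r(-\Delta/\rho^2)|_{r=r_-}>0$ and the large parameters $\lambda$, $(-\Delta)^{-\alpha}$. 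Two small corrections: the near-$\CH$ multiplier must be built from the $\CH$-regular combination $X=e_3-\frac{\Delta}{\rho^2}e_4$ rather than from $\tilde e_3+\tilde e_4$ (the latter blows up like $(-\Delta)^{-1}$ at $\CH$, which would effectively replace $\alpha$ by $\alpha+1\geq 1$ and the estimate would not close), and the finite spacetime volume of the future of $\gamma_\sigma$ plays no role in this proposition --- it is only relevant in Section \ref{sec.instability.2}; what is actually used here is precisely the region inclusion you state.
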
 

\begin{proposition}
\label{PropFirstDerILED}
Under Assumption \ref{Assumption1} the following holds: Let $\alpha \in {[}0,1)$ and $u_1 \in \R$ There then exists a constant $C > 0$  such that
\begin{equation}
\label{ILEDFirst}
\int\limits_{\{ r\leq \rblue\} \cap \{f^- \leq u_1\}} {\Big(}\frac{1}{(-\Delta)^\alpha} \Big[ \Delta^2 {(e_4}\psi{)}^2 + {(e_3}\psi{)}^2\Big] + \big( {(e_2}\psi{)}^2  + {(e_1}\psi{)}^2 \big) \Big)\, \vol \leq C \;
\end{equation}
{for $\rblue \in (r_-, r_+)$ as in Proposition \ref{PropFirstDerILEDDecay}.}
\end{proposition}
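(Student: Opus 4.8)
The plan is to prove Proposition \ref{PropFirstDerILED} by carrying out an energy estimate \eqref{EnergyEst} in the region to the future of $\{r = \rblue\}$ (cut off at $\{f^- \leq u_1\}$), with a multiplier $Z$ adapted to the Cauchy-horizon end, and then feeding into this estimate the decay already extracted on $\{r = \rblue\}$ by Proposition \ref{PropFirstDerILEDDecay}. More precisely, following the Reissner--Nordstr\"om sketch in Step Two of Proposition \ref{RN.stab}, I would take $Z$ to be a radial-type vector field of the schematic form $(1 + \eta^{-1} (-\Delta)^{1-\alpha}) (e_3 + e_4)$ very close to the Cauchy horizon $\{r = r_-\}$, interpolated with a purely exponential field $e^{\lambda r}(e_3 + e_4)$ in the interior away from $r_-$, where $\lambda$ and $\eta^{-1}$ are chosen suitably large. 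The key structural point is that because $-\Delta > 0$ in $\mathcal{M}$ and decays (exponentially in $r^*$) towards the Cauchy horizon, such a $Z$ produces a bulk term $\T[\psi]_{\mu\nu}\nabla^\mu Z^\nu$ that is, after choosing the parameters, nonnegative and bounds precisely the weighted quantity $\tfrac{1}{(-\Delta)^\alpha}[\Delta^2 (e_4\psi)^2 + (e_3\psi)^2] + (e_2\psi)^2 + (e_1\psi)^2$ appearing in \eqref{ILEDFirst}.

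I would organize the argument as follows. First, I would compute the deformation tensor $\pi(Z)$, or equivalently the components of $\T[\psi]_{\mu\nu}\nabla^\mu Z^\nu$ in the principal null frame, using the covariant derivative formulas \eqref{CovDer}; here the crucial algebra is to verify that the coefficient multiplying each of $(e_4\psi)^2$, $(e_3\psi)^2$, $(e_1\psi)^2$, $(e_2\psi)^2$ in the bulk carries a good (nonnegative) sign after the parameters $\lambda, \eta^{-1}$ are taken large, and that the weight $(-\Delta)^{-\alpha}$ emerges correctly from the $(-\Delta)^{1-\alpha}$ factor in $Z$. Second, I would identify the boundary contributions: the future boundary near $\{r = r_-\}$ and the lateral boundary $\{f^- = u_1\}$ should contribute terms with a favourable sign (or be discardable), so that the only boundary term that needs to be retained on the right-hand side is the flux through $\{r = \rblue\}$. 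Third, I would invoke \eqref{RN.bound.on.r0}'s Kerr analogue --- namely the flux bound on $\{r = \rblue\}$ that is produced as a byproduct of Proposition \ref{PropFirstDerILEDDecay} (the integrated estimate there controls, via a pigeonhole/trace argument, the energy on the hypersurface $\{r = \rblue\}$) --- to bound this remaining boundary term by a constant $C$, since we are not tracking $\V$-decay in \eqref{ILEDFirst} but only finiteness up to $\{f^- \leq u_1\}$.

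Concretely, the estimate I would aim to establish has the form
\begin{equation*}
\int\limits_{\{r \leq \rblue\} \cap \{f^- \leq u_1\}} \Big( \tfrac{1}{(-\Delta)^\alpha}\big[\Delta^2 (e_4\psi)^2 + (e_3\psi)^2\big] + (e_2\psi)^2 + (e_1\psi)^2 \Big)\, \vol \lesssim \int\limits_{\{r = \rblue\} \cap \{f^- \leq u_1\}} \big((e_3\psi)^2 + (e_4\psi)^2 + (e_1\psi)^2 + (e_2\psi)^2\big)\, \volr \leq C,
\end{equation*}
where the final inequality is the $\{r = \rblue\}$ flux bound from Proposition \ref{PropFirstDerILEDDecay}. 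The parameter $\alpha < 1$ is exactly what allows the weight $(-\Delta)^{1-\alpha}$ in $Z$ to dominate the dangerous near-horizon terms while keeping $Z$ of bounded length, and this is the reason the hypothesis $\alpha \in [0,1)$ is imposed.

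The main obstacle I expect is the positivity of the bulk term near the Cauchy horizon. Unlike in Reissner--Nordstr\"om, the Kerr covariant derivatives in \eqref{CovDer} contain numerous off-diagonal terms (with factors of $a\Co$, $ar\Si$, etc.) coupling $e_1, e_2$ to $e_3, e_4$, so after expanding $\T[\psi]_{\mu\nu}\nabla^\mu Z^\nu$ one obtains cross terms that are not manifestly signed. Controlling these requires a careful Cauchy--Schwarz/absorption argument: one absorbs the cross terms into the positive diagonal contributions at the cost of enlarging $\lambda$ and $\eta^{-1}$, and one must check that the $(-\Delta)$-weights of the cross terms are favourable (i.e., they are lower order in $(-\Delta)$ relative to the terms absorbing them) as $r \searrow r_-$. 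Verifying that this absorption closes uniformly down to the Cauchy horizon --- which is precisely where the Killing field $\partial_t$ fails to be timelike and naive energy estimates degenerate --- is the delicate heart of the proof; everything else is a matter of bookkeeping the frame computations and applying Proposition \ref{PropFirstDerILEDDecay}.
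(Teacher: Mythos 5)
Your overall architecture coincides with the paper's: an energy identity in the blue\nobreakdash-shift region $\{r\leq \rblue\}\cap\{f^-\leq u_1\}$ with a radially weighted multiplier whose weight has derivative $\sim(-\Delta)^{-\alpha}$, only the flux through $\Sigma_{\rblue}$ surviving on the right-hand side, and that flux bounded by the estimate (cf.\ \eqref{DecayUpToRBlue}) obtained in the course of proving Proposition \ref{PropFirstDerILEDDecay} via Lemma \ref{DecayLemma}. However, the specific multiplier you propose, $(1+\eta^{-1}(-\Delta)^{1-\alpha})(e_3+e_4)$, fails exactly at the step you yourself single out as the delicate heart, namely positivity of the bulk near $\CH$. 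The issue is that $e_3+e_4$ is \emph{not} the Kerr analogue of $\rd_{v_+}+\rd_{v_-}$ from the Reissner--Nordstr\"om sketch: there one has $\rd_{v_+}+\rd_{v_-}=\frac{\Delta_{RN}}{r^2}\rd_r$, a field that \emph{degenerates to the null generator} at the Cauchy horizon, and its Kerr analogue is $X=\underline{L}+L=e_3-\frac{\Delta}{\rho^2}e_4=2(dr)^\sharp$, which is what the paper uses (weighted by $w_\alpha(r)=\int_{r_-}^r(-\Delta(r'))^{-\alpha}\,dr'$). Your $e_3+e_4$ stays uniformly timelike up to $\CH$, and for such a multiplier the bulk cannot be made nonnegative: by \eqref{CovDer}, $\nabla_{e_3}e_4$ contains the term $-\rd_r\big(\frac{\Delta}{\rho^2}\big)\,e_4$, so $\T[\psi]_{\mu\nu}\nabla^\mu e_4^{\nu}$ contains $\frac12\rd_r\big(\frac{\Delta}{\rho^2}\big)(e_4\psi)^2$, which at $r=r_-$ equals $\frac{r_--r_+}{2\rho^2}(e_4\psi)^2<0$ with \emph{no} $\Delta$-degeneration --- this is precisely the blue-shift/surface-gravity term --- whereas the only competing positive $(e_4\psi)^2$ contribution, coming from $2h'(r)\,\T[\psi]\big((dr)^\sharp,e_3+e_4\big)$, carries the coefficient $h'\cdot\frac{-\Delta}{\rho^2}\sim\eta^{-1}(-\Delta)^{1-\alpha}$, which vanishes at $\CH$ for every $\alpha<1$ no matter how large $\eta^{-1}$ is. Hence the bulk is strictly negative in a neighbourhood of $\CH$ and the absorption does not close.

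There is also a structural reason why no choice of parameters can rescue your ansatz: a future-directed multiplier that is uniformly timelike up to $\CH$ with nonnegative bulk and controlled flux on $\Sigma_{\rblue}$ would, by the very identity \eqref{EnergyEst} you invoke, bound the non-degenerate flux $\T[\psi](e_4,(-df^-)^\sharp)$ on $\Sigma^-_{u_1}$, contradicting \eqref{NotL2}. The multiplier must therefore itself degenerate at $\CH$: replacing $e_3+e_4$ by $e_3-\frac{\Delta}{\rho^2}e_4$ (so that the $e_4$-component carries a factor of $\Delta$, consistent with the $\Delta^2(e_4\psi)^2$ weight in your own target quantity) removes the offending non-degenerate $(e_4\psi)^2$ term; the dangerous remnant is then the cross term $w_\alpha\frac{4r\Delta}{\rho^4}\psi_3\psi_4$ visible in \eqref{BulkX}, which is absorbable precisely because it carries a full factor of $\Delta$. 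With that correction --- which is the paper's choice, leading to \eqref{MultiplierBlue} --- the remaining steps of your plan (signs of the boundary terms on $\Sigma^-_{u_1}$ and at $\CH$, Cauchy--Schwarz on the off-diagonal terms of \eqref{CovDer}, and the $\Sigma_{\rblue}$ flux input) go through essentially as you describe.
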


{We depict in Figure \ref{FigureStabProps} the regions of spacetime under consideration. The darker shaded region depicts the region of integration in Proposition \ref{PropFirstDerILEDDecay}, which is to the future of $\Sigma^+_V$ and to the past of $\gamma_\sigma$. This region is further divided into two: the future and past of $\{r=r_{blue}\}$, where the weights in the integrated energy decay estimates are different.}\footnote{We remind the readers that for $r_{blue}\in (r_-,r_+)$, $\tilde{e}_4$ and $\tilde{e}_3$ are the regular vector fields in the region to the past of $\{r=r_{blue}\}$ while $e_4$ and $e_3$ are the regular vector fields to the future of $\{r=r_{blue}\}$.} {The lightly shaded region, on the other hand, is the region of integration in Proposition \ref{PropFirstDerILED}. The estimate in Proposition \ref{PropFirstDerILED} is of course most useful to the future of $\gamma_\sigma$, i.e., for $f_{\gamma_\sigma}>  1$, for otherwise Proposition \ref{PropFirstDerILEDDecay} provides a stronger bound}\footnote{at least when $V\geq 1$ is sufficiently large.}{. Notice that in the region $\{f_{\gamma_{\sigma}}> 1\}$, since $(-\Delta)$ is sufficiently small, for later applications, we only need to show that the left hand side of \eqref{ILEDFirst} is bounded.}\footnote{Indeed, it holds that the left hand side of \eqref{ILEDFirst} in Proposition \ref{PropFirstDerILED} decays like $(-u_1)^{-q}$ for $u_1 \to -\infty$. This stronger statement, however, is not needed in what follows.}
\begin{figure}[h]
  \centering
  \def\svgwidth{12cm}
    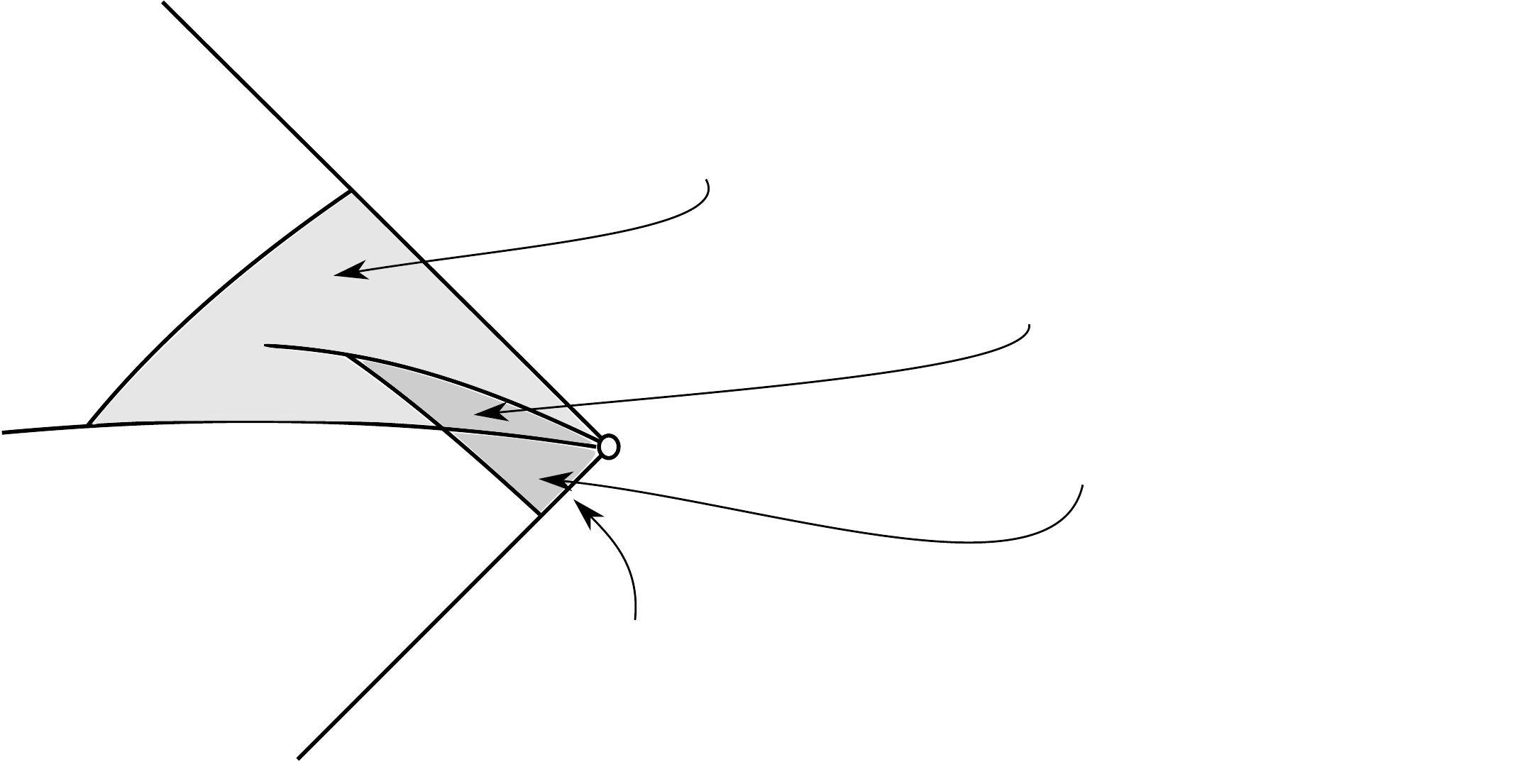
      \caption{The spacetime regions under consideration in Propositions \ref{PropFirstDerILEDDecay} and \ref{PropFirstDerILED}} \label{FigureStabProps}
\end{figure}

{The proof of Propositions \ref{PropFirstDerILEDDecay} and \ref{PropFirstDerILED} will be carried out using the energy estimate \eqref{EnergyEst} with appropriate choices of multipliers $Z$. We will choose three different multipliers in three regions of spacetime - the multipliers will be defined and discussed in Sections \ref{MultiRed}, \ref{MultiBlue} and \ref{MultiInter}. We then conclude this subsection and prove Propositions \ref{PropFirstDerILEDDecay} and \ref{PropFirstDerILED} by combining the energy estimates obtained using these multipliers. }

\subsubsection{Multiplier in the red-shift region}
\label{MultiRed}

The vector field $T_{\Hp} =  \partial_t + \frac{a}{r_+^2 + a^2} \partial_\varphi$ is Killing and orthogonal to the null hypersurface $\Hp$. Moreover, on the event horizon we have
\begin{equation*}
\nabla_{T_{\Hp}} T_{\Hp} = \kappa_+ T_{\Hp} \;,
\end{equation*}
where $\kappa_+ = \frac{r_+ - r_-}{2(r_+^2 + a^2)} > 0$ is the surface gravity. On the event horizon we have $T_{\Hp} = \frac{1}{2(r_+^2 + a^2)} \tilde{e}_4$.

Section 7 in \cite{DafRod08} shows that one can choose an $r_{\mathrm{red}} \in (r_-, r_+)$ and a constant $\kappa> 0$ (depending on $\kappa_+$ and going to zero for $\kappa_+ \to 0$) such that one can construct a {future directed} timelike vector field $N$ that is invariant under the flow of $T_{\Hp}$, i.e., $[N,T_{\Hp}] = 0$, and that satisfies in the region $r \in [\rred, r_+]$
\begin{equation}\label{RedLB}
\T(\psi)_{\mu \nu} \pi({N})^{\mu \nu} \geq \kappa \Big(\big(\tilde{e}_3 \psi\big)^2 + \big(\tilde{e}_4 \psi\big)^2 + \big(e_1 \psi\big)^2 + \big(e_2 \psi\big)^2\Big)\;.
\end{equation}

\subsubsection{Multiplier in the blue-shift region}
\label{MultiBlue}

We define
\begin{equation*}
L:= -\frac{\Delta}{\rho^2} \,e_4 \qquad \textnormal{ and } \qquad \underline{L} := e_3 \;.
\end{equation*}
Let us also write $e_\mu \odot e_\nu := \frac{1}{2}(e_\mu \otimes e_\nu + e_\nu \otimes e_\mu)$
Using \eqref{CovDer} we find that the deformation tensors of $L$ and $\underline{L}$ are given by
\begin{align}
\pi(\underline{L}) &= \frac{2r \Delta}{\rho^4} \,e_1 \odot e_1 + \frac{2 r\Delta}{\rho^4} \, e_2 \odot e_2 +\frac{4a^2\Si\Co}{\rho^3} \, e_1 \odot e_3 + \frac{4ar\Si}{\rho^3} \, e_2 \odot e_3 - \partial_r \Big(\frac{\Delta}{\rho^2}\Big) \,e_4 \odot e_3 \;{,}\notag \\
\pi(L) &= \frac{2 r \Delta}{\rho^4} \, e_1 \odot e_1 + \frac{2 r \Delta}{\rho^4} \, e_2 \odot e_2 - \frac{4 \Delta a^2 \Si\Co}{\rho^5} \, e_1 \odot e_4 + \frac{4\Delta a r\Si}{\rho^5} \, e_2 \odot e_4 - \partial_r \Big(\frac{\Delta}{\rho^2}\Big) \, e_3 \odot e_4 \;. \label{DefL}
\end{align}

Consider now the vector field 
\begin{equation*}
X:= L + \underline{L} = 2\frac{\Delta}{\rho^2} \partial_r = 2(dr)^\sharp \;.
\end{equation*}
Introducing the shorthand $\psi_i := e_i (\psi)$ for $i= 1, \ldots, 4$, we obtain
\begin{equation}
\label{BulkX}
\T[\psi]_{\mu \nu} \pi(X)^{\mu \nu} = 2 \partial_r\Big( \frac{-\Delta}{\rho^2}\Big) [ \psi_1^2 + \psi_2^2] +  \frac{4a^2 \Si \Co}{\rho^3} \psi_1 \psi_3 + \frac{4ar\Si}{\rho^3} \psi_2 \psi_3 - \frac{4a^2 \Delta \Si \Co}{\rho^5} \psi_1 \psi_4 + \frac{4 \Delta a r \Si}{\rho^5} \psi_2 \psi_4 + \frac{4r\Delta}{\rho^4}\psi_3 \psi_4 \;.
\end{equation}
Note here that
\begin{equation}
\label{ControlAngular}
\partial_r\Big( \frac{-\Delta}{\rho^2}\Big)\Big|_{r=r_-} = \frac{r_+ - r_-}{r_-^2 + a^2 \Co^2} > 0\;,
\end{equation}
and thus the first term on the right hand side of \eqref{BulkX} is positive for $r$ close enough to $r_-$.

We now introduce the function
\begin{equation*}
w_\alpha(r) = \int\limits_{r_-}^r \frac{1}{\big(-\Delta(r')\big)^\alpha} \, dr' \;,
\end{equation*}
where $\alpha \in {[}0,1)$ and $r \in [r_-, r_+]$. Note that since $\alpha < 1$ the function $w_\alpha$ is bounded.

We compute
\begin{equation*}
\begin{split}
\T[\psi]_{\mu \nu} \pi(w_\alpha X)^{\mu \nu} &= w_\alpha \T[\psi]_{\mu \nu} \pi(X)^{\mu \nu} + 2\T[\psi] \big((dw_\alpha)^\sharp, X\big) \\
&= w_\alpha \T[\psi]_{\mu \nu} \pi(X)^{\mu \nu} + \frac{2}{(-\Delta)^\alpha} \T[\psi] \big((dr)^\sharp, X\big) \\
&= w_\alpha \T[\psi]_{\mu \nu} \pi(X)^{\mu \nu} + \frac{1}{(-\Delta)^\alpha} \T[\psi] \big(L + \underline{L}, L + \underline{L}\big) \\
&=  w_\alpha \T[\psi]_{\mu \nu} \pi(X)^{\mu \nu}  + \frac{1}{(-\Delta)^\alpha} \Big[ \frac{\Delta^2}{\rho^4} \psi_4^2 + \psi_3^2 + \frac{-2\Delta}{\rho^2} (\psi_1^2 + \psi_2^2)\Big]{.}
\end{split}
\end{equation*}
It now follows from \eqref{BulkX}, \eqref{ControlAngular} and Cauchy{--}Schwarz that there exists an $\rblue \in (r_-, r_+)$ (depending on $\alpha$) such that the following holds in $\R \times (r_-, \rblue) \times \mathbb{S}^2 \subset M$:
\begin{equation}
\label{MultiplierBlue}
\T[\psi]_{\mu \nu} \pi(w_\alpha X)^{\mu \nu} \gtrsim \frac{1}{(-\Delta)^\alpha} \big[ \frac{\Delta^2}{\rho^4} \psi_4^2 + \psi_3^2 \big] + \big( \psi_1^2 + \psi_2^2{\big)}  \;.
\end{equation}

\subsubsection{Multiplier in the intermediate region}
\label{MultiInter}

Let $\tilde{w}_\lambda(r) = e^{\lambda r}$ and compute
\begin{equation*}
\begin{split}
\T[\psi]_{\mu \nu} \pi(\tilde{w}_\lambda X)^{\mu \nu} &= \tilde{w}_\lambda \T[\psi]_{\mu \nu} \pi(X)^{\mu \nu} + 2\lambda \tilde{w}_\lambda \T[\psi]\big((dr)^\sharp ,X\big) \\
&=  \tilde{w}_\lambda \T[\psi]_{\mu \nu} \pi (X)^{\mu \nu} + \lambda \tilde{w}_\lambda \Big[ \frac{\Delta^2}{\rho^4} \psi_4^2 + \psi_3^2 + \frac{-2\Delta}{\rho^2} (\psi_1^2 + \psi_2^2)\Big] \;.
\end{split}
\end{equation*}
Without loss of generality we have $\rblue <  \rred$. We can now choose $\lambda_0> 0$ big enough such that in the region $ r \in [\rblue, \rred]$ we have
\begin{equation}
\label{MultiplierIntermediate}
\T[\psi]_{\mu \nu} \pi(\tilde{w} X)^{\mu \nu} \gtrsim \psi_4^2 + \psi_3^2 + \psi_2^2 + \psi_1^2 \;,
\end{equation}
where we have set $\tilde{w} := \tilde{w}_{\lambda_0}$.

\subsubsection{Putting everything together}
\label{SecPutTogether}

Let $\sigma > 0$ be given and let $\alpha \in {[}0,1)$. Constructing the multiplier $w_\alpha X$ in the blue-shift region such that \eqref{MultiplierBlue} holds determines $\rblue \in (r_-, r_+)$.  
The energy estimate {\eqref{EnergyEst}} with multiplier $N$ in the region $\{r \geq \rred\} \cap \{v_1 \leq f^+ \leq v_2\}$, see also Figure \ref{FigEnEst}, yields
\begin{equation}
\label{EnergyEstimateRed}
\begin{split}
\int\limits_{\Sigma_{\rred} \cap \{v_1 \leq f^+ \leq v_2\}}  &\T[\psi](N,(dr)^\sharp) \, \volr        + \int\limits_{\Sigma^+_{v_2} \cap \{r \geq \rred\}} \T[\psi]({N}, (-df^+)^\sharp) \, \volsp       \\
&+ \int\limits_{\{r \geq \rred\} \cap \{v_1 \leq f^+ \leq v_2\}} \frac{1}{2}\T[\psi]_{\mu \nu} \pi({N})^{\mu \nu} \, \vol  \\
=& \int\limits_{\Sigma^+_{v_1} \cap \{r \geq \rred\}} \T[\psi]({N}, (-df^+)^\sharp) \, \volsp        + \int\limits_{\Hp \cap \{v_1 \leq v_+ \leq v_2 \}} \T[\psi]({N}, (dr)^\sharp) \, \volr \;.
\end{split}
\end{equation}

\begin{figure}[h]
  \centering
  \def\svgwidth{6cm}
    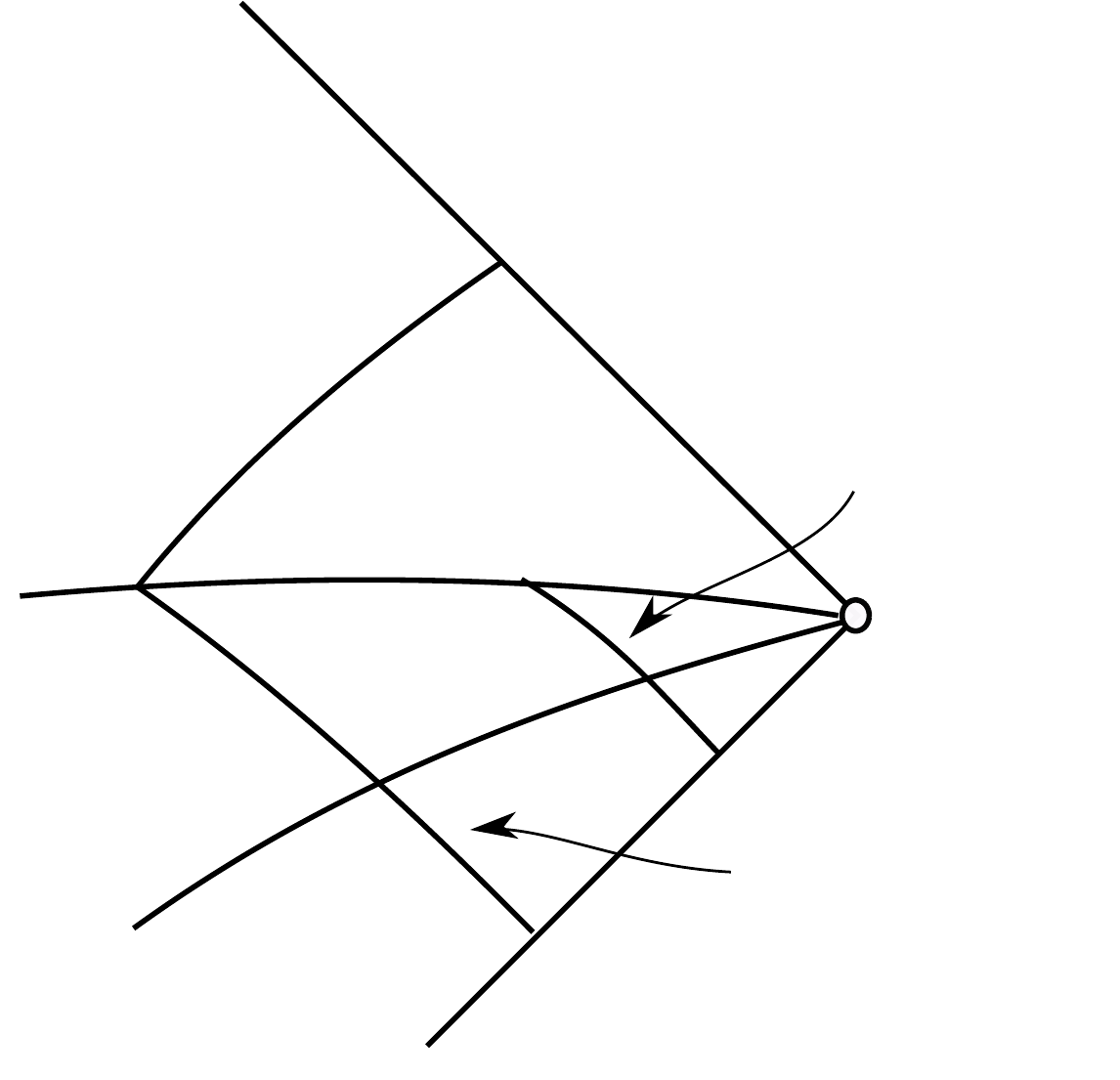
      \caption{Regions considered for the energy estimates in Section \ref{SecPutTogether}} \label{FigEnEst}
\end{figure}
We now note that since $[N, T_{\Hp}] = 0 = [\tilde{N},T_{\Hp}]$, the energy densities $\T[\psi](N, \tilde{e}_4)$ and $\T[\psi](\tilde{N}, \tilde{e}_4)$ are comparable along $\Hp$. The same holds for the energy densities $\T[\psi](N,(dr)^\sharp )$ and $\T[\psi](\tilde{w} X, (dr)^\sharp)$ along $\Sigma_{\rred}$. The energy estimate with multiplier $\tilde{w} X$ in the region $\{\rblue \leq r \leq \rred\} \cap \{v_1 \leq f^+ \leq v_2\}$, together with \eqref{EnergyEstimateRed} and Assumption \ref{Assumption1}, yields
\begin{equation}
\label{UpToRBlue1}
\begin{split}
\int\limits_{\Sigma_{\rblue} \cap \{v_1 \leq f^+ \leq v_2\}} &\T[\psi](\tilde{w} X, (dr)^\sharp) \, \volr + \int\limits_{\Sigma^+_{v_2} \cap \{r \geq \rblue\}} \big[ (\tilde{e}_4 \psi)^2 + (\tilde{e}_3\psi)^2 + (e_{{2}}\psi)^2 + (e_1 \psi)^2 \big] \, \volsp  \\ &\qquad+ \int\limits_{\{r \geq \rblue\} \cap \{v_1 \leq f^+ \leq v_2\}} \big[ (\tilde{e}_4 \psi)^2 + (\tilde{e}_3\psi)^2 + (e_{{2}}\psi)^2 + (e_1 \psi)^2 \big] \, \vol \\
&\lesssim \int\limits_{\Sigma^+_{v_1} \cap \{r \geq \rblue\}} \big[ (\tilde{e}_4 \psi)^2 + (\tilde{e}_3\psi)^2 + (e_{{2}}\psi)^2 + (e_1 \psi)^2 \big] \, \volsp + (v_1)^{-q} \;.
\end{split}
\end{equation}

We need the following
\begin{lemma}
\label{DecayLemma}
Let $q > 0$ and let $h$ be a positive measurable function on $[1,\infty)$ that satisfies
\begin{equation}
\label{PropH}
h(t_2) + \int\limits_{t_1}^{t_2} h(t) \, dt \leq C\cdot \big( h(t_1) + t_1^{-q}\big) \qquad \textnormal{ for all } 1 \leq t_1 <  t_2 <  \infty \;,
\end{equation}
where $C> 0$ is a constant. It then follows that
\begin{equation}
h(t) \lesssim t^{-q} 
\end{equation}
{for every $t\geq 1$, with an implicit constant}\footnote{Indeed, it follows from the proof that the constant depends only on $C$, $q$ and $h(1)$.} {which is independent of $t$.}
\end{lemma}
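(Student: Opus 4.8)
The plan is to reduce the hypothesis to a recursion for the dyadic tail integrals of $h$ and then solve it. First I would record the two immediate consequences of \eqref{PropH}: dropping the nonnegative integral gives the pointwise bound $h(t_2) \le C\big(h(t_1) + t_1^{-q}\big)$, while dropping the nonnegative boundary term $h(t_2)$ and letting $t_2 \to \infty$ (monotone convergence) gives the tail bound $\int_{t_1}^\infty h\,dt \le C\big(h(t_1) + t_1^{-q}\big)$. In particular, since $h$ is finite-valued, the dyadic tails $T_n := \int_{2^n}^\infty h\,dt$ are all finite, with $T_0 < \infty$ controlled by $h(1)$.

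The core is a pigeonhole step on each dyadic interval $[2^n, 2^{n+1}]$. The average of $h$ there is at most $2^{-n}T_n$, so by Chebyshev there is a point $s_n \in [2^n, 2^{n+1}]$ with $h(s_n) \le 2^{-n}T_n$. Applying the tail bound at $t_1 = s_n$ and using $[2^{n+1},\infty) \subseteq [s_n,\infty)$ together with $s_n \ge 2^n$ yields the recursion
\[
T_{n+1} \le C\big(2^{-n}T_n + 2^{-nq}\big).
\]
Setting $U_n := 2^{nq}T_n$ turns this into $U_{n+1} \le C 2^q\big(2^{-n}U_n + 1\big)$. The essential point — and the step I expect to require the most care — is that one must exploit the \emph{shrinking} coefficient $2^{-n}$ rather than treating the recursion as a fixed contraction: a constant-coefficient Gronwall argument loses for $q \ge 1$, whereas here $C2^q 2^{-n} \le \tfrac12$ once $n \ge n_1$, so that $U_{n+1} \le \tfrac12 U_n + C2^q$ and hence $\{U_n\}_{n\ge n_1}$ is bounded by $\max\{U_{n_1}, 2C2^q\}$. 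Since $U_0,\dots,U_{n_1}$ are finite and depend only on $C$, $q$, $h(1)$, this gives $T_n \lesssim 2^{-nq}$.

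Finally I would pass from the tail bound to the claimed pointwise bound. For $t \ge 2$, choose $m$ with $2^{m+1} \le t < 2^{m+2}$; then $\int_{t/2}^t h\,dt \le T_m \lesssim 2^{-mq} \lesssim t^{-q}$, so averaging over $[t/2,t]$ produces a point $t_1 \in [t/2,t]$ with $h(t_1) \lesssim t^{-q-1} \le t^{-q}$ and $t_1^{-q} \le 2^q t^{-q}$; feeding this into the pointwise bound $h(t) \le C\big(h(t_1)+t_1^{-q}\big)$ gives $h(t) \lesssim t^{-q}$ with an implicit constant depending only on $C$, $q$, $h(1)$. The range $1 \le t < 2$ is handled directly from the pointwise bound with $t_1 = 1$, since $t^{-q}$ is bounded above and below there. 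As an alternative to the discrete argument, one may note that $h$ is locally integrable (from the integral bound), so the tail $F(t) = \int_t^\infty h$ is absolutely continuous with $F' = -h$ a.e.\ and satisfies $F'(t) + \tfrac1C F(t) \le t^{-q}$; an integrating-factor estimate against $e^{t/C}$ then yields $F(t) \lesssim t^{-q}$ directly, after which the same averaging step recovers the pointwise bound.
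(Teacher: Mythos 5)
Your proof is correct, but it takes a genuinely different route from the paper's. The paper runs an induction on integer exponents: starting from the trivial bound $h(t)\lesssim 1$, it uses a dyadic pigeonhole to upgrade $h(t)\lesssim t^{-k}$ to $h(t)\lesssim t^{-(k+1)}$ one power at a time for $k=0,1,\dots,\lfloor q\rfloor$, capping the final step at the exponent $q$; the iterated objects are pointwise values of $h$ at well-chosen points $\tau_n'\in[\tau_n,\tau_{n+1}]$. You instead work with the dyadic tail integrals $T_n=\int_{2^n}^{\infty}h$, derive the single recursion $T_{n+1}\le C\big(2^{-n}T_n+2^{-nq}\big)$ from one pigeonhole step plus the $t_2\to\infty$ form of \eqref{PropH}, and solve it in one pass by exploiting the decaying coefficient $2^{-n}$ --- your remark that a fixed-contraction treatment would fail for $q\ge 1$ identifies exactly the same obstruction that forces the paper to iterate its induction $\lfloor q\rfloor+1$ times. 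Both arguments rest on the same two ingredients (the pointwise consequence $h(t_2)\le C(h(t_1)+t_1^{-q})$ and a mean-value selection on dyadic intervals), but yours packages the gain into a single solvable recursion, which is somewhat cleaner, makes the dependence of the constant on $C$, $q$, $h(1)$ transparent, and adapts verbatim to other right-hand-side decay profiles; the price is the extra bookkeeping of the tails $T_n$ and a final averaging step to convert the tail bound back into the pointwise bound. Your handling of the technical points (finiteness of $T_0$, strict-inequality mean-value selection for merely measurable $h$, the range $1\le t<2$) is correct, and the alternative integrating-factor argument for $F(t)=\int_t^{\infty}h$ is also sound since $F$ is absolutely continuous with $F'=-h$ a.e.
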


\begin{proof}
We start with showing
\begin{equation}
\label{qInteger}
h(t) \lesssim t^{-k}
\end{equation}
for all $0 \leq k \leq \left \lfloor{q}\right \rfloor$.

The case $k=0$ follows directly from \eqref{PropH} {after taking $t_1=1$ and $t_2=t$}. So assume \eqref{qInteger} holds for a { non-negative integer} $k \leq \left \lfloor{q}\right \rfloor - 1$. We will prove \eqref{qInteger} for {$k$ replaced by} $k+1$.

Let $\tau_n := 2^n$. In particular, \eqref{PropH} together with \eqref{qInteger} imply \begin{equation*}
\int_{\tau_n}^{\tau_{n+1}} h(t) \, dt \leq C_k \tau_n^{-k} \quad \textnormal{ for all } n \in \N \;,
\end{equation*}
where $C_k > 0$ is a constant\footnote{We will continue to write $C_k$ below to emphasize the dependence of the constant on $k$. Notice however that we will allow the constants to be different in every line.} depending on $k${, $h(1)$, $q$ and $C$}. Thus, for every $n \in \N$ there exists a $\tau'_n \in [\tau_n, \tau_{n+1}]$ such that 
\begin{equation}
\label{SeqMoreDecay}
h(\tau_n') \leq C_k \tau_n^{-(k+1)}
\end{equation} 
holds. Moreover, it follows from \eqref{PropH} that for all $\tau \in [{\tau_{n+1}, \tau_{n+2}}]$ we have
\begin{align}
h(\tau) &\leq C\big(h(\tau_n') + (\tau_n')^{-q} \big) \notag \\ 
&\overset{\eqref{SeqMoreDecay}}{\leq} C_k \big( \tau_n^{-(k+1)} + \tau_n^{-q} \big)  \label{SecondLastStep} \\ 
&\leq C_k \tau_n^{-(k+1)} \notag \;,
\end{align}
which proves \eqref{qInteger} for {$k$ replaced by $k+1$ if} $k+1 \leq \left \lfloor{q}\right \rfloor$.

{The above induction thus allows us to show \eqref{qInteger} for $k\leq \lfloor{q}\rfloor$.} {To continue, we let $k=\lfloor{q}\rfloor$ and} repeat the above proof up to \eqref{SecondLastStep}. From there we conclude $h(\tau) \leq C_k \tau_n^{-q}$, which proves the lemma.
\end{proof}

We now set 
\begin{equation*}
h(t) := \int\limits_{\Sigma^+_{t} \cap \{r \geq \rblue\}} \big[ (\tilde{e}_4 \psi)^2 + (\tilde{e}_3\psi)^2 + (e_{{2}}\psi)^2 + (e_1 \psi)^2 \big] \, \volsp
\end{equation*}
and recall $\vol = df^+ \wedge \volsp$. Hence, \eqref{UpToRBlue1} implies that $h(t)$ satisfies \eqref{PropH} and thus, by Lemma \ref{DecayLemma}, we have $h(t) \lesssim t^{-q}$. Using this in \eqref{UpToRBlue1} and letting $v_2$ tend to infinity gives
\begin{equation}
\label{DecayUpToRBlue}
\int\limits_{\Sigma_{\rblue} \cap \{v_1 \leq f^+\}} \T[\psi](\tilde{w} X, (dr)^\sharp) \, \volr + \int\limits_{\{r \geq \rblue\} \cap \{v_1 \leq f^+ \}} \big[ (\tilde{e}_4 \psi)^2 + (\tilde{e}_3\psi)^2 + (e_1\psi)^2 + (e_1 \psi)^2 \big] \, \vol \lesssim
 (v_1)^{-q} \;.
\end{equation}

Finally, let us consider the energy estimate with multiplier $w_\alpha X$ in the blue-shift region $\{r \leq \rblue\} \cap \{f^- \leq u_1\}$, where $u_1 = 2\rblue^* - 2\rblue - v_1 + r_+ + r_-$ (see also Figure \ref{FigEnEst}):\footnote{Actually, one would carry out the energy estimate first in the compact region $\{r \leq \rblue\} \cap \{f^- \leq u_1\} \cap \{f^+ \leq v_2\}$. The boundary term along $\Sigma^+_{v_2}$ has a positive sign and can thus be dropped. One then takes $v_2 \to \infty$. This approximation argument is standard and will be silently omitted in the future.}
\begin{equation}
\label{ILEDCH}
\begin{split}
\int\limits_{\Sigma^-_{u_1} \cap \{r \leq \rblue\}} &\T[\psi](w_\alpha X, (-df^-)^\sharp) \, \volsn + \int\limits_{\{r \leq \rblue\} \cap \{f^- \leq u_1\}} {\Big(}\frac{1}{(-\Delta)^\alpha} \Big( \Delta^2 \psi_4^2 + \psi_3^2\Big) + \big( \psi_1^2 + \psi_2^2\big) \Big) \, \vol \\
&\lesssim \int\limits_{\Sigma_{\rblue} \cap \{v_1 \leq f^+\}} \T[\psi](w_\alpha X, (dr)^\sharp) \, \volr \;.
\end{split}
\end{equation} 
This, together with \eqref{DecayUpToRBlue} and the fact that the energy densities $\T[\psi](\tilde{w} X, (dr)^\sharp)$ and $\T[\psi](w_\alpha X, (dr)^\sharp)$  are comparable along $\Sigma_{\rblue}$, proves in particular Proposition \ref{PropFirstDerILED}.

In order to prove Proposition \ref{PropFirstDerILEDDecay} let $\V := f^+\big(\gamma_\sigma \cap \Sigma^-_{u_1}\big)$ (this is well-defined for $v_1$ big enough - see also Figure \ref{FigShiftGamma}). Note that ${v_+}\big(\gamma_\sigma \cap \Sigma^-_{u_1}\big)$ is implicitly given by
\begin{equation*}
{v_+}\big(\gamma_\sigma \cap \Sigma^-_{u_1}\big) + u_1 + r(\gamma_\sigma \cap \Sigma^-_{u_1}) - r_- - \sigma \log\big({v_+}\big(\gamma_\sigma \cap \Sigma^-_{u_1}\big)\big) = 1 \;,
\end{equation*}
and thus, together with $u_1 = 2\rblue^* - 2\rblue - v_1 + r_+ + r_-$, we obtain $\V \lesssim {v_+}\big(\gamma_\sigma \cap \Sigma^-_{u_1}\big) \lesssim v_1$. 
\begin{figure}[h]
  \centering
  \def\svgwidth{6cm}
    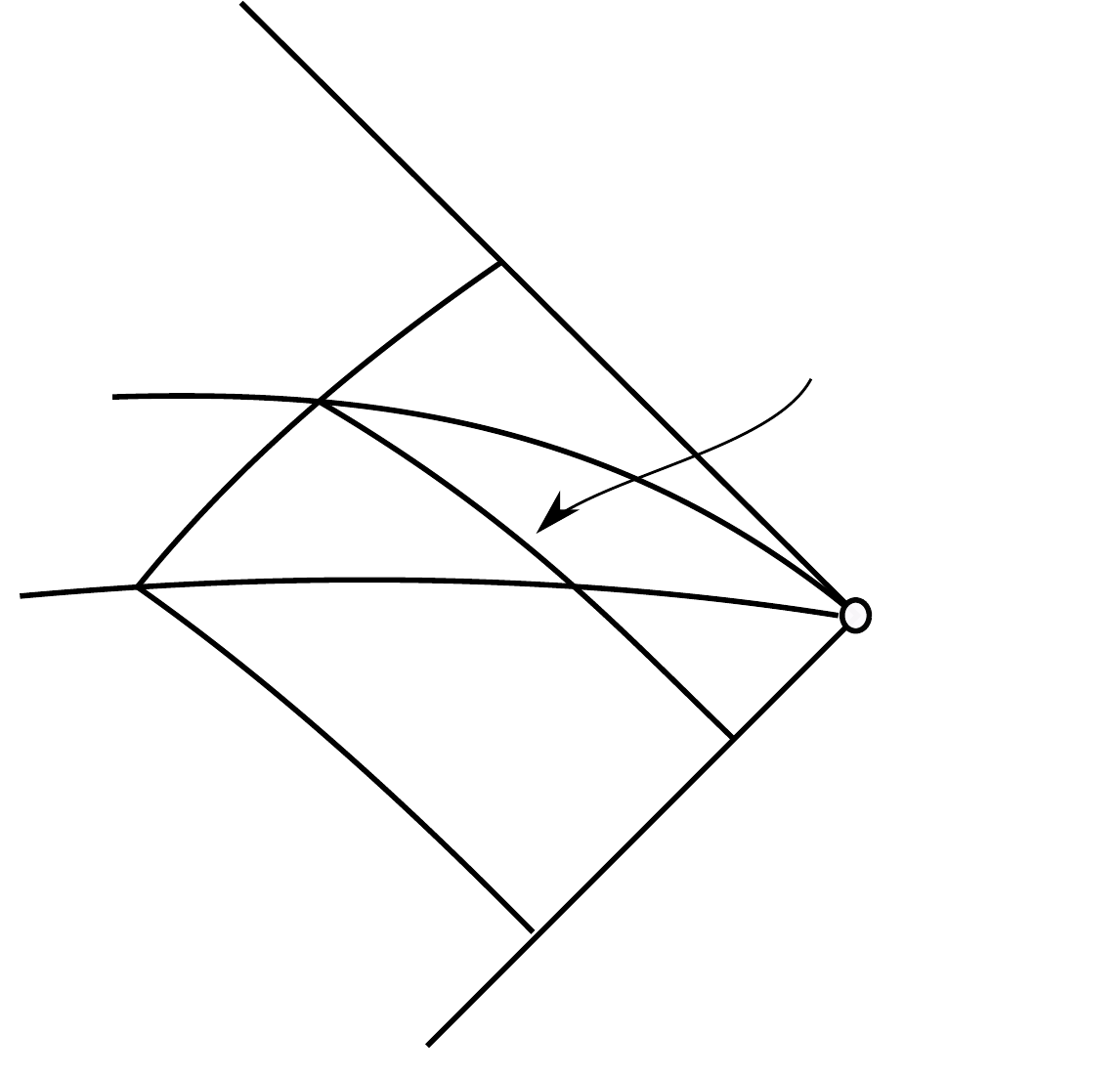
      \caption{The hypersurfaces $\Sigma^+_{v_1}$ and $\Sigma^+_{\V}$}
      \label{FigShiftGamma}
\end{figure}
Thus, \eqref{DecayUpToRBlue} and \eqref{ILEDCH} imply
\begin{equation*}
\begin{split}
\int\limits_{\{r\geq \rblue \} \cap \{f^+ \geq \V\}} &\big[ (\tilde{e}_4 \psi)^2 + (\tilde{e}_3\psi)^2 + (e_{{2}}\psi)^2 + (e_1 \psi)^2 \big] \, \vol \\ 
&\qquad+ \int\limits_{\{r \leq \rblue\} \cap \{f_{\gamma_\sigma} \leq 1\} \cap \{f^+ \geq \V\}} {\Big(}\frac{1}{(-\Delta)^\alpha} {\Big[} \Delta^2 \psi_4^2 + \psi_3^2{\Big]} + \big( \psi_1^2 + \psi_2^2{\big)} {\Big)}\, \vol \\
&\leq\int\limits_{\{r \geq \rblue\} \cap \{ f^+ \geq v_1 \}} {\Big(}\frac{1}{(-\Delta)^\alpha} \Big( \Delta^2 \psi_4^2 + \psi_3^2\Big) + \big( \psi_1^2 + \psi_2^2\big) \Big) \, \vol \\
&\qquad+  \int\limits_{\{r \leq \rblue\} \cap \{f_{\gamma_\sigma} \leq 1\} \cap \{f^- \leq u_1\}} {\Big(}\frac{1}{(-\Delta)^\alpha} \Big( \Delta^2 \psi_4^2 + \psi_3^2\Big) + \big( \psi_1^2 + \psi_2^2\big) \Big)\, \vol \\
&\lesssim (v_1)^{-q}  \lesssim \V^{-q} \;,
\end{split}
\end{equation*}
which proves Proposition \ref{PropFirstDerILEDDecay}.

\subsection{Integrated energy decay for second derivatives}
\label{ILED2}

The result of this section depends, in addition to Assumption \ref{Assumption1}, also on the third assumption of Theorem \ref{MainThm}:
\begin{assumption}
\label{Assumption2}
Assume $\psi : \mathcal{M}\cup \Hp \to \R$ is a smooth solution to the wave equation $\Box_g \psi = 0$ that satisfies
\begin{equation*}
\sum\limits_{i=1}^3\int\limits_{\Hp \cap \{{v_+} \geq 1\}} \T[\Omega_{\Hp}^i \psi](\tilde{N},\tilde{e}_4) \, \volr \leq C \;.
\end{equation*}
\end{assumption}
We prove
\begin{proposition}
\label{PropILED2}
Under Assumption{s} \ref{Assumption1} and \ref{Assumption2} the following holds: Let $\alpha \in {[}0,1)$ and $u_1 \in \R$. There then exists a constant $C > 0$ such that
\begin{equation}
\label{ILEDSecond}
\sum\limits_{i=1}^3\int\limits_{\{ r\leq \rblue\} \cap \{f^- \leq u_1\}} {\Big(}\frac{1}{(-\Delta)^\alpha} \Big( \Delta^2 (e_4\Omega_{\CH}^i\psi)^2 + (e_3\Omega_{\CH}^i\psi)^2\Big) + \big( (e_1\Omega_{\CH}^i\psi)^2 + (e_2\Omega_{\CH}^i\psi)^2\big) \Big)\, \vol \leq C \;
\end{equation}
{for $\rblue \in (r_-, r_+)$ as in Proposition \ref{PropFirstDerILEDDecay}.}
\end{proposition}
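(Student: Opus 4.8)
The plan is to commute the wave equation with the angular operators $\Omega_{\CH}^i$ and to re-run the first-derivative argument of Section \ref{sec.ILED.1st} for the commuted quantities, now treating the commutator as an inhomogeneity. Setting $u_i := \Omega_{\CH}^i \psi$ and using $\Box_g \psi = 0$, the function $u_i$ solves the inhomogeneous equation
\[
\Box_g u_i = [\Box_g, \Omega_{\CH}^i]\psi =: F_i \;.
\]
First I would record the structure of the source $F_i$. By the standard commutation formula it is governed by the deformation tensor $\pi(\Omega_{\CH}^i)$; schematically $F_i$ is the sum of a second-order term $\pi(\Omega_{\CH}^i)^{\mu\nu}\nabla^2_{\mu\nu}\psi$ and terms linear in the first derivatives of $\psi$. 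Using $\Box_g\psi=0$ to eliminate the $e_3 e_4$-derivative, and rewriting second angular derivatives $e_a e_b \psi$ (for $a,b \in \{1,2\}$) as first derivatives of the $u_j$, one sees that every second-order contribution to $F_i$ is either a first derivative $e_\mu u_j$ of a commuted quantity or a first derivative of $\psi$. The decisive point, and the reason for the specific choice of $\varphi_{\CH}$ in \eqref{DefCom}, is the identity $[\Omega_{\CH}^i, e_3] = 0$ on $\CH$ from \eqref{DerivativesPhi}: it guarantees that the coefficients multiplying the weakly-controlled ($e_4$-)derivatives near $\CH$ carry a factor vanishing on $\CH$, i.e.\ an extra power of $(-\Delta)$, so that no term appears with a weight stronger than what is already available.

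Next I would carry the energy estimate \eqref{EnergyEst} for $u_i$ through the same three regions and multipliers as in the proof of Proposition \ref{PropFirstDerILED}: the red-shift multiplier $N$ near $\Hp$, the intermediate multiplier $\tilde w X$, and the blue-shift multiplier $w_\alpha X$ in $\{r \leq \rblue\}$. Relative to the first-derivative case, the only new feature is the bulk contribution $\int_D F_i\,(Z u_i)\,\vol$ arising from the term $\Box_g u_i \, Z u_i$ in \eqref{EnergyEst}; the coercive bulk $\T[u_i]_{\mu\nu}\nabla^\mu Z^\nu$ is treated exactly as in \eqref{RedLB}, \eqref{MultiplierBlue} and \eqref{MultiplierIntermediate} and produces the left-hand side of \eqref{ILEDSecond}. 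In the blue-shift region one has $(w_\alpha X) u_i = w_\alpha (e_3 - \tfrac{\Delta}{\rho^2} e_4) u_i$, so Cauchy--Schwarz and Young's inequality against the coercive weights $\tfrac{1}{(-\Delta)^\alpha}\big[(e_3 u_i)^2 + \tfrac{\Delta^2}{\rho^4}(e_4 u_i)^2\big]$ reduce the source contribution to controlling $\int (-\Delta)^\alpha F_i^2 \, \vol$, up to an $\varepsilon$-fraction of the left-hand side of \eqref{ILEDSecond} that is absorbed. By the structural analysis of $F_i$ above, this weighted $L^2$ norm is bounded by Proposition \ref{PropFirstDerILED} (for the genuinely first-order-in-$\psi$ pieces) together with a small multiple of the coercive bulk (for the $e_\mu u_j$ pieces).

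The remaining ingredient is the data for the commuted problem on $\Hp$, where the hypothesis is formulated through $\Omega_{\Hp}^i$ (Assumption \ref{Assumption2}) rather than $\Omega_{\CH}^i$. Here I would use that $\{\Omega_{\CH}^i\}$ and $\{\Omega_{\Hp}^i\}$ both span $\Pi^\perp$ and are related by the azimuthal rotation through the angle $\varphi_{\CH} - \varphi_{\Hp}$; summing over $i$, the leading part of the second-derivative energy is invariant under this rotation, while the correction terms, produced by derivatives of the rotation angle hitting undifferentiated $\Omega\psi$, are of lower order and controlled via Assumption \ref{Assumption1} and Proposition \ref{PropFirstDerILED}. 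Propagating this data from $\Hp$ through the red-shift and intermediate regions, where one commutes instead with the horizon-regular $\Omega_{\Hp}^i$ and the identity $[\Omega_{\Hp}^i, \tilde e_4] = 0$ on $\Hp$ plays the analogous role, yields boundedness of the second-derivative energy on $\Sigma_{\rblue}$, which then feeds the blue-shift estimate.

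I expect the main obstacle to be precisely the weighted control of the commutator source, namely showing $\int (-\Delta)^\alpha F_i^2 \, \vol \lesssim C$ up to absorbable terms: one must verify that, after using $\Box_g\psi = 0$ and the $\CH$-adapted identity $[\Omega_{\CH}^i, e_3] = 0$ on $\CH$, the coefficient of the $e_4\psi$-derivative in $F_i$ vanishes to the order $(-\Delta)^{1-\alpha}$ needed to match the weak weight $(-\Delta)^{2-\alpha}$ with which $e_4\psi$ is controlled in Proposition \ref{PropFirstDerILED}, and that no surviving term is weighted more strongly than the coercive bulk can supply. The transition between the two commutation families across $\Sigma_{\rblue}$, where both are regular and related by a bounded rotation whose tangential derivatives are controlled, is a secondary technical point requiring some care.
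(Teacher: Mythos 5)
Your overall strategy coincides with the paper's: commuting with the horizon-adapted families $\Omega^i_{\Hp}$ (from $\Hp$ down to $\Sigma_{\rblue}$, with the multipliers $N$ and $\tilde w X$) and $\Omega^i_{\CH}$ (in $\{r\leq \rblue\}$, with the multiplier $w_\alpha X$), transitioning between the two families on $\Sigma_{\rblue}$ by writing one as a bounded linear combination of the other, exploiting $[\Omega^i_{\Hp},\tilde e_4]=0$ on $\Hp$ and $[\Omega^i_{\CH},e_3]=0$ on $\CH$ to gain the factors of $(-\Delta)$ on the degenerate derivatives, and eliminating the mixed second derivative via the wave equation. All of this is exactly what is done in Section \ref{ILED2}.

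The one step that does not work as written is the absorption of the commutator source. You apply Cauchy--Schwarz so that the large weight $\varepsilon^{-1}$ lands on $\int(-\Delta)^\alpha F_i^2\,\vol$, and then claim that the pieces of $F_i$ which are first derivatives of the commuted quantities $u_j=\Omega^j_{\CH}\psi$ contribute only ``a small multiple of the coercive bulk''. They do not: several components of the deformation tensor, e.g.\ $\pi(\Omega^1_{\CH})_{11}=2\sin\varphi_{\CH}\,a^2\Si\Co/\rho^2$ in \eqref{DefTen}, are only $O(1)$ rather than $O(-\Delta)$, so a contribution such as $\varepsilon^{-1}(-\Delta)^{\alpha}h^2(e_1u_j)^2$ is a \emph{large} multiple of the corresponding coercive term $(e_1u_j)^2$ and cannot be absorbed; reversing the distribution of weights merely moves the factor $\varepsilon^{-1}$ onto $w_\alpha^2(X u_i)^2$, which is again a large multiple of the coercive bulk. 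The paper closes this loop differently: the un-absorbable term is bounded by the energy density $\T[u_i](w_\alpha X,(dr)^\sharp)$, whose spacetime integral over $\{r'\leq r\leq\rblue\}$ is the $r$-integral of the boundary energies on the leaves $\Sigma_{s}$, and a Gronwall argument in $r$ (see \eqref{GronwallRed} and \eqref{SecOrderGronwallBlue}) then yields the uniform bound. Some such device is needed in both the red-shift and the blue-shift regions; with it inserted, your argument goes through and is otherwise the paper's proof.
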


\subsubsection{Boundedness of second order energy on $\Sigma_{\rblue}$}
\label{BoundednessSecOrdEnBlueSec}

For the proof of Proposition \ref{PropILED2} we commute the wave equation with the vector fields $\Omega^i_{\Hp, \CH}$. For a vector field $Z$ one has\footnote{See for example \cite{Alin10}, Chapter 6.2}
\begin{equation}
\label{Commutator}
[\Box_g, Z] \psi = \pi(Z)^{\mu \nu} \nabla_\mu \nabla_\nu \psi + {(}\nabla_\mu \pi(Z)^{\mu \nu} {)}\partial_\nu \psi - \frac{1}{2} \partial^\mu \big(\mathrm{tr} \pi(Z)\big) \partial_\mu \psi \;.
\end{equation}
Summing the energy estimates for the functions $\Omega^i_{\Hp} \psi$ in the region $\{r' \leq r \leq r_+\} \cap \{f^+ \geq v_1\}$, where $r' \geq \rred$, with multiplier $N$ {(recall the definition of $N$ and the bound \eqref{RedLB} from Section \ref{MultiRed})} gives
\begin{equation}
\label{SecOrdEnEstRed}
\begin{split}
&\sum_{i=1}^3\int\limits_{\Sigma_{r'} \cap \{f^+ \geq v_1\}} \T[\Omega^i_{\Hp} \psi](N, (dr)^\sharp) \, \volr \\ 
&+ \sum\limits_{i=1}^3\int\limits_{\{r \geq r'\} \cap \{f^+ \geq v_1\}} \kappa\Big(  (\tilde{e}_3 \Omega^i_{\Hp} \psi)^2 +  (\tilde{e}_4 \Omega^i_{\Hp} \psi)^2 +(e_1\Omega^i_{\Hp} \psi)^2 +(e_2 \Omega^i_{\Hp} \psi)^2\Big) \, \vol \\
&+ \sum\limits_{i=1}^3\int\limits_{\{r \geq r'\} \cap \{f^+ \geq v_1\}} \underbrace{\Big(\pi(\Omega^i_{\Hp})^{\mu \nu} \nabla_\mu \nabla_\nu \psi + {\big(}\nabla_\mu \pi(\Omega^i_{\Hp})^{\mu \nu} {\big)}\partial_\nu \psi - \frac{1}{2} {\Big(}\partial^\mu \big(\mathrm{tr} \pi(\Omega^i_{\Hp})\big){\Big)} \partial_\mu \psi\Big)}_{=:Err} {\big(}N \Omega^i_{\Hp}\psi{\big)} \, \vol \\
\leq & \sum\limits_{i=1}^3\int\limits_{\Sigma^+_{v_1} \cap \{r \geq r'\}} \T[\Omega^i_{\Hp}\psi](N, (-df^+)^\sharp) \, \volsp + \sum\limits_{i=1}^3 \int\limits_{\Hp \cap \{{v_+} \geq v_1 \}} \T[\Omega^i_{\Hp}\psi](N, (dr)^\sharp) \, \volr \;.
\end{split}
\end{equation}
We want to show that \eqref{SecOrdEnEstRed} implies\footnote{Note that {for any fixed $v_+$,} the two terms on the right hand side of \eqref{SecOrdEnEstRed} are bounded  (uniformly in $r' \geq \rred$) by {the smoothness} assumption {and Assumption \ref{Assumption2}}.}
\begin{equation}
\label{GronwallRed}
\sum_{i=1}^3\int\limits_{\Sigma_{r'} \cap \{f^+ \geq v_1\}} \T[\Omega^i_{\Hp} \psi](N, (dr)^\sharp) \, \volr \lesssim 1 + \sum\limits_{i=1}^3\int\limits_{\{r \geq r'\} \cap \{f^+ \geq v_1\}} \T[\Omega^i_{\Hp} \psi](N, (dr)^\sharp) \, \vol \;.
\end{equation}
This, together with $\vol = -dr \wedge \volr$ and Gronwall, then implies
\begin{equation}
\label{BoundSecOrdEnRed}
\sum_{i=1}^3\int\limits_{\Sigma_{\rred} \cap \{f^+ \geq v_1\}} \T[\Omega^i_{\Hp} \psi](N, (dr)^\sharp) \, \volr \leq C\;.
\end{equation}
In order to show \eqref{GronwallRed}, let us first recall that $(dr)^\sharp = \frac{1}{2}(L + \underline{L}) = \frac{1}{2}(\frac{1}{\rho^2} \tilde{e}_4 - \Delta \tilde{e}_3)$. Hence, we have
\begin{equation}
\label{ControlStressEnergyRed}
\T[\Omega^i_{\Hp} \psi](N, (dr)^\sharp) \gtrsim  -\Delta(\tilde{e}_3 \Omega^i_{\Hp} \psi)^2 +  (\tilde{e}_4 \Omega^i_{\Hp} \psi)^2 +(e_1\Omega^i_{\Hp} \psi)^2 +(e_2 \Omega^i_{\Hp} \psi)^2 \;.
\end{equation}

We now explain how one estimates the third term in \eqref{SecOrdEnEstRed}: First recall that the tilded frame field $\{\tilde{e}_3, \tilde{e}_4, e_1, e_2\}$ is invariant under the flow of the Killing vector field $T_{\Hp}$. However, the orthonormal basis $\{e_1, e_2\}$ of $\Pi^\perp$ is not smooth on the axis. We thus introduce in addition an orthonormal basis $\{e_1', e_2'\}$ of $\Pi^\perp$ which is smooth in a small neighbourhood of the axis and also invariant under the flow of $T_{\Hp}$.\footnote{Note that so far there was no need to introduce a frame that is also regular at the axis, since we only decomposed expressions involving $d\psi$, i.e., \emph{first} derivatives of $\psi$, with respect to a frame field involving $e_1$ and $e_2$. Technically, in such computations one restricts the domain to the region under consideration with the axis deleted. After estimating and rearranging such expressions we always ended up with the expression $(e_1 \psi)^2 + (e_2 \psi)^2$, which is equal to $(g|_{\Pi^\perp})^{-1}(d\psi, d\psi)$ and thus extends smoothly to the axis. By continuity, the obtained estimate thus also holds on the axis.

Now, however, we are about to decompose expressions involving $\nabla \nabla \psi$. Here, it is important that we decompose with respect to a regular frame near the axis, since for example the term $e_2(e_1 \psi)$ is in general unbounded when approaching the axis.}

We then split the domain of integration in the third term of \eqref{SecOrdEnEstRed} in two disjoint regions $A$ (disjoint of the axis) and $A'$ (containing the axis) such that $e_1$ and $e_2$ are smooth on $\overline{A}$ and $e_1'$ and $e_2'$ are smooth on $\overline{A'}$. 
In the following we decompose $Err$ in region $A$ with respect to the frame $\{\tilde{e}_3, \tilde{e}_4, e_1, e_2\}$ and in region $A'$ with respect to the frame $\{\tilde{e}_3, \tilde{e}_4, e_1', e_2'\}$.  To simplify the presentation, we will restrict ourselves in the following discussion to region $A$. Region $A'$, however, is dealt with completely analogously.

From the coordinate representations of $\Omega^i_{\Hp}$ in \eqref{DefCom} it is easy to see that these vector fields are invariant under the flow of the Killing vector field $T_{\Hp}$. Hence, the same holds for the deformation tensor $\pi(\Omega^i_{\Hp})$  and its covariant derivative. The invariance under the flow of $T_{\Hp}$ implies that the coefficients of the deformation tensor and its covariant derivative, when expressed in terms of the tilded frame, are uniformly bounded in the region under consideration.
We proceed by writing all contractions in the term $Err$ in terms of the tilded frame, e.g.\  $\pi(\Omega^i_{\Hp})^{\mu \nu} \nabla_\mu \nabla_\nu \psi = \pi(\Omega^i_{\Hp})^{\tilde{4} \tilde{4}} (\tilde{e}_4 \tilde{e}_4 \psi + \nabla_{\tilde{e}_4} \tilde{e}_4 \psi + \ldots)$. Again by the invariance of the frame field under the flow of the Killing vector field $T_{\Hp}$ vector fields of the form $\nabla_{\tilde{e}_4} \tilde{e}_4$ have uniformly bounded coefficients when expressed in the tilded frame. We conclude that in region $A$ the term $Err$ can be written as a sum of terms of the form $h \cdot e(f\psi)$ and $h \cdot e\psi$, where $h$ is a uniformly bounded function and $e$ and $f$ are members of the tilded frame field. 

Using this structure, we can now estimate in region $A$ terms of the form $(h \cdot e\psi) (N \Omega^i_{\Hp} \psi)$ in the third term of \eqref{SecOrdEnEstRed} by 
\begin{equation}
\label{CauchySchwarz}
|(h \cdot e\psi) (N \Omega^i_{\Hp} \psi)| \leq \varepsilon^{-1} h^2 (e \psi)^2 + \varepsilon ({N} \Omega^i_{\Hp} \psi)^2 \;.
\end{equation}
Choosing $\varepsilon> 0$ small enough and recalling that $N$ is invariant under the flow of $T_{\Hp}$, the second term can be absorbed by the second term in \eqref{SecOrdEnEstRed}. The first term in \eqref{CauchySchwarz}, after integration, is bounded by Proposition \ref{PropFirstDerILEDDecay}. 

It thus remains to deal with terms of the form $h \cdot e(f\psi)$ in $Err$. We again use Cauchy{--}Schwarz to estimate
\begin{equation*}
|\big(h \cdot e(f\psi)\big) (N \Omega^i_{\Hp} \psi)| \leq \varepsilon^{-1} h^2 \big(e (f\psi)\big)^2 + \varepsilon ({N} \Omega^i_{\Hp} \psi)^2 \;.
\end{equation*}
Choosing $\varepsilon> 0$ small enough, the second term can again be absorbed. Recalling \eqref{ControlStressEnergyRed}, we now show that the first term can be controlled by $\sum_{i=1}^3\T[\Omega^i_{\Hp} \psi](N, (dr)^\sharp)$ (modulo first order terms which, after integration, can again be controlled by Proposition \ref{PropFirstDerILEDDecay}). Here, clearly, the structure of $\pi(\Omega^i_{\Hp})$ is important.
\begin{enumerate}
\item First consider terms of the form $h^2 \cdot \big(e(f \psi)\big)^2$, where $f \in \{e_1, e_2\}$ and $e \in \{e_1, e_2, \tilde{e}_4\}$. Note that we can assume without loss of generality this order for $e$ and $f$, since the commutator is a first derivative of $\psi$ and can thus be estimated as before. In region $A$ we can write $e_1$ and $e_2$ as a linear combination of the $\Omega_{\Hp}^i$ where the coefficient functions have \emph{uniformly bounded} derivative (since we are away from the axis). Writing $f$ thus and using the Leibniz rule, we see that the term $h^2 \cdot \big(e(f \psi)\big)^2$ is controlled by $\sum_{i=1}^3\T[\Omega^i_{\Hp} \psi](N, (dr)^\sharp)$ together with, after integration, Proposition \ref{PropFirstDerILEDDecay}.

\item Now consider terms of the form $h^2 \cdot \big(\tilde{e}_3 (f \psi)\big)$, where $f \in \{e_1, e_2\}$. Again, without loss of generality, we can assume this order of the derivatives. We will show that $h$ goes to zero like $-\Delta$ for $r \to r_+$, and hence this term can be controlled as before\footnote{Notice that $(-\Delta)$ weight for the $\tilde{e}_3\psi$ term in the estimate \eqref{ControlStressEnergyRed}.}.

Note that for $\Omega_{\Hp}^i$ and $f = e_j$, $i \in \{1,2,3\}$, $j \in \{1,2\}$, the coefficient $h$ is here given by 
\begin{equation*}
h =2 \pi(\Omega_{\Hp}^i)^{\tilde{3}j} = 2(-\Delta)\pi(\Omega_{\Hp}^i)^{3j}\;.
\end{equation*} 
Using in particular \eqref{DerivativesPhi} and \eqref{CovDer}, we compute\footnote{Notice that after multiplying the following components of the deformation tensor by $(-\Delta)$, they vanish {on} $\Hp$. This fact can be traced back to $[\Omega_{\Hp}^i, \tilde{e}_4] =0$ on $\Hp$.}
\begin{equation*}
\begin{aligned}
&-2 \pi(\Omega_{\Hp}^1)^{31} = \langle \nabla_{e_4} \Omega_{\Hp}^1, e_1\rangle  + \langle \nabla_{e_1} \Omega_{\Hp}^1, e_4\rangle   &&\\
&\qquad \qquad \qquad= 2 a \cos \varphi_\Hp \Big( - \frac{\Co^2}{\rho} + \frac{\rho}{\Delta}\big(1- \frac{r^2 + a^2}{r_+^2 + a^2}\big)\Big)  \;,&&\\
&-2 \pi(\Omega_{\Hp}^1)^{32} = 2 a \Co \sin \varphi_{\Hp}\Big( \frac{1}{\rho} - \frac{\rho}{\Delta} \big( 1 - \frac{r^2 + a^2}{r_+^2 + a^2}\big)\Big)\;, &&\\[5mm]
&-2 \pi(\Omega_{\Hp}^2)^{31} = 2a \sin \varphi_{\Hp} \Big( - \frac{\Co^2}{\rho} + \frac{\rho}{\Delta} \big( 1 - \frac{r^2 + a^2}{r_+^2 + a^2}\big)\Big) \;,\qquad \qquad  &&-2\pi(\Omega^3_{\Hp})^{31} = \frac{2a \Co \Si}{\rho}{\;,} \\
&-2 \pi(\Omega_{\Hp}^2)^{32} = 2a \Co \cos \varphi_{\Hp} \Big( -\frac{1}{\rho} + \frac{\rho}{\Delta} \big( 1 - \frac{r^2 + a^2}{r_+^2 + a^2}\big)\Big) \;,&&-2\pi(\Omega^3_{\Hp})^{32} = 0\;.
\end{aligned}
\end{equation*}
Hence, $h$ goes to zero like $-\Delta$ for $r \to r_+$.

\item Indeed, no terms of the form $h^2 \cdot \big(\tilde{e}_3 (\tilde{e}_3 \psi)\big)^2$ or $h^2 \cdot \big(\tilde{e}_4 (\tilde{e}_4 \psi)\big)^2$ are present. This follows from
\begin{equation*}
\langle \nabla_{e_4} \Omega_{\Hp}^i, e_4\rangle  = 0 = \langle \nabla_{e_3} \Omega_{\Hp}^i, e_3\rangle 
\end{equation*}
for all $i \in \{1,2,3\}$.

\item It remains to consider terms of the form $h^2 \cdot \big(\tilde{e}_4(\tilde{e}_3 \psi)\big)^2$. Expressing the wave equation in the tilded frame field we obtain
\begin{equation*}
0 = \Box_g \psi = - \tilde{e}_4 ( \tilde{e}_3 \psi) + {(}\nabla_{\tilde{e}_4} \tilde{e}_3{)} \psi + \sum_{i=1}^2 \Big( e_i ( e_i \psi) - \nabla_{e_i} e_i \psi \Big) \;.
\end{equation*}
Hence, we can replace $\big(\tilde{e}_4(\tilde{e}_3 \psi)\big)^2$ by terms we already know how to control.
\end{enumerate}
We have now shown \eqref{GronwallRed} and thus proved \eqref{BoundSecOrdEnRed}.

In the next step we consider the summed energy estimates for the functions $\Omega_{\Hp}^i \psi$ in the region $\{\rblue \leq r \leq \rred\} \cap \{f^+ \geq v_1\}$ with multiplier $\tilde{w} X$. {(Recall the definition in Section \ref{MultiInter}.)} Note that in the region $\{\rblue \leq r \leq \rred\}$ we have
\begin{equation*}
\T[\Omega_{\Hp}^i{\psi}](\tilde{w}X, (dr)^\sharp) \gtrsim (\tilde{e}_3 \Omega^i_{\Hp} \psi)^2 +  (\tilde{e}_4 \Omega^i_{\Hp} \psi)^2 +(e_1\Omega^i_{\Hp} \psi)^2 +(e_2 \Omega^i_{\Hp} \psi)^2 \;.
\end{equation*}
Hence, in order to be able to control the error term by  $\T[\Omega_{\Hp}^i{\psi}](\tilde{w}X, (dr)^\sharp)$ (together with Proposition \ref{PropFirstDerILEDDecay}) as before, the only structure needed this time is that no $\tilde{e}_3(\tilde{e}_3 \psi)$ or $\tilde{e}_4(\tilde{e}_4\psi)$ terms appear in $Err$ - which we have already shown. One then obtains, using \eqref{BoundSecOrdEnRed} and Gronwall,
\begin{equation}
\label{BoundSecOrdEnInt}
\sum_{i=1}^3\int\limits_{\Sigma_{\rblue} \cap \{f^+ \geq v_1\}} \T[\Omega^i_{\Hp} \psi](\tilde{w}X, (dr)^\sharp) \, \volr \leq C\;.
\end{equation}

\subsubsection{Integrated energy decay for second derivatives in the blue-shift region}

We sum the energy estimates for the functions $\Omega_{\CH}^i \psi$ in the region $\{r' \leq r \leq \rblue\} \cap \{f^- \leq u_1\}$ with multiplier $w_\alpha X$. {(Recall the definition from Section \ref{MultiBlue}).} Here, $u_1$ is given again by $u_1 = 2r^*_\mathrm{blue} - 2 \rblue - v_1 + r_+ + r_-$. 
\begin{equation}
\label{SecOrdEnergyEstBlue}
\begin{split}
\sum_{i=1}^3&\int\limits_{\Sigma_{r'} \cap \{f^- \leq u_1\}} \T[\Omega^i_{\CH} \psi](w_\alpha X, (dr)^\sharp) \, \volr \\ 
&+ \sum\limits_{i=1}^3\int\limits_{\{r' \leq r \leq \rblue\} \cap \{f^- \leq u_1\}} \frac{1}{C}\Big(  \frac{1}{(-\Delta)^\alpha} \big[ \Delta^2(e_4 \Omega^i_{\CH} \psi)^2 +  (e_3 \Omega^i_{\CH} \psi)^2 \big] + \big[(e_1\Omega^i_{\CH} \psi)^2 +(e_2 \Omega^i_{\CH} \psi)^2 \big]\Big) \, \vol \\
&+ \sum\limits_{i=1}^3\int\limits_{\{r' \leq r \leq \rblue\} \cap \{f^- \leq u_1\}} \underbrace{\Big(\pi(\Omega^i_{\CH})^{\mu \nu} \nabla_\mu \nabla_\nu \psi + \nabla_\mu \pi(\Omega^i_{\CH})^{\mu \nu} \partial_\nu \psi - \frac{1}{2} \partial^\mu \big(\mathrm{tr} \pi(\Omega^i_{\CH})\big) \partial_\mu \psi\Big)}_{=:Err} w_\alpha X \Omega^i_{\CH}\psi \, \vol \\
\leq &  \sum\limits_{i=1}^3 \int\limits_{\Sigma_{\rblue} \cap \{f^+ \geq v_1\}} \T[\Omega^i_{\CH}\psi](w_\alpha X, (dr)^\sharp) \, \volr {\;.}
\end{split}
\end{equation}

First we note that the right hand side of \eqref{SecOrdEnergyEstBlue} is finite: indeed, along $\Sigma_{\rblue}$ one can write each $\Omega^i_{\CH}$, $i \in \{1,2,3\}$, as a linear combination of the ${\Omega^j_{\Hp}}$ with uniformly bounded coefficients such that, moreover, the derivatives of the coefficients with respect to the principal frame are also uniformly bounded. Hence, \eqref{BoundSecOrdEnInt} together with \eqref{DecayUpToRBlue} show the claim.

We want to show that \eqref{SecOrdEnergyEstBlue} implies
\begin{equation}
\label{SecOrderGronwallBlue}
\begin{split}
\sum_{i=1}^3&\int\limits_{\Sigma_{r'} \cap \{f^- \leq u_1\}} \T[\Omega^i_{\CH} \psi](w_\alpha X, (dr)^\sharp) \, \volr \\ 
&+ \sum\limits_{i=1}^3\int\limits_{\{r' \leq r \leq \rblue\} \cap \{f^- \leq u_1\}} \frac{1}{C'}\Big(  \frac{1}{(-\Delta)^\alpha} \big[ \Delta^2(e_4 \Omega^i_{\CH} \psi)^2 +  (e_3 \Omega^i_{\CH} \psi)^2 \big] + \big[(e_1\Omega^i_{\CH} \psi)^2 +(e_2 \Omega^i_{\CH} \psi)^2 \big]\Big) \, \vol \\
\lesssim &  1 + \sum\limits_{i=1}^3 \int\limits_{\{r' \leq r \leq \rblue\} \cap \{f^- \leq u_1\}} \T[\Omega^i_{\CH}\psi](w_\alpha X, (dr)^\sharp) \, \vol \;.
\end{split}
\end{equation}
Recalling $\vol = -dr \wedge \volr$ and using Gronwall then implies
\begin{equation*}
\sum_{i=1}^3\int\limits_{\Sigma_{r'} \cap \{f^- \leq u_1\}} \T[\Omega^i_{\CH} \psi](w_\alpha X, (dr)^\sharp) \, \volr \leq C \qquad \textnormal{ for all } r' \in (r_-, \rblue) \;.
\end{equation*}
Integrating this bound in $r$ {and substituting it into} \eqref{SecOrderGronwallBlue} then proves Proposition \ref{PropILED2}.

We will now prove \eqref{SecOrderGronwallBlue}. First note that one has
\begin{equation}
\label{ControlEnBlue}
\T[\Omega^i_{\CH} \psi](w_\alpha X, (dr)^\sharp) \gtrsim (e_3 \Omega_{\CH}^i \psi)^2 + \Delta^2\big[ (e_4 \Omega_{\CH}^i \psi)^2 + (e_1 \Omega_{\CH}^i \psi)^2 + (e_2 \Omega_{\CH}^i \psi)^2\big] \gtrsim w_\alpha^2(X \Omega_{\CH}^i \psi)^2 \;.
\end{equation}
Moreover, the following components of the deformation tensors $\pi(\Omega_{\CH}^i)$, $i = 1,2{,3}$, are needed:
\begin{equation}
\label{DefTen}
\begin{aligned}
\pi(\Omega_{\CH}^1)_{11} &=  2\sin \varphi_{\CH} \frac{a^2 \Si \Co}{\rho^2}      \;,
&&\pi(\Omega_{\CH}^2)_{11} = -2\cos \varphi_{\CH} \cdot \frac{a^2 \Si \Co}{\rho^2}     \;, \\
\pi(\Omega_{\CH}^1)_{22} &=   {-}2\sin \varphi_{\CH} a^2 \Co \Si (\frac{1}{r_-^2 + a^2} {+} \frac{1}{\rho^2})     \;,
&&\pi(\Omega_{\CH}^2)_{22} = 2\cos\varphi_{\CH} \cdot a^2 \Si\Co\big(\frac{1}{r_-^2 + a^2} + \frac{1}{\rho^2}\big)     \;, \\
\pi(\Omega_{\CH}^1)_{13} &= {\f{2a}{\rho}}\cos\varphi_{\CH}\big[   (1-\frac{r^2 + a^2}{r_-^2 + a^2}) -\frac{\Co^2\Delta}{{\rho^2}}\big] \;,
&&\pi(\Omega_{\CH}^2)_{13} = \frac{2a}{\rho} \sin \varphi_{\CH}\Big(\big[ 1 - \frac{r^2 + a^2}{r_-^2 + a^2}\big] - \frac{\Co^2 \Delta}{\rho^2}\Big)  \;,    \\
\pi(\Omega_{\CH}^1)_{23} &= \frac{2a\Co}{\rho}\sin\varphi_{\CH}\big[\frac{\Delta}{\rho^2} - (1 - \frac{r^2 + a^2}{r_-^2 + a^2})\big]       \;,
&&\pi(\Omega_{\CH}^2)_{23} = \frac{2a\Co}{\rho} \cos \varphi_{\CH} \Big(\big[ 1 - \frac{r^2 + a^2}{r_-^2 + a^2}\big] - \frac{\Delta}{\rho^2}\Big)  \;,  \\
\pi(\Omega_{\CH}^1)_{33} &= 0      \;,
&&\pi(\Omega_{\CH}^2)_{33} = 0   \;,  \\
\pi(\Omega_{\CH}^1)_{34} &= -\frac{4a^2 \Si\Co}{\rho^2} \sin\varphi_{\CH}       \;,
&&\pi(\Omega_{\CH}^2)_{34} = \frac{4a^2 \Si \Co}{\rho^2} \cos \varphi_{\CH}   \;, \\
\pi(\Omega_{\CH}^1)_{44} &= 0      \;,
&&\pi(\Omega_{\CH}^2)_{44} = 0  \;, \\[5mm]
\pi(\Omega_{\CH}^3)_{11} &=  0 \;,&&\pi(\Omega_{\CH}^3)_{22} =0 {\;,}\\
\pi(\Omega_{\CH}^3)_{13} &= \frac{2a \Co \Si \Delta}{\rho^3}\;, &&\pi(\Omega_{\CH}^3)_{23} = 0  \;,\\
\pi(\Omega_{\CH}^3)_{33} &= 0\;, &&\pi(\Omega_{\CH}^3)_{34} = 0 \;,\\
\pi(\Omega_{\CH}^3)_{44} &= 0 \;.
\end{aligned}
\end{equation}
We now estimate the third term of \eqref{SecOrdEnergyEstBlue} by Cauchy{--}Schwarz: each of the individual terms of $Err$ {(after squaring)} is weighted with an $\varepsilon> 0$, while the term $w_\alpha^2 (X \Omega^i_{\CH} \psi)^2$ is weighted with $\varepsilon^{-1}$. By \eqref{ControlEnBlue}, the latter term is controlled by $\T[\Omega^i_{\CH} \psi](w_\alpha X, (dr)^\sharp)$. We now show that the first terms are either controlled by Proposition \ref{PropFirstDerILED} or, for $\varepsilon$ small, can be absorbed in the second term of \eqref{SecOrdEnergyEstBlue}.

Let us first consider the term $\partial^\mu(\mathrm{tr} \pi\big(\Omega_{\CH}^i)\big)\partial_\mu \psi$ in $Err$. In order to control this term (after Cauchy{--}Schwarz) by Proposition \ref{PropFirstDerILED}, we need to show that all $e_4 \psi$ terms come with a coefficient that is $\mathcal{O}(r-r_-)$ for $r \to r_-$; i.e., we need to show that
\begin{equation*}
e_3\big(\mathrm{tr}{\pi}(\Omega^i_{\CH})\big) = e_3\big[\mathrm{tr}{\pi}(\Omega^i_{\CH})_{11} + \mathrm{tr}{\pi}(\Omega^i_{\CH})_{22} -\mathrm{tr}{\pi}(\Omega^i_{\CH})_{34}\big]  = \mathcal{O}(r-r_-)  \textnormal{ for } r \to r_- \;.
\end{equation*}
This, however, follows easily from \eqref{DefTen}, since $e_3$ either acts on a function of $r$, which generates a $\Delta$, or $e_3$ acts on $\varphi_{\CH}$, which also goes linearly to zero for $r \to r_-$, see \eqref{DerivativesPhi}.

The term $\nabla_\mu \pi(\Omega^i_{\CH})^{\mu \nu} \partial_\nu \psi$ in $Err$ is dealt with similarly. For $\pi = \pi(\Omega_{\CH}^i)$ we compute
\begin{equation*}
\begin{split}
&\big(\nabla_\mu \pi^{\mu \nu}\big)(e_3)_\nu \\
=& \langle \nabla_{e_1} \pi, e_1 \otimes e_3\rangle  + \langle  \nabla_{e_2} \pi , e_2 \otimes e_3\rangle  -\frac{1}{2}\langle \nabla_{e_3} \pi, e_4 \otimes e_3\rangle  - \frac{1}{2}\langle \nabla_{e_4} \pi, e_3 \otimes e_3\rangle  \\
=& e_1(\pi_{13}) + e_2(\pi_{23}) - \frac{1}{2} e_3(\pi_{43}) - \frac{1}{2}e_4(\pi_{33}) \\
&-\langle \pi, \big(\nabla_{e_1} e_1 + \nabla_{e_2} e_2 -\frac{1}{2} \nabla_{e_3}e_4 - \frac{1}{2}\nabla_{e_4} e_3\big) \otimes e_3\rangle  \\
&-\langle \pi, e_1 \otimes \nabla_{e_1} e_3 + e_2 \otimes \nabla_{e_2} e_3 - \frac{1}{2} e_4 \otimes \nabla_{e_3} e_3 - \frac{1}{2} e_3 \otimes \nabla_{e_4} e_3 \rangle  \\
=&e_1(\pi_{13}) + e_2(\pi_{23}) - \frac{1}{2} e_3(\pi_{43}) - \frac{1}{2}e_4(\pi_{33}) \\
& -\langle \pi, (\frac{{4}a^2 \Si\Co}{\rho^3} - \frac{\Co}{\Si}\frac{r^2 + a^2}{\rho^3}\big) e_1 \otimes e_3 + \frac{r\Delta}{\rho^4} e_1 \otimes e_1 + \frac{r\Delta}{\rho^4}e_2 \otimes e_2 + \frac{{2}ar\Si}{\rho^3} e_3 \otimes e_2 - \frac{r}{\rho^2} e_3 \otimes e_3 + \frac{r\Delta}{\rho^4} e_4 \otimes e_3\rangle  \;.
\end{split}
\end{equation*}
It is easy to check each of the terms and verify that $\big(\nabla_\mu \pi(\Omega_{\CH}^i)^{\mu \nu}\big)(e_3)_{\nu} = \mathcal{O}(r-r_-)$ for $r \to r_-$.

In order to discuss the term $\pi(\Omega_{\CH}^i)^{\mu \nu} \nabla_\mu \nabla_\nu \psi$, we again, as in Section \ref{BoundednessSecOrdEnBlueSec}, split the domain of integration in two regions $A$ and $A'$ and decompose $\pi(\Omega_{\CH}^i)^{\mu \nu} \nabla_\mu \nabla_\nu \psi$ in region $A$ with respect to the frame $\{e_3, e_4, e_1, e_2\}$ and in region $A'$ with respect to a frame $\{e_3, e_4, e_1', e_2'\}$, where again $e_1'$ and $e_2'$ form a smooth frame field for $\Pi^\perp$ in $A'$. As before we limit our discussion to the region $A$. 
\begin{enumerate}
\item We first consider the terms arising from $\pi(\Omega^i_{\CH})^{kl}$, where $k,l \in \{1,2\}$. The arising second order terms can be absorbed in the second term of \eqref{SecOrdEnergyEstBlue} {after choosing $\varepsilon$ to be sufficiently small}. To see that the arising first order terms (e.g.\ $\nabla_{e_1} e_1 \psi$) do not contain non-degenerate $e_4 \psi$ terms, we note that \eqref{CovDer} shows
\begin{equation*}
\langle \nabla_{e_k} e_l, e_3\rangle  = - \langle  e_l, \nabla_{e_k} e_3\rangle  = \mathcal{O}(r-r_-) \textnormal{ for } r \to r_-\;.
\end{equation*}
\item The second order terms arising from $\pi(\Omega^i_{\CH})^{\mu \nu}$, where $\mu = 3$, $\nu = 1,2$ can be absorbed (and controlled), and the arising first order terms do not contain $e_4$-derivatives since $\langle \nabla_{e_3} e_\nu, e_3\rangle  =0$.
\item It follows from \eqref{DefTen} that $\pi(\Omega^i_{\CH})^{\mu \nu}$, where $\mu = 4$, $\nu = 1,2$, is $\mathcal{O}(r - r_-)$ for $r \to r_-$, and hence the second order terms can be absorbed (and controlled) and the first order terms can be controlled.
\item It follows from \eqref{DefTen} that $\pi(\Omega^i_{\CH})^{\mu \nu} = 0$ for $\mu = \nu = 3$ and $\mu = \nu = 4$.
\item It remains to deal  with the terms arising from $\pi(\Omega^i_{\CH})^{34}$. Since $\psi$ satisfies the wave equation we have
\begin{equation*}
(e_3 \otimes e_4)^{\mu \nu} \nabla_{\mu} \nabla_{\nu} \psi = \big(g|_{\Pi^\perp}\big)^{\mu \nu} \nabla_\mu \nabla_\nu \psi \;,
\end{equation*}
and we have already shown how to deal with the terms on the right hand side in the first point.
\end{enumerate}
This finishes the proof of \eqref{SecOrderGronwallBlue} and hence Proposition \ref{PropILED2} is proved.

\subsection{Improved integrated energy decay for the {$e_1$ and $e_2$} derivatives}\label{sec.IILED}

{In this subsection, we complete the necessary stability estimates by improving the weights for $(e_1\psi)^2$ and $(e_2\psi)^2$ in the estimate in Proposition \ref{PropFirstDerILED} (see Proposition \ref{PropImprovedILEDAngular} below). Key to this improved integrated energy decay estimate is the following lemma, which is a Hardy inequality:}
{
\begin{lemma}\label{lemma.Hardy}
There exists $\tilde{r}\in (r_-,\rblue]$ sufficiently close to $r_-$ such that the following estimate holds for all smooth functions $\phi$ on $\mathcal M$:
\begin{equation*}
\int\limits_{\{r \leq \tilde{r}\} \cap \{f^- \leq u_1\}} \frac{1}{(-\Delta)^\alpha} \phi^2 \, \vol \leq C_\alpha \Big( \int\limits_{\Sigma_{\tilde{r}} \cap \{f^- \leq u_1\}} \phi^2 \, \volr + \int\limits_{\{r \leq \tilde{r}\} \cap \{f^- \leq u_1\}} \frac{1}{(-\Delta)^\alpha} (e_3 \phi)^2 \, \vol \Big) \;.
\end{equation*}
\end{lemma}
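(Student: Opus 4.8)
The plan is to prove the lemma by a weighted integration-by-parts (multiplier) argument in the region $\Omega := \{r \leq \tilde{r}\} \cap \{f^- \leq u_1\}$, using the vector field $w(r)\,e_3$ with the radial weight $w(r) := (-\Delta)^{-\alpha}$. Since the wave equation plays no role here, $\phi$ may be an arbitrary smooth function. The starting point is the pointwise divergence identity
\[
\nabla_\mu\big(w\phi^2 e_3^\mu\big) = w'(r)\,(e_3 r)\,\phi^2 + 2w\phi\,(e_3\phi) + w\phi^2\,\nabla_\mu e_3^\mu \;,
\]
where I use $e_3 r = \Delta/\rho^2$. Integrating over $\Omega$ and applying Stokes' theorem turns the left-hand side into boundary fluxes $\int_{\partial\Omega} \iota_{w\phi^2 e_3}\,\vol$. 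The strategy is then to arrange that the bulk $\phi^2$ terms are coercive, i.e. bounded below by a multiple of $(-\Delta)^{-\alpha}\phi^2$; to show the $\CH$ and $\Sigma^-_{u_1}$ boundary fluxes can be discarded; and to absorb the cross term $2w\phi(e_3\phi)$ by Cauchy--Schwarz.

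The key computation is the following. Using the connection coefficients in \eqref{CovDer}, one finds $\nabla_\mu e_3^\mu = \frac{2r\Delta}{\rho^4} + \partial_r\!\big(\frac{\Delta}{\rho^2}\big)$, which at $r = r_-$ equals $\frac{\Delta'(r_-)}{\rho^2} = \frac{r_- - r_+}{\rho^2} < 0$. Inserting $w = (-\Delta)^{-\alpha}$ and using $\Delta'(r_-) = r_- - r_+$, the bulk coefficient $P(r)$ of $\phi^2$ satisfies, to leading order as $r \to r_-$,
\[
P(r) := w'\frac{\Delta}{\rho^2} + w\,\nabla_\mu e_3^\mu = -(1-\alpha)\,\frac{r_+ - r_-}{\rho^2}\,(-\Delta)^{-\alpha}\,\big(1 + o(1)\big) \;.
\]
The decisive point is the factor $(1-\alpha) > 0$: the contribution $w'\frac{\Delta}{\rho^2} \to \alpha (r_+-r_-)\rho^{-2}(-\Delta)^{-\alpha}$ from differentiating the weight, and the a priori dangerous (same-order!) contribution $w\,\nabla_\mu e_3^\mu \to -(r_+-r_-)\rho^{-2}(-\Delta)^{-\alpha}$ from the divergence of $e_3$, partially cancel and leave a definite coercive sign precisely because $\alpha < 1$. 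This is exactly why one must take $\tilde{r}$ close enough to $r_-$: for such $\tilde{r}$ the $o(1)$ errors are dominated and $P(r) \leq -c_\alpha (-\Delta)^{-\alpha}$ on $\{r \leq \tilde{r}\}$ for some $c_\alpha > 0$.

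For the boundary terms I would proceed as follows. On $\Sigma_{\tilde{r}}$ the weight $w(\tilde{r})$ is a finite constant and $\iota_{e_3}\vol$ restricted to $\{r = \tilde{r}\}$ is a bounded multiple of $\volr$, so $\big|\int_{\Sigma_{\tilde{r}}} \iota_{w\phi^2 e_3}\vol\big| \lesssim \int_{\Sigma_{\tilde{r}} \cap \{f^- \leq u_1\}} \phi^2\,\volr$, producing the first term on the right-hand side of the lemma. Since $e_3$ is tangent to $\CH$ (its radial component $\Delta/\rho^2$ vanishes there), the flux through $\{r = r_- + \eta\}$ has density $\sim w\cdot(-\Delta) \sim (-\Delta)^{1-\alpha} \to 0$ as $\eta \to 0$, again using $\alpha < 1$; carrying out the estimate first on $\{r_- + \eta \leq r \leq \tilde{r}\} \cap \{f^- \leq u_1\}$ and sending $\eta \to 0$ along a sequence on which this flux vanishes (such a sequence exists whenever the left-hand side is finite, and there is nothing to prove otherwise after truncating $\phi$) removes the $\CH$ boundary. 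For $\Sigma^-_{u_1}$ I would use $\iota_{e_3}\vol\big|_{\Sigma^-_{u_1}} = (e_3 f^-)\,\volsn$ together with $e_3 f^- = \frac{2(r^2+a^2) - \Delta}{\rho^2} > 0$, so that this flux is nonnegative with respect to the outward orientation and may simply be dropped. Finally, the cross term is handled by $2w|\phi|\,|e_3\phi| \leq \varepsilon\, c_\alpha (-\Delta)^{-\alpha}\phi^2 + \varepsilon^{-1} C_\alpha (-\Delta)^{-\alpha}(e_3\phi)^2$, where the bookkeeping $w^2/\big(c_\alpha(-\Delta)^{-\alpha}\big) \sim (-\Delta)^{-\alpha}$ produces exactly the second term of the lemma; choosing $\varepsilon$ small absorbs the first into the coercive bulk term.

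I expect the only genuine difficulty to be the interplay between the weight derivative and $\nabla_\mu e_3^\mu$: because $\nabla_\mu e_3^\mu$ does not vanish at $\CH$, the term $w\,\nabla_\mu e_3^\mu$ is of the same order $(-\Delta)^{-\alpha}$ as the quantity we wish to bound, so a crude estimate would fail. The argument survives only through the exact cancellation leaving the factor $(1-\alpha)$, which simultaneously forces the restriction to $r$ near $r_-$ and constitutes the analytic heart of the Hardy inequality. Tracking the signs and the $o(1)$ remainders uniformly on $\{r \leq \tilde{r}\}$, and justifying the removal of the $\CH$ flux, are the points requiring care.
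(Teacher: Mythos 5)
Your proposal is correct and is essentially the paper's own proof: the same multiplier $(-\Delta)^{-\alpha}e_3$, the same exact cancellation between the weight derivative and $\mathrm{div}(e_3)=\rho^{-2}\partial_r\Delta$ leaving the coercive factor $(1-\alpha)$, the same sign analysis of the boundary fluxes, and the same Cauchy--Schwarz absorption. The only (cosmetic) difference is the truncation used for the approximation argument --- the paper cuts off in $f^+$, which makes the region compact in one step, whereas your cutoff at $r=r_-+\eta$ still needs a supplementary truncation in $v_+$ before Stokes applies.
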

}
\begin{proof}
Let $h$ and $\phi$ be smooth functions on $\mathcal M$. The product rule gives
\begin{equation}
\label{ProductRule}
\int\limits_{\substack{\{r \leq \tilde{r}\} \cap \{f^- \leq u_1\} \\ \cap \{f^+ \leq v\}}} \big( \mathcal{L}_{e_3} h + h \, \mathrm{div} (e_3)\big) \phi^2 \, \vol = \int\limits_{\substack{\{r \leq \tilde{r}\} \cap \{f^- \leq u_1\} \\ \cap \{f^+ \leq v\}}} \mathcal{L}_{e_3}\big( h \phi^2 \, \vol\big) - \int\limits_{\substack{\{r \leq \tilde{r}\} \cap \{f^- \leq u_1\} \\ \cap \{f^+ \leq v\}}} 2h \phi \,(\mathcal{L}_{e_3} \phi) \, \vol \;.
\end{equation}
Setting $h = \frac{1}{(-\Delta)^\alpha}$, we compute
\begin{equation*}
e_3 h = -\frac{\alpha}{\rho^2} \partial_r \Delta \cdot \frac{1}{(-\Delta)^\alpha}
\end{equation*}
and
\begin{equation*}
\mathrm{div}(e_3) = \frac{2r\Delta}{\rho^4} + \partial_r \Big(\frac{\Delta}{\rho^2}\Big) \;.
\end{equation*}
We thus obtain
\begin{equation*}
e_3 h + h \, \mathrm{div}(e_3) = \frac{1}{(-\Delta)^\alpha} \Big(\frac{1 - \alpha}{\rho^2} \partial_r \Delta\Big) \;.
\end{equation*}
Hence, \eqref{ProductRule} together with Stokes' lemma yields
\begin{equation*}
\begin{split}
\int\limits_{\substack{\{r \leq \tilde{r}\} \cap \{f^- \leq u_1\} \\ \cap \{f^+ \leq v\}}} \frac{1- \alpha}{\rho^2}(-\partial_r \Delta) \cdot \frac{1}{(-\Delta)^\alpha} \phi^2 \, \vol &\leq \int\limits_{\substack{\{r \leq \tilde{r}\} \cap \{f^- \leq u_1\} \\ \cap \{f^+ \leq v\}}} \frac{2}{(-\Delta)^\alpha}\phi (e_3 \phi) \, \vol \\
&\qquad+ \int\limits_{\substack{\Sigma_{\tilde{r}} \cap \{f^- \leq u_1\} \\ \cap \{f^+ \leq v\}}} \frac{1}{(-\Delta)^\alpha} \phi^2 \cdot \Big(-\frac{\Delta}{\rho^2}\Big)\, \volr \;,
\end{split}
\end{equation*}
where we have dropped the negative boundary terms on the right hand side. Note that $(-\partial_r \Delta) (r_-) > 0$. Hence{, w}e can choose $\tilde{r}$ {sufficiently close} to $r_-$ so that after letting $v \to \infty$ and {using the} Cauchy{--}Schwarz {inequality} we obtain
\begin{equation*}
\int\limits_{\{r \leq \tilde{r}\} \cap \{f^- \leq u_1\}} \frac{1}{(-\Delta)^\alpha} \phi^2 \, \vol \leq C_\alpha \Big( \int\limits_{\Sigma_{\tilde{r}} \cap \{f^- \leq u_1\}} \phi^2 \, \volr + \int\limits_{\{r \leq \tilde{r}\} \cap \{f^- \leq u_1\}} \frac{1}{(-\Delta)^\alpha} (e_3 \phi)^2 \, \vol \Big) \;.
\end{equation*}
\end{proof}
{Using the above lemma, we obtain the following improved integrated energy decay:}
\begin{proposition}
\label{PropImprovedILEDAngular}
Under Assumptions \ref{Assumption1} and \ref{Assumption2} the following holds: Let $\alpha \in {[}0,1)$ and $u_1 \in \R$. Then there exists a constant $C> 0$ such that
\begin{equation}
\label{ImprovedILED}
\int\limits_{\{ r\leq \rblue\} \cap \{f^- \leq u_1\}} \frac{1}{(-\Delta)^\alpha} \Big( \Delta^2 \psi_4^2 + \psi_3^2 +  \psi_1^2 + \psi_2^2 \Big)\, \vol \leq C \;
\end{equation}
{for $\rblue \in (r_-, r_+)$ as in Proposition \ref{PropFirstDerILEDDecay}.}
\end{proposition}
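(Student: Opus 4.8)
The plan is to split the region of integration $\{r \le \rblue\}$ into the piece $\{\tilde r \le r \le \rblue\}$, where the weight $\frac{1}{(-\Delta)^\alpha}$ is harmless, and the piece $\{r \le \tilde r\}$ adjacent to the Cauchy horizon, where the improvement is genuine; here $\tilde r \in (r_-,\rblue]$ is the radius furnished by Lemma \ref{lemma.Hardy}. On $\{\tilde r \le r \le \rblue\}$ one has $-\Delta \gtrsim 1$, so $\frac{1}{(-\Delta)^\alpha} \lesssim 1$ and the full integrand is pointwise dominated by $\Delta^2 \psi_4^2 + \psi_3^2 + \psi_1^2 + \psi_2^2$; the integral over this region is then bounded directly by Proposition \ref{PropFirstDerILED}. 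On the remaining piece $\{r \le \tilde r\}$ the terms $\frac{1}{(-\Delta)^\alpha}(\Delta^2 \psi_4^2 + \psi_3^2)$ are again already contained in Proposition \ref{PropFirstDerILED}, so the only new content is the bound for $\frac{1}{(-\Delta)^\alpha}(\psi_1^2 + \psi_2^2)$ on $\{r \le \tilde r\} \cap \{f^- \le u_1\}$.

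For this the key algebraic observation is that the angular derivatives are recovered from the commutator vector fields: a direct computation from \eqref{DefCom} (in which the $e_1$--$e_2$ cross terms and the $\theta$-dependence cancel) gives $\sum_{i=1}^3 (\Omega_{\CH}^i \psi)^2 = \rho^2(\psi_1^2 + \psi_2^2)$. Since $\rho$ is bounded above and below on $\{r \le \tilde r\}$, it therefore suffices to bound $\int_{\{r \le \tilde r\} \cap \{f^- \le u_1\}} \frac{1}{(-\Delta)^\alpha} (\Omega_{\CH}^i \psi)^2 \, \vol$ for each $i \in \{1,2,3\}$. I apply the Hardy inequality of Lemma \ref{lemma.Hardy} with $\phi = \Omega_{\CH}^i \psi$, which reduces this to controlling the bulk term $\int_{\{r \le \tilde r\} \cap \{f^- \le u_1\}} \frac{1}{(-\Delta)^\alpha} (e_3 \Omega_{\CH}^i \psi)^2 \, \vol$ and the boundary term $\int_{\Sigma_{\tilde r} \cap \{f^- \le u_1\}} (\Omega_{\CH}^i \psi)^2 \, \volr$.

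The bulk term is exactly one of the quantities controlled by the second-derivative estimate of Proposition \ref{PropILED2}, hence it is $\le C$. For the boundary term, note that on the fixed slice $\Sigma_{\tilde r}$ we have $-\Delta \gtrsim 1$, so $(\Omega_{\CH}^i \psi)^2 \sim \rho^2(\psi_1^2 + \psi_2^2)$ is a \emph{first}-order quantity whose flux through $\Sigma_{\tilde r}$ is finite: running the first-derivative energy estimate with multiplier $w_\alpha X$ in the region $\{\tilde r \le r \le \rblue\} \cap \{f^- \le u_1\}$, exactly as in the derivation of \eqref{ILEDCH}, and invoking the flux bound \eqref{DecayUpToRBlue} on $\Sigma_{\rblue}$, bounds $\int_{\Sigma_{\tilde r} \cap \{f^- \le u_1\}} (\psi_1^2 + \psi_2^2) \, \volr$, the boundary contribution on $\Sigma^-_{u_1}$ appearing there having a favourable sign and being dropped. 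Combining the bulk and boundary contributions, summing over $i$, and adding the bound on $\{\tilde r \le r \le \rblue\}$ yields \eqref{ImprovedILED}.

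The main obstacle is conceptual rather than computational: the angular weight cannot be improved at the level of first derivatives, because the multiplier bulk \eqref{MultiplierBlue} of $w_\alpha X$ only supplies $\psi_1^2 + \psi_2^2$ at weight $1$, the improved weight $\frac{1}{(-\Delta)^\alpha}$ being available only for $\psi_3^2$ and $\Delta^2 \psi_4^2$. The improvement must therefore be routed through the second-order estimate of Proposition \ref{PropILED2}, which controls $e_3 \Omega_{\CH}^i \psi$ at the good weight, and this $e_3$-derivative is then traded for the undifferentiated angular derivative by the Hardy inequality. The only delicate point is to check that the boundary term produced by the Hardy inequality lives on a slice $\Sigma_{\tilde r}$ safely away from $\CH$, where it collapses to an already-controlled first-order flux.
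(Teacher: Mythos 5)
Your proposal is correct and follows essentially the same route as the paper: reduce via Proposition \ref{PropFirstDerILED} to the angular terms near $r_-$, use $\sum_{i=1}^3(\Omega^i_{\CH}\psi)^2=\rho^2(\psi_1^2+\psi_2^2)$, apply Lemma \ref{lemma.Hardy} with $\phi=\Omega^i_{\CH}\psi$, control the bulk term by Proposition \ref{PropILED2}, and control the boundary flux on $\Sigma_{\tilde r}$ by a first-order energy estimate fed by \eqref{DecayUpToRBlue}. The only cosmetic difference is your choice of multiplier ($w_\alpha X$ rather than the paper's modification of $\tilde w X$) for that boundary flux, which is immaterial since the corresponding energy densities are comparable on the fixed slice $\Sigma_{\tilde r}$.
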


\begin{proof}
{By Proposition \ref{PropFirstDerILED}, we only need to prove the estimate for $\psi_1^2$ and $\psi_2^2$. Moreover, using Proposition \ref{PropFirstDerILED} again, it suffices to prove the desired bound after restricting the domain of integration to $\{r\leq  \tilde{r}\}$. }

{Applying Lemma \ref{lemma.Hardy} with $\phi = \Omega_{\CH}^i \psi$ and summing in $i$, we obtain }
{
\begin{equation}\label{ImprovedILED.1}
\begin{split}
&\sum_{i=1}^3\int\limits_{\{r \leq \tilde{r}\} \cap \{f^- \leq u_1\}} \frac{1}{(-\Delta)^\alpha} (\Omega_{\CH}^i \psi)^2 \, \vol \\
\leq &C_\alpha \sum_{i=1}^3\Big( \underbrace{\int\limits_{\Sigma_{\tilde{r}} \cap \{f^- \leq u_1\}} (\Omega_{\CH}^i \psi)^2 \, \volr }_{=:I}+ \underbrace{\int\limits_{\{r \leq \tilde{r}\} \cap \{f^- \leq u_1\}} \frac{1}{(-\Delta)^\alpha} (e_3 \Omega_{\CH}^i \psi)^2 \, \vol }_{=:II}\Big) \;.
\end{split}
\end{equation}
By the definition of $\Omega^i_{\CH}$, we have 
\begin{equation}\label{Om.e12,compare}
(e_1\psi)^2+(e_2\psi)^2\sim \sum_{i=1}^3  (\Omega^i_{\CH}\psi)^2
\end{equation}
Returning to \eqref{ImprovedILED.1}, $I$ is bounded: This can be proven using \eqref{Om.e12,compare} and also the argument\footnote{Indeed, this can be proven by straightforward modifications of the vector field $\tilde{w}X$.} that leads up to \eqref{DecayUpToRBlue}. $II$ is also bounded by Propositions \ref{PropILED2}. Therefore, the left hand side of \eqref{ImprovedILED.1} is bounded, which then implies the desired conclusion by \eqref{Om.e12,compare}.}
\end{proof}

\section{Instability estimates}\label{sec.instability}

\subsection{From the event horizon to the {hypersurface} $\gamma_{{\sigma}}$}
\label{sec.instability.1}

\begin{proposition}
\label{PropLowBoundGamma}
Let $\sigma > 0$ be given. Under the assumptions of Theorem \ref{MainThm}, and also if $ii)$ is replaced by $ii')$ of Remark \ref{RemarkToThm}, we have
\begin{equation*}
\int\limits_{\gamma_\sigma \cap \{f^+ \geq v_k\}} \T[\psi]\big(-\Delta e_4,(-df_{\gamma_\sigma})^\sharp\big) \, \volg \gtrsim v_k^{-(q+\delta)} 
\end{equation*}
for $k \in \N$ big enough, where $v_k \in \R$ is a sequence with $v_k \to \infty$ for $k \to \infty$.
\end{proposition}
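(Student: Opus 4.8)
The plan is to follow the Reissner--Nordstr\"om model of Proposition~\ref{RN.lower.bound}, using the Killing field $T_{\CH}$ in place of $\partial_t$. First I would apply the energy estimate~\eqref{EnergyEst} with multiplier $Z = T_{\CH}$ on the region $D = \{f^+ \geq \V\} \cap \{f_{\gamma_\sigma} \leq 1\}$, exhausting $D$ by the compact subregions $D \cap \{v_+ \leq V_2\}$ and letting $V_2 \to \infty$ (the flux on $\{v_+ = V_2\}$ tends to zero by the decay of Proposition~\ref{PropFirstDerILEDDecay}, the standard approximation argument used after~\eqref{ILEDCH}). Since $T_{\CH}$ is Killing, $\pi(T_{\CH}) = 0$ and the bulk term in~\eqref{EnergyEst} vanishes identically, so one obtains a genuine conservation law. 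The boundary $\partial D$ splits into $\gamma_\sigma \cap \{f^+ \geq \V\}$, the horizon piece $\Hp \cap \{v_+ \geq \V\}$, and the spacelike slice $\Sigma^+_\V \cap \{f_{\gamma_\sigma} \leq 1\}$, so that, with the orientations fixed in Section~\ref{sec.geometry},
\begin{equation*}
\int\limits_{\gamma_\sigma \cap \{f^+ \geq \V\}} \T[\psi](T_{\CH}, (-df_{\gamma_\sigma})^\sharp) \, \volg = \int\limits_{\Hp \cap \{v_+ \geq \V\}} \T[\psi](T_{\CH}, \tilde{e}_4) \, \volr + \int\limits_{\Sigma^+_\V \cap \{f_{\gamma_\sigma} \leq 1\}} \T[\psi](T_{\CH}, (-df^+)^\sharp) \, \volsp \;.
\end{equation*}

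The first term on the right is exactly the quantity controlled from below by assumption $ii)$, hence $\geq C^{-1}\V^{-(q+\delta)}$ by~\eqref{LowBound}; this is the origin of the instability. Under the alternative hypothesis $ii')$ the integrand $\T[\psi](T_{\CH}, \tilde{e}_4)$ is \emph{manifestly non-negative} on $\Hp$ by axisymmetry, so $\V \mapsto \int_{\Hp \cap \{v_+ \geq \V\}} \T[\psi](T_{\CH}, \tilde{e}_4)\,\volr$ is monotone non-increasing; the bound~\eqref{AltLowBound} along $w_k$ then also holds for every $\V \leq w_k$, which is precisely the freedom needed to align it with the sequence constructed next.

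The crux is to show that the boundary term over $\Sigma^+_\V \cap \{f_{\gamma_\sigma} \leq 1\}$ is negligible compared with $\V^{-(q+\delta)}$ along a subsequence. Expanding $\T[\psi](T_{\CH}, (-df^+)^\sharp)$ in the principal frame and using that near $\CH$ the field $T_{\CH}$ is proportional to the generator $e_3$ (by~\eqref{CoordVecInFrame}, $T_{\CH} = -\tfrac12(r_-^2 + a^2\Co^2)\,e_3 + \mathcal{O}(-\Delta)$), one sees the integrand is controlled by $(e_3\psi)^2$ and $(e_1\psi)^2 + (e_2\psi)^2$ with bounded coefficients, the $(e_4\psi)^2$ contribution carrying an extra $\Delta^2$ --- crucially \emph{without} the singular weight $(-\Delta)^{-\alpha}$. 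Since on $\Sigma^+_\V \cap \{f_{\gamma_\sigma}\leq 1\}$ one is close to $\gamma_\sigma$, where $-\Delta \sim v_+^{-|\kappa_-|\sigma}$ decays polynomially, feeding the weighted bulk estimate of Proposition~\ref{PropFirstDerILEDDecay} (which bounds the integral over $\{f^+ \geq \V\}\cap\{f_{\gamma_\sigma}\leq 1\}$ by $\V^{-q}$) into a pigeonhole argument in $f^+$, as in Lemma~\ref{DecayLemma} and Section~\ref{SecPutTogether}, should produce a sequence $v_k \to \infty$ along which this term is $\lesssim v_k^{-(q+1) + C\sigma(1-\alpha)}$. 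Because $\delta < 1$, for the given $\sigma$ I can fix $\alpha < 1$ close enough to $1$ that $C\sigma(1-\alpha) < 1 - \delta$, making the error $o(v_k^{-(q+\delta)})$. \textbf{This balancing of the weight exponent $\alpha$ against the instability rate $q + \delta$, together with the $\sigma$-dependent bookkeeping of the smallness of $-\Delta$ along $\Sigma^+_\V$, is the main obstacle, and it is exactly where the stability estimates of Section~\ref{sec.stability} are indispensable.}

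It remains to pass from the flux of $T_{\CH}$ to the flux of $-\Delta e_4$ appearing in the statement. Setting $R := -\Delta e_4 - T_{\CH}$ and using~\eqref{CoordVecInFrame}, one computes $R = \tfrac12(r_-^2 + a^2 \Co^2)\,e_3 + \mathcal{O}(-\Delta)\,(e_2, e_4)$, whence $\langle R, R\rangle = 2\Delta (r_-^2 + a^2\Co^2)\big(1 - \tfrac{r_-^2 + a^2\Co^2}{2\rho^2}\big) + \mathcal{O}(\Delta^2) < 0$ for $r$ close enough to $r_-$ (the bracket tends to $\tfrac12$ there); thus $R$ is future-directed timelike near $\CH$. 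As $(-df_{\gamma_\sigma})^\sharp$ is future-directed timelike, the dominant energy condition gives $\T[\psi](R, (-df_{\gamma_\sigma})^\sharp) \geq 0$, that is $\T[\psi](-\Delta e_4, (-df_{\gamma_\sigma})^\sharp) \geq \T[\psi](T_{\CH}, (-df_{\gamma_\sigma})^\sharp)$ pointwise. For $k$ large the set $\gamma_\sigma \cap \{f^+ \geq v_k\}$ lies in $\{r \text{ close to } r_-\}$, since on $\gamma_\sigma$ one has $r^* = \tfrac12(1 + \sigma\log v_+) \to \infty$ and hence $r \searrow r_-$, so the pointwise inequality holds on the whole domain of integration. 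Combining the three ingredients along $v_k$ then yields
\begin{equation*}
\int\limits_{\gamma_\sigma \cap \{f^+ \geq v_k\}} \T[\psi]\big(-\Delta e_4, (-df_{\gamma_\sigma})^\sharp\big) \, \volg \geq \int\limits_{\gamma_\sigma \cap \{f^+ \geq v_k\}} \T[\psi]\big(T_{\CH}, (-df_{\gamma_\sigma})^\sharp\big) \, \volg \gtrsim v_k^{-(q+\delta)} \;,
\end{equation*}
as claimed.
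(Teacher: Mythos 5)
Your proposal follows the paper's proof essentially step for step: the $T_{\CH}$-conservation law on $\{f^+\geq \V\}\cap\{f_{\gamma_\sigma}\leq 1\}$, the use of assumption $ii)$ (or the monotonicity of $\V\mapsto\int_{\Hp\cap\{v_+\geq\V\}}\T[\psi](T_{\CH},\tilde e_4)\,\volr$ under $ii')$) for the horizon flux, the pigeonhole extraction of $v_k\in[\tfrac12 w_k,w_k]$ from Proposition \ref{PropFirstDerILEDDecay} to kill the $\Sigma^+_{v_k}$ boundary term, the balancing of $\alpha$ against $\delta$, and finally the comparison of the $T_{\CH}$-flux with the $-\Delta e_4$-flux via the dominant energy condition. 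The last step is a cosmetic variant: you take $R=-\Delta e_4-T_{\CH}$ future-directed timelike for $r$ near $r_-$ (and check that $\gamma_\sigma\cap\{f^+\geq v_k\}$ lies in that region for large $k$), whereas the paper writes $T_{\CH}=Y-\frac{\Delta}{2\rho^2}(B+r_-^2+a^2\Co^2)e_4$ with $Y$ past-directed timelike for $B$ large, which works on all of $\gamma_\sigma$ at the cost of a bounded factor; both are fine.

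One assertion in your treatment of the $\Sigma^+_\V$ boundary term is wrong as stated, and it happens to sit at the heart of the matter. You claim that $\T[\psi](T_{\CH},(-df^+)^\sharp)$ is controlled by $(e_3\psi)^2$ and $(e_1\psi)^2+(e_2\psi)^2$ \emph{with bounded coefficients}. It is not: while $T_{\CH}=-\tfrac12(r_-^2+a^2\Co^2)e_3+\mathcal{O}(-\Delta)$ is regular at $\CH$, the other factor $(-df^+)^\sharp=-\frac{a\Si}{\rho}e_2-\frac{r^2+2Mr+a^2}{2\Delta}e_3-\frac{\Delta}{2\rho^2}e_4$ has an $e_3$-component of size $(-\Delta)^{-1}$, so the pairing produces the term $\frac{1}{-\Delta}(e_3\psi)^2$ (cf.\ \eqref{SigmaPlusEnergyControlled}). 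Since the stability estimate only controls $\frac{1}{(-\Delta)^{\alpha}}(e_3\psi)^2$ with $\alpha<1$, one must pay $\sup_{\Sigma^+_{v_k}\cap\{f_{\gamma_\sigma}\leq1\}}(-\Delta)^{-(1-\alpha)}\lesssim v_k^{|\kappa_-|\sigma(1-\alpha)}$, and this is precisely the origin of the loss $v_k^{+C\sigma(1-\alpha)}$ that you (correctly) carry into the final exponent. Had your pointwise bound been true, the pigeonhole would give $v_k^{-(q+1)}$ outright and no balancing of $\alpha$ against $1-\delta$ would be needed; the fact that your final bookkeeping nevertheless includes the loss and the $\alpha$-tuning means the argument you actually run is the correct one, but the justification of the pointwise bound should be repaired to exhibit the $(-\Delta)^{-1}$ weight on $(e_3\psi)^2$ explicitly.
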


Let us remark that if $ii)$ is not replaced by $ii')$, then one even has
\begin{equation*}
\int\limits_{\gamma_\sigma \cap \{f^+ \geq \V\}} \T[\psi]\big(-\Delta e_4,(-df_{\gamma_\sigma})^\sharp\big) \, \volg \gtrsim \V^{-(q+\delta)} \;
\end{equation*}
{for every $V\geq 1$.}
This stronger statement is, however, not needed in what follows.

\begin{proof}
We only give the proof in the case that assumption $ii)$ of Theorem \ref{MainThm} is replaced by assumption $ii')$ of Remark \ref{RemarkToThm}. The other case is analogous, but easier.

Using \eqref{gInverse} and \eqref{CoordVecInFrame} we compute
\begin{equation}
\label{DFPlus}
(-df^+)^\sharp = -\frac{a\Si}{\rho} \, e_2 - \frac{r^2 + 2Mr + a^2}{2 \Delta} \, e_3 - \frac{\Delta}{2\rho^2} \, e_4 
\end{equation}
and
\begin{equation}
\label{HawkingInFrame}
\begin{split}
T_{\CH} &= (r_-^2 + a^2) \, \partial_t + a \, \partial_\varphi\\
&= \frac{a\Si}{\rho}(r^2 - r_-^2) \, e_2 - \frac{1}{2}(r_-^2 + a^2 \Co^2) \, e_3 - \frac{\Delta}{2 \rho^2} (r_-^2 + a^2 \Co^2) \, e_4 \;.
\end{split}
\end{equation}
Abbreviating $\T[\psi]$ with $\T$, this implies
\begin{equation}
\label{SigmaPlusEnergyControlled}
\begin{split}
&\big|\T[\psi]\big(T_{\CH},(-df^+)^\sharp\big)\big| \\
\lesssim & -\Delta|\T(e_2, e_2)| +|\T(e_2, e_3)| - \Delta | \T(e_4, e_2)| +\frac{1}{-\Delta} | \T(e_3, e_3)| + |\T(e_3, e_4)| + \Delta^2 |\T(e_4,e_4)| \\
\lesssim & \frac{1}{-\Delta} \psi_3^2 + (\Delta \psi_4)^2 + \psi_1^2 + \psi_2^2 \;.
\end{split}
\end{equation}
We now consider the sequence $w_k \in \R$ from assumption $ii')$ of Remark \ref{RemarkToThm}. Recalling $\vol = df^+ \wedge \volsp$ and using the pigeonhole principle, Proposition \ref{PropFirstDerILEDDecay} implies that there exists a sequence $v_k \in [\frac{1}{2}w_k, w_k]$ and a constant $C> 0$ such that
\begin{equation}
\label{ExtraDecaySigmaPlus}
\begin{split}
\int\limits_{\Sigma^+_{v_k} \cap \{r \geq \rblue\}} &\Big( (\tilde{e}_4 \psi)^2 + (\tilde{e}_3 \psi)^2 + (e_2 \psi)^2 + (e_1 \psi)^2 \Big) \, \volsp    \\
&+ \int\limits_{\Sigma^+_{v_k} \cap \{r \leq \rblue\} \cap \{f_{\gamma_\sigma} \leq 1\}} \Big( \frac{1}{(-\Delta)^\alpha} \big[ \Delta^2 (e_4 \psi)^2 + (e_3 \psi)^2\big] + (e_2 \psi)^2 + (e_1 \psi)^2 \Big)\, \volsp \leq C \cdot v_k^{-(q + 1)}
\end{split}
\end{equation}
holds for all $k \in \N$ big enough.

At the heart of the proof is the energy estimate in the region $\{f^+ \geq v_k\} \cap \{f_{\gamma_\sigma} \leq 1\}$ with multiplier $T_{\CH}$:
\begin{equation}
\label{EnEstKilling}
\begin{split}
&\int\limits_{\gamma_\sigma \cap \{f^+ \geq v_k\}} \T[\psi]\big(T_{\CH}, (-df_{\gamma_\sigma})^\sharp\big) \, \volg \\
= &\int\limits_{\Sigma^+_{v_k} \cap \{f_{\gamma_\sigma} \leq 1\}} \T[\psi]\big(T_{\CH}, (-df^+)^\sharp\big) \, \volsp + \int\limits_{\Hp \cap \{f^+ \geq v_k\}} \T[\psi]\big(T_{\CH}, (dr)^\sharp\big)\, \volr {\;.}
\end{split}
\end{equation}
\begin{figure}[h]
  \centering
  \def\svgwidth{6cm}
    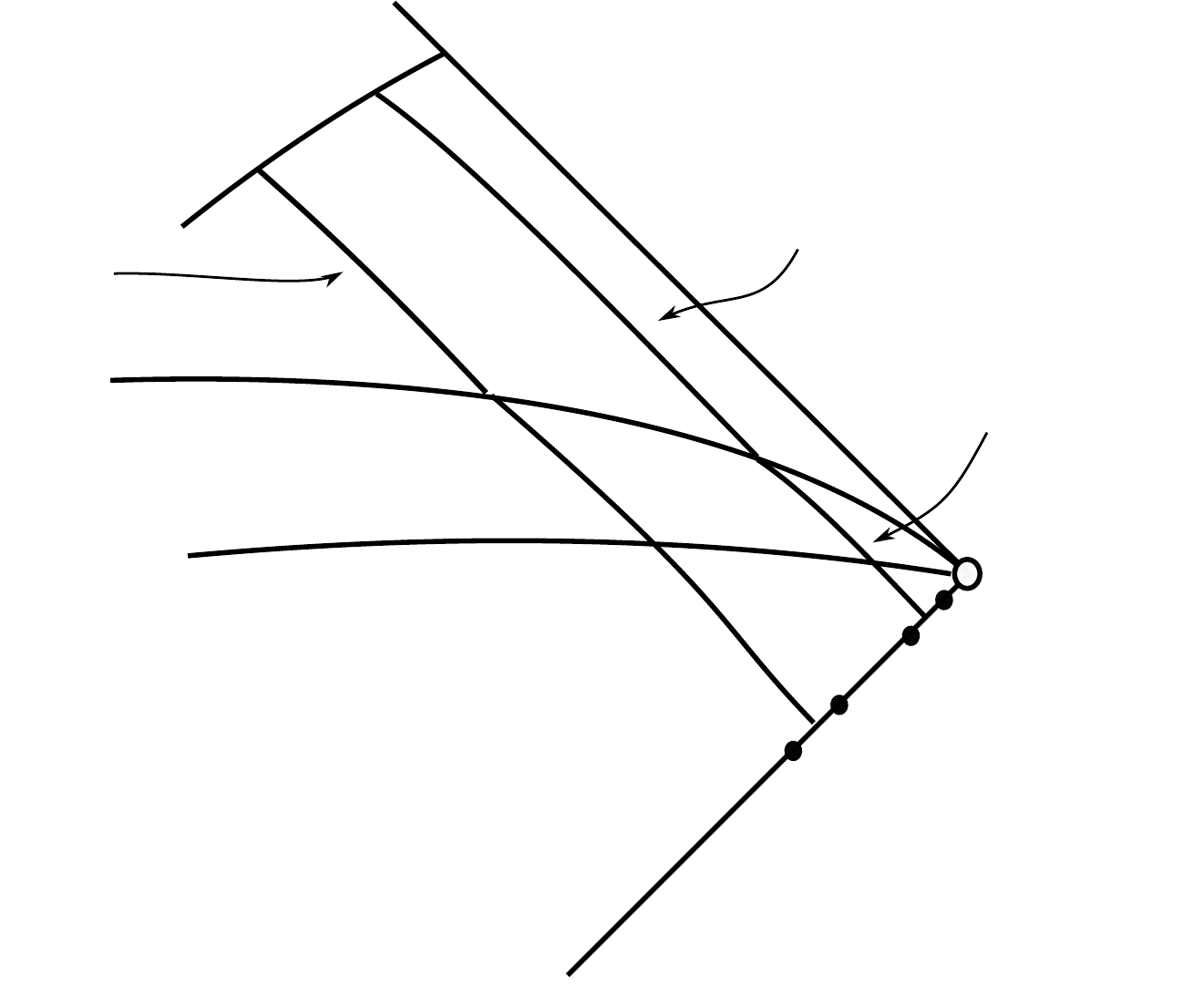
      \caption{Propagating the lower bound}
      \label{FigLastStep}
\end{figure}
We are going to show that the first term on the right hand side decays faster than the last term on the right hand side. It follows from \eqref{SigmaPlusEnergyControlled} that
\begin{equation}
\label{SupEst}
\begin{split}
&\int\limits_{\Sigma^+_{v_k} \cap \{f_{\gamma_\sigma} \leq 1\}} \big|\T[\psi]\big(T_{\CH}, (-df^+)^\sharp\big)\big| \, \volsp \\
\lesssim &\int\limits_{\Sigma^+_{v_k} \cap \{r \geq \rblue\}} \Big( (\tilde{e}_4 \psi)^2 + (\tilde{e}_3 \psi)^2 + (e_2 \psi)^2 + (e_1 \psi)^2 \Big) \, \volsp \\
&+ {\Big(}\sup\limits_{\substack{\Sigma^+_{v_k} \cap \{r \leq \rblue\} \\ \cap \{f_{\gamma_\sigma} \leq 1\}}} \frac{1}{(-\Delta)^{1-\alpha}}{\Big)} \cdot \int\limits_{\substack{\Sigma^+_{v_k} \cap \{r \leq \rblue\} \\ \cap \{f_{\gamma_\sigma} \leq 1\}}} \Big( \frac{1}{(-\Delta)^\alpha} \big[ \Delta^2 (e_4 \psi)^2 + (e_3 \psi)^2\big] + (e_2 \psi)^2 + (e_1 \psi)^2 \Big)\, \volsp \;.
\end{split}
\end{equation}
Moreover, \eqref{RCH} implies $-\Delta \gtrsim e^{\kappa_-(v_+ +v_-)}$ for $r \in [r_-, \rblue]$. Also using $f_{\gamma_\sigma} = v_+ + v_- - \sigma \log v_+ \leq 1$, we thus have
\begin{equation*}
\sup\limits_{\substack{\Sigma^+_{v_k} \cap \{r \leq \rblue\} \\ \cap \{f_{\gamma_\sigma} \leq 1\}}} \frac{1}{(-\Delta)^{1-\alpha}}  \lesssim \sup\limits_{\Sigma^+_{v_k}}\big(e^{-\kappa_- \sigma \log v_+}\big)^{1-\alpha} \lesssim {v_k}^{-\kappa_- \sigma (1 - \alpha)} \;.
\end{equation*}
Hence, \eqref{SupEst} together with \eqref{ExtraDecaySigmaPlus} yield
\begin{equation*}
\int\limits_{\Sigma^+_{v_k} \cap \{f_{\gamma_\sigma} \leq 1\}} \big|\T[\psi]\big(T_{\CH}, (-df^+)^\sharp\big)\big| \, \volsp  \leq C \cdot v_k^{-(q + 1) -\kappa_- \sigma(1-\alpha)}
\end{equation*}
for all $k \in \N$ big enough. Recall that assumption \eqref{AltLowBound} states $w_k^{-(q + \delta)} \lesssim \int_{\Hp \cap \{ v_+ \geq w_k \}} \T[\psi](T_{\CH}, \tilde{e}_4) \, \volr $. Since the integrand is manifestly non-negative for axisymmetric $\psi$, the same holds with ${\Hp \cap \{ v_+ \geq w_k \}}$ replaced by ${\Hp \cap \{ v_+ \geq v_k \}}$.
Putting these two things together, and using $ v_k \in [\frac{1}{2}w_k, w_k]$, we infer from \eqref{EnEstKilling} that
\begin{equation*}
\int\limits_{\gamma_\sigma \cap \{f^+ \geq v_k\}} \T[\psi]\big(T_{\CH}, (-df_{\gamma_\sigma})^\sharp\big) \, \volg  \geq C'\cdot  v_k^{-(q + \delta)} - C \cdot v_k^{-(q + 1) -\kappa_- \sigma(1-\alpha)}\;
\end{equation*}
{for some $C,C'> 0$.}
We can now choose $\alpha \in {[}0,1)$ sufficiently close to $1$, depending in particular on $\sigma > 0$ and $\delta \in [0,1)$, such that for all $k \in \N$ sufficiently large we have\footnote{Notice that the implicit constant here is allowed to depend on $\sigma$ and $\delta$.}
\begin{equation}
\label{LowerBoundKillingEn}
\int\limits_{\gamma_\sigma \cap \{f^+ \geq v_k\}} \T[\psi]\big(T_{\CH}, (-df_{\gamma_\sigma})^\sharp\big) \, \volg  \gtrsim  v_k^{-(q + \delta)} \;.
\end{equation}
We are going to write $T_{\CH}$ as a difference of a future and a past directed causal vector field.
Recalling \eqref{HawkingInFrame} and adding and subtracting $\frac{\Delta}{2 \rho^2} B \, e_4$, where $B$ is a positive constant, we obtain
\begin{equation*}
\begin{split}
T_{\CH} &= \frac{a\Si}{\rho}(r^2 - r_-^2) \, e_2 - \frac{1}{2}(r_-^2 + a^2 \Co^2) \, e_3 + \frac{\Delta}{2 \rho^2} B \, e_4 - \frac{\Delta}{2 \rho^2}(B + r_-^2 + a^2 \Co^2) \, e_4 \\
& =: Y - \frac{\Delta}{2 \rho^2}(B + r_-^2 + a^2 \Co^2) \, e_4 \;,
\end{split}
\end{equation*}
It now follows from
\begin{equation*}
\langle Y,Y\rangle  = \frac{a^2 \Si^2}{\rho^2} (r^2 - r_-^2)^2 + 2(r_-^2 + a^2 \Co^2) \frac{\Delta}{2\rho^2} B
\end{equation*}
that $0< B<\infty$ can be chosen big enough such that $Y$ is past directed timelike.
Since $\T[\psi]\big(Y, (-df_{\gamma_\sigma})^\sharp\big) \leq 0$ and $\frac{1}{2\rho^2}(B + r_-^2 + a^2 \Co^2)$ is bounded, \eqref{LowerBoundKillingEn} shows that
\begin{equation*}
\int\limits_{\gamma_\sigma \cap \{f^+ \geq v_k\}} \T[\psi]\big(-\Delta \, e_4, (-df_{\gamma_\sigma})^\sharp\big) \, \volg  \gtrsim  v_k^{-(q + \delta)} 
\end{equation*}
holds for all $k \in \N$ large enough. This finishes the proof of the proposition. 
\end{proof}

\subsection{From the {hypersurface} $\gamma_{{\sigma}}$ to the Cauchy horizon}
\label{sec.instability.2}

Starting from Proposition \ref{PropLowBoundGamma} we now finish the proof of Theorem \ref{MainThm}. At the centre of this step is the energy estimate with multiplier $L = -\frac{\Delta}{\rho^2} \, e_4$ in the region $\{f^- \leq u_0\} \cap \{f^+ \geq v_k\} \cap \{f_{\gamma_\sigma} \geq 1\}$ (see also Figure \ref{FigLastStep})\footnote{We briefly elaborate on how {one} derives this energy estimate by approximation by compact sets: using \eqref{DFPlus}, we obtain
\begin{equation*}
\big|\T[\psi]\big(-\Delta \, e_4, (-df^+)^\sharp\big) \big| \lesssim \Delta^2 \psi_4^2 + \psi_1^2 + \psi_2^2 \;.
\end{equation*}
It then follows from Proposition \ref{PropFirstDerILED} that there exists a sequence $w'_\ell \in \R$ with $w'_\ell \to \infty$ for $\ell \to \infty$ such that $\int_{\Sigma^+_{w'_\ell} \cap \{f^- \leq u_0\} \cap \{f_{\gamma_\sigma} \geq 1\}} \T[\psi]\big(-\Delta \, e_4, (-df^+)^\sharp\big)  \, \volsp \to 0$ for $\ell \to \infty$. Hence, one can consider the energy estimate in the region $\{f^- \leq u_0\} \cap \{w'_\ell \geq f^+ \geq v_k\} \cap \{f_{\gamma_\sigma} \geq 1\}$ and let $\ell$ tend to infinity. Moreover, a positive boundary term is dropped on the right hand side.}{, where $v_k \in \R$ is the sequence as given by Proposition \ref{PropLowBoundGamma}}:
\begin{equation}
\label{EnergyEste4}
\begin{split}
&\int\limits_{\Sigma^-_{u_0} \cap \{f^+ \geq v_k\}} \T[\psi]\big(L, (-df^-)^\sharp\big) \, \volsn \\
\geq &\int\limits_{\gamma_\sigma \cap \{f^+ \geq v_k\}} \T[\psi]\big(L, (-df_{\gamma_\sigma})^\sharp\big) \, \volg - \int\limits_{\substack{\{f^- \leq u_0\} \cap \{f^+ \geq v_k\} \\ \cap \{f_{\gamma_\sigma} \geq 1\}}} \T[\psi]_{\mu \nu} \pi(L)^{\mu \nu} \, \vol \;.
\end{split}
\end{equation}
We will show that one can choose $\sigma$ (and hence the auxiliary hypersurface $\gamma_\sigma$) such that the bulk term {(i.e., the second term)} on the right hand side decays faster than the first term on the right hand side.

It easily follows from \eqref{DefL} that
\begin{equation*}
\big|\T[\psi]_{\mu \nu} \pi(L)^{\mu \nu} \big| \lesssim \psi_1^2 + \psi_2^2 + \psi_3^2 + (\Delta \psi_4)^2 \;.
\end{equation*}
Moreover note that \eqref{RCH} implies $(-\Delta)^\alpha \lesssim e^{\alpha \kappa_-(v_+ + v_-)}$ for regions where $r$ is bounded away from $r_+$, and hence
\begin{equation*}
\sup\limits_{\{f_{\gamma_\sigma} \geq 1\} \cap \{f^+ \geq v_k\}} (- \Delta)^\alpha \lesssim \sup\limits_{\{f_{\gamma_\sigma} \geq 1\} \cap \{f^+ \geq v_k\}} e^{\alpha \kappa_- \sigma \log v_+} = \sup\limits_{\{f_{\gamma_\sigma} \geq 1\} \cap \{f^+ \geq v_k\}} v_+^{\sigma \kappa_- \alpha} \lesssim {v_k}^{\sigma \kappa_- \alpha} \;.
\end{equation*}
{In order to obtain \eqref{PolyLowBound}, it suffices to take $k$ sufficiently large and assume without loss of generality that $\{f^- \leq u_0\} \cap \{f^+ \geq v_k\} \cap \{f_{\gamma_\sigma} \geq 1\}\subset\{r\leq r_{blue}\}$.} Proposition \ref{PropImprovedILEDAngular} thus implies {for any}\footnote{with the implicit constant in $\lesssim$ depending on $\alpha$.} $\alpha\in {[}0,1)${, the following holds:} 
\begin{equation}\label{EnergyEste4.1}
\begin{split}
&\int\limits_{\substack{\{f^- \leq u_0\} \cap \{f^+ \geq v_k\} \\ \cap \{f_{\gamma_\sigma} \geq 1\}}} \big|\T[\psi]_{\mu \nu} \pi(L)^{\mu \nu} \big|\, \vol \\
\lesssim &{\Big(}\sup\limits_{\{f_{\gamma_\sigma} \geq 1\} \cap \{f^+ \geq v_k\}} (- \Delta)^\alpha{\Big)} \cdot \int\limits_{{\{ r\leq \rblue\} \cap \{f^- \leq u_0\}}} \frac{1}{(-\Delta)^\alpha} \Big( \Delta^2 \psi_4^2 + \psi_3^2 +  \psi_1^2 + \psi_2^2 \Big)\, \vol \lesssim {v_k}^{\sigma \kappa_- \alpha} \;.
\end{split}
\end{equation}
{On the other hand, by Proposition \ref{PropLowBoundGamma}, and using the fact that $\f{1}{\rho^2}$ is bounded uniformly above and below on $\mathcal M$, we can bound the first term on the right hand side of \eqref{EnergyEste4} as follows, with some $C> 0$:}
\begin{equation}\label{EnergyEste4.2}
\begin{split}
{\int\limits_{\gamma_\sigma \cap \{f^+ \geq v_k\}} \T[\psi]\big(L, (-df_{\gamma_\sigma})^\sharp\big) \, \volg\geq C \cdot v_k^{-(q+\delta)}.}
\end{split}
\end{equation}
{Combining \eqref{EnergyEste4}, \eqref{EnergyEste4.1} and \eqref{EnergyEste4.2},} we obtain
\begin{equation}
\label{FinalEst}
\int\limits_{\Sigma^-_{u_0} \cap \{f^+ \geq v_k\}} \T[\psi]\big(L, (-df^-)^\sharp\big) \, \volsn \geq C \cdot v_k^{-(q+\delta)}   - C' \cdot v_k^{\sigma \kappa_- \alpha}\;,
\end{equation}
{for some}\footnote{Notice that the constants $C$ and $C'$ here are allowed to depend on $\sigma$ and $\alpha$, but importantly, they are independent of $k$.} {$C,C'> 0$.} Without loss of generality we can assume that $\alpha \in (\frac{1}{2}, 1)$. We finally fix $\sigma > 0$ such that $\frac{1}{2} |\kappa_-| \sigma >  q + \delta$.  Note that this choice is independent of $\alpha$. Hence, \eqref{FinalEst} shows \eqref{PolyLowBound} and thus finishes the proof of Theorem \ref{MainThm}.

\bibliographystyle{acm}
\bibliography{Bibly}

\end{document}